\documentclass[a4paper,UKenglish,cleveref, autoref, thm-restate]{lipics-v2021}
\hideLIPIcs
\nolinenumbers
\usepackage{algorithm}
\usepackage{algpseudocodex}

\usepackage{epsfig}
\usepackage{epstopdf}
\usepackage{graphicx}
\DeclareGraphicsExtensions{.pdf,.png,.jpg,.eps}
\usepackage{latexsym}
\usepackage{amsfonts}
\usepackage{xcolor}
\usepackage[noadjust]{cite}
\usepackage{url}
\newcommand\ceil[1]{\left\lceil #1 \right\rceil}
\newcommand\floor[1]{\left\lfloor #1 \right\rfloor}

\Crefname{equation}{Equation}{Equations}
\crefname{equation}{Eq.}{Eqs.}

\def\problembox#1{%
    \vspace{2mm}%
    \noindent\fbox{%
    \begin{minipage}{.985\textwidth}%
        #1
    \end{minipage}%
    }%
    \vspace{2mm}%
}

\title{Indexing Range Maximum-Sum Segment Queries with Offsets} %
  \author{Seungbum Jo}{Chungnam National University, South Korea}{sbjo@cnu.ac.kr}{https://orcid.org/0000-0002-8644-3691}{}

 \author{Dominik K{\"{o}}ppl}{University of Yamanashi, Japan}{dkppl@yamanashi.ac.jp}{https://orcid.org/0000-0002-8721-4444}{This work was supported by JSPS KAKENHI Grant Number 25K21150.}

\authorrunning{S. Jo and D. K{\"{o}}ppl} %

\Copyright{Seungbum Jo and Dominik K{\"{o}}ppl} %

\ccsdesc[500]{Theory of computation~Data structures design and analysis} %

\keywords{maximum segment sum, data structure, range query} %

\relatedversion{} 

\funding{Seungbum Jo was supported by the National Research Foundation of Korea(NRF) grant funded by the Korea government(MSIT) (RS-202523963814)}%

\acknowledgements{%
A preliminary version containing some of these results has appeared in the proceedings of the 20th Scandinavian Symposium on Algorithm Theory (SWAT 2026)~\cite{DBLP:conf/swat/JoK26}.
The present article adds the time-space trade-off of \cref{sec:tradeoff}, 
the enumeration result of \cref{cor:range_omss_multiple}, the tight binary-alphabet bound of \cref{thm:bounded_alphabet}, and refined lower bounds. 
It also replaces the RLE reduction of the conference version by the weighted construction of \cref{sec:rle}. 
The earlier reduction used the unweighted run-value sequence and did not account for run lengths; 
\Cref{ex:rle} demonstrates why that reduction does not hold in general.
}

\begin{document}

\maketitle

\begin{abstract}
Given an array of $n$ real numbers, the maximum segment sum (MSS) problem is to find a contiguous subarray that has the largest sum. 
While the MSS problem can be solved optimally with Kadane's algorithm in $O(n)$ time, 
the study of its indexing version spawned new extensions such as
(a) retrieving the MSS after subtracting a query offset parameter for all array entries or
(b) retrieving the MSS for arbitrary query ranges.
We here study the combination of both problems (a) and (b), which requires retrieving the MSS for arbitrary query ranges after subtracting a query offset parameter for all array entries.
For that, we present an index whose query time is only slower than the best known for (a) by a factor of $O(\log n)$.
In detail, our index uses $O(n \log n)$ space, supports queries in $O(\log^2 n)$ time, and can be constructed in $O(n \log^3 n)$ time.
More generally, for every integer $d$ with $1\le d\le\ceil{\log_2 n}$, we give an $O(dn)$-space index with $O(dn^{1/d}\log n)$ query time; in particular, for every fixed $\varepsilon>0$, we obtain linear space and $O(n^\varepsilon\log n)$ query time.
As side results, we obtain the same time--space trade-off in terms of the number of runs of a run-length encoded input, deduce a solution for (a) that works in run-length compressed space and time, and prove a tight $\Theta(n^{2/3})$ bound on the number of non-compatible offsets for binary arrays.
Finally we give supportive lower bounds for our query problem, showing that there is only a polylogarithmic gap of improvement left.
\end{abstract}

\section{Introduction}\label{sec:intro}

Given an array $X$ of real numbers, the maximum-sum segment (MSS) problem
is to find an interval $[i,j]$ within $X$ that maximizes the sum of its elements, 
which we call an MSS of the sequence.
This problem found applications, among others, in image processing~\cite{fukuda01data}, 
pattern recognition~\cite{perumalla95parallel}, and biological sequence analysis~\cite{wang03segid}. 

The textbook algorithm for solving the MSS problem is Kadane's algorithm, cf.~\cite[4.4 Maximum Subarray Sum]{durr20competitive}.
Since the various applications often require adaptations to specific problem variations,
it is not surprising that the MSS problem has been studied in various settings, cf.~\cite{chao14maximumsum} for an encyclopedic survey of MSS and its variations.
Chen and Chao~\cite{chen2007range} considered an indexing problem, where it is allowed to preprocess $X$ and then answer queries on intervals of~$X$.
For that task, they devised a linear-time constructible data structure that supports constant-time queries of an MSS for any query interval. 
They also showed that this problem is linearly equivalent to the range-minimum query (RMQ) problem, meaning that
a data structure with construction time~$t_c(n)$ and query time~$t_q(n)$ solving the MSS problem can be transformed into a data structure solving RMQs with the time complexities $t_c(cn)$ and $t_q(cn)$ for some constant $c$, respectively.
Gawrychowski and Nicholson~\cite{gawrychowski2015encodings} gave a $\Theta(n)$-bit data structure that also achieves constant query time when only the answer interval, rather than its sum, is returned.
They also showed that $1.89113 n-\Theta(\log n)$ bits are necessary for supporting this query type.

A maximal local MSS is a local MSS that is not a segment of
any local MSS other than it, where a local MSS is a segment that has itself as its only MSS\@.
Ruzzo and Tompa~\cite{ruzzo99linear} gave an algorithm that finds all distinct maximal local MSSs in $O(n)$ time, and Sakai~\cite{sakai18maximal} designed a linear-time constructible data structure supporting
constant-time queries of the maximal local MSS of any given segment that contains any given position. 

Bengtsson and Chen~\cite{bengtsson07computing} showed that an arbitrarily given number of non-overlapping segments that maximize the sum of all their elements can be found in linear time. 
Yu et al.~\cite{yu21finding} considered the MSS problem where each element of the input sequence is
uncertain within a specific interval and proposed a linear-time algorithm for this problem.
The density of a segment is defined as the mean of all elements in the segment. 
Cheng et al.~\cite{cheng09optimal} considered the MSS problem with the condition that the density of the segment to be found is between given lower and upper bounds.
Bae~\cite{dissBaeSungEun} and Liu and Chao~\cite{liu08algorithms} considered returning $k$ non-overlapping segments that maximize the sum of all their elements.

Another variant is the maximum-density segment query~\cite{lin02efficient,goldwasser02fast,kim03linear} that asks to find a segment of length between a given lower and upper bound that maximizes the density. 
Chung and Lu~\cite{chung04optimal} showed that this query is solvable in linear time.

Another natural extension of the one-dimensional maximum segment sum problem is the maximum subarray problem, where the input is a two-dimensional array with length $n$ in each dimension, and the goal is to find a contiguous rectangular subarray that has the largest sum. 
This problem has been addressed by Bentley~\cite{bentley84persective}, who gave an $O(n^3)$-time algorithm for the problem.
Subsequently, the time complexity was improved to $O(n^3 \sqrt{\log \log n / \log n})$ by Takaoka~\cite{takaoka02efficient} and further to $O(n^3 \log \log n / \log n)$ by Tamaki and Tokuyama~\cite{tamaki00algorithms}. They showed that the problem can be reduced to $(\min,+)$-matrix multiplication. Using this reduction, Chan and Williams~\cite{DBLP:journals/talg/ChanW21} obtained the current fastest algorithm, running in $n^3/2^{\Omega((\log n)^{1/2})}$ time; their result also applies to higher-dimensional arrays.
In addition, Fukuda et al. \cite{fukuda01data} introduced the maximum convex sum problem, which is a variant of the maximum subarray problem where the retrieved subarray must be convex (i.e., the intersection of the subarray with any row or column is connected).
Here, Bae and Takaoka~\cite{bae17efficient} gave an $O(n)$-time parallel algorithm solving the maximum convex sum problem for $n^2$ processors.

\subparagraph*{Our Contribution}
Among the aforementioned variations, we focus on two indexing versions of the MSS problem, where we are allowed to preprocess the input array and then answer queries 
(a) on intervals of the input array, or (b) after subtracting a query offset parameter from all array entries.
In particular, we are interested in the combinations of both problems (a) and (b), where we want to answer queries on intervals of the input array after subtracting a query offset parameter from all array entries.
We present an indexing data structure solving this combined problem in \cref{thm:range_omss}.
The main tool is to answer this problem for the special case that the output interval has to be a prefix or a suffix of the input array, and to reuse known results for the problem (b) to solve the general case on $O(\log n)$ segments. 
We further generalize the canonical-range decomposition to a multiary hierarchy and obtain an $O(\tau n)$-space versus $O(\tau n^{1/\tau}\log n)$-query-time trade-off in \cref{thm:range_omss_tradeoff}; this yields linear space and $O(n^\varepsilon\log n)$ query time for every fixed $\varepsilon>0$.
The results are summarized in \Cref{tab:summary}.

Subsequently, we focus on run-length compressed input. 
For an input with $z$ runs, \cref{cor:range_rle_tradeoff} gives the same $O(\tau z)$-space versus $O(\tau z^{1/\tau}\log z)$-query-time trade-off.
We also obtain an $O(z)$-space OMSS structure in \cref{lem:sakai_rle}. 
Integer offsets can be handled with integer predecessor structures as described in \cref{remark:range_rle}.
For binary arrays, we show that the maximum number of pairwise non-compatible offsets is $\Theta(n^{2/3})$, 
which gives a tight bound on the size of the explicit OMSS breakpoint list (\cref{thm:bounded_alphabet}).

Finally, the predecessor reduction of \cref{thm:omss_predtradeoff} transfers predecessor lower bounds to OMSS and R-OMSS.
For arbitrary real keys, it gives an $\Omega(\log n)$ query lower bound in the comparison model.
For integer keys, it transfers the parameter-dependent cell-probe trade-off for predecessor search.  
In the encoding model, suffix R-OMSS and R-OMSS require $\Omega(n\log n)$ bits on permutations and $\Omega(n\log\sigma)$ bits over an alphabet of size $\sigma$ 
(\cref{thm:range_omss_lb,cor:range_omss_boundedalphabet}).  

\begin{table}
	\centering
		\begin{tabular}{c | c | c | c}
			\hline
			Query type & Space & Query time & Reference \\
			\hline
            MSS & $O(1)$ & $O(n)$ & Kadane's algorithm \\
            OMSS & $O(n)$ & $O(\log n)$  & Sakai~\cite{sakai2024data} \\
            R-OMSS without offset & $O(n)$ & $O(1)$ & Chen et al.~\cite{chen2007range} \\
            Suffix R-OMSS & $O(n)$ &$O(\log^2 n)$& \Cref{thm:range_suffix_omss} \\
            R-OMSS & $O(n \log n)$ & $O(\log^2 n)$ & \Cref{thm:range_omss} \\
            R-OMSS & $O(dn)$ & $O(dn^{1/d}\log n)$ & \Cref{thm:range_omss_tradeoff} \\
			\hline
	\end{tabular}
    \caption{Summary of best known complexities for the MSS query and its variants.}
	\label{tab:summary}
\end{table}

\section{Preliminaries and Query Definitions}\label{sec:prelim}

\begin{figure}[tp]
	\begin{center}
		\includegraphics[]{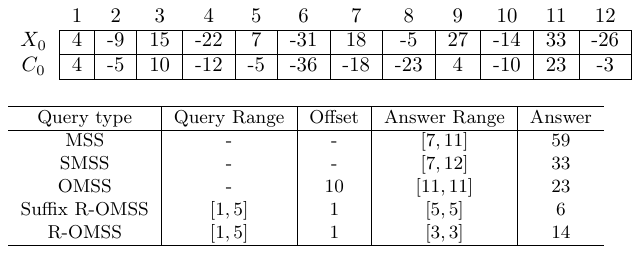}
	\end{center}
	\caption{Examples for the MSS query and its variants.
	}
    \label{fig:example_queries}
\end{figure}

Our computation model is the RAM model.
Let $\mathbb R$ denote the set of all real numbers.
For integers $i\le j$, we write $[i..j]:=\{i,i+1,\ldots,j\}$ and $|[i..j]|:=j-i+1$.
Closed intervals of real numbers retain the standard notation $[x,y]$.
We assume that all real numbers can be stored in a single word and that basic arithmetic operations on these numbers can be performed in constant time.
(We also assume that there are no errors in computational precision due to rounding or overflow.)

Let $X_0[1..n]$ be the input array of real numbers.
$X_0$ is subject to the following queries, for which we present examples in \cref{fig:example_queries}.
While the queries are initially stated to return the value of a specific subarray of $X_0$, 
it is beneficial to also return the interval of the subarray that gives the answer, which we call the \textit{answer range}.
Since we here mostly focus on data structures using $\Omega(n)$ space,
it suffices to focus on finding answer ranges --- we will shortly later see
that the actual value can be obtained from an answer range with an $O(n)$-space array of prefix sums.

For an offset $\alpha\in\mathbb R$ and a nonempty range $I := [\ell..r]$, define the \emph{score}
\(
    S_\alpha(I) := 
    S_\alpha(\ell,r)
    :=
    \sum_{i=\ell}^{r}(X_0[i]-\alpha).
\)
Whenever several candidate ranges attain the same maximum value, we prefer one of minimum
length.
In other words, an answer range is selected by maximizing
\begin{equation}\label{eq:shortest_answer}
    \bigl(
        S_\alpha(\ell,r),
        -(r-\ell+1)
    \bigr)
\end{equation}
lexicographically.
If several maximum ranges have the same minimum length, any one of them may be returned.
We call every such range a \emph{shortest answer range}.
The same tie-break can be applied to restricted versions such as the MSS problem by setting $\alpha=0$.
For prefix and suffix queries, the same rule is applied after restricting the candidate
ranges to prefixes or suffixes, respectively.
In particular, a shortest optimum prefix or suffix is unique because one endpoint is fixed.

The minimum-length condition has the useful consequence that a shortest answer range
contains neither a zero-sum proper prefix nor a zero-sum proper suffix:
deleting such a prefix or suffix would preserve the value and produce a shorter range.

We state our base problem as follows.

\problembox{%
  \textsc{Maximum-Segment-Sum (MSS)}\\
  \textbf{Input:} array of real numbers $X_0[1..n]$.\\
  \textbf{Output:} value and a shortest answer range $[\ell..r]$ achieving $\max_{[\ell..r] \subseteq [1..n]} \sum_{i=\ell}^r X_0[i]$.
}

We also study the special case that the answer range of the MSS problem is constrained to be a suffix of $X_0$.

\problembox{%
  \textsc{Suffix Maximum-Segment-Sum (SMSS)}\\
  \textbf{Input:} array of real numbers $X_0[1..n]$ \\
  \textbf{Output:} value and the starting position $\ell$ of a shortest suffix answer achieving $\max_{\ell \in [1..n]} \sum_{i=\ell}^n X_0[i]$.
}

Sakai~\cite{sakai2024data} introduced the following indexing problem that generalizes the MSS problem by allowing a query offset parameter $\alpha$ to be subtracted from all array entries before answering the query.

\problembox{%
  \textsc{Offset Maximum-Segment-Sum (OMSS)}\\
  \textbf{Construction Input:} array of real numbers $X_0[1..n]$ \\
  \textbf{Query Input:} real-value offset $\alpha \ge 0$.\\
  \textbf{Query Output:} value and a shortest answer range $[\ell..r]$ achieving $\max_{[\ell..r] \subseteq [1..n]} \sum_{i=\ell}^r (X_0[i] - \alpha)$.
}

Defining the array $X_\alpha[1..n]$ with $X_\alpha[i] := X_0[i] - \alpha$,
the OMSS query on $X_0$ with offset $\alpha$ is the MSS query on $X_\alpha$.
Hence, OMSS is a generalization of MSS, since the answer to an MSS query on $X_0$ is the same as the answer to an OMSS query on $X_{\alpha}$ with offset $\alpha = 0$. 
Sakai~\cite{sakai2024data} presented an approach answering an OMSS query as follows.

\begin{lemma}[\cite{sakai2024data}]\label{lemma:sakaiDS}
    Given an array $X_0[1..n]$ of $n$ real numbers, 
		we can compute a list $L$ of at most $n$ (offset, interval)-pairs 
    in $O(n \log^2 n)$ time using $O(n)$ space such that
    the solution to an OMSS query for a given offset $\alpha$ is the interval of the predecessor offset of $\alpha$ in $L$.
    By sorting $L$ descendingly by the offsets, 
    a binary search for the predecessor of $\alpha$ in $L$ can return an answer range of an OMSS query in $O(\log n)$ time.
\end{lemma}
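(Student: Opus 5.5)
The plan is to view the OMSS answer as a function of $\alpha$ and show that it is piecewise constant with at most $n$ pieces, with breakpoints computable by a divide-and-conquer scheme. For a fixed range $[\ell..r]$ the score is $S_\alpha(\ell,r)=S_0(\ell,r)-\alpha(r-\ell+1)$, a line in $\alpha$ with slope $-(r-\ell+1)$. Hence the optimal value $\max_{[\ell..r]}S_\alpha(\ell,r)$ is the upper envelope of $O(n^2)$ lines with only $n$ distinct slopes $-1,\dots,-n$. This envelope is convex and piecewise linear, and along it the slope only decreases in magnitude as $\alpha$ grows. With the shortest-length tie-break, the chosen length is therefore non-increasing in $\alpha$. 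Each envelope piece corresponds to a distinct length, so there are at most $n$ pieces. Within each piece a single interval is optimal. On a piece of fixed length $k$, the optimal interval of length $k$ does not depend on $\alpha$, because $\alpha k$ is a common shift.

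So $L$ consists of pairs $(\alpha_j,[\ell_j..r_j])$, where $\alpha_j$ is the offset at which the envelope switches to length $r_j-\ell_j+1$. The query then reduces to finding the predecessor of $\alpha$ in $L$ sorted by offset, which a binary search does in $O(\log n)$ time. Correctness at the breakpoints follows from the tie-break: exactly at a breakpoint both lines tie and the shorter range wins, which matches taking the entry for the larger offset.

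For the construction I would not enumerate all lines. Instead I would use a divide-and-conquer over $\alpha$, in the style of parametric search. For a given $\alpha$, Kadane's algorithm on $X_\alpha$ returns the shortest optimum and hence its length $k(\alpha)$ in $O(n)$ time. Knowing the envelope lines at two offsets $\alpha_1<\alpha_2$, I would intersect them at $\alpha^*$ and run Kadane there. If the result coincides with one of the two lines, no piece lies strictly between them. Otherwise a new piece has been found, and I recurse on both sides. This costs $O(n)$ per piece and $O(n^2)$ in total, which is too slow.

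The main obstacle is reaching $O(n\log^2 n)$. I would first try a divide-and-conquer over the array. For the two halves, compute recursively their envelopes together with the prefix envelope and the suffix envelope: the best prefix of each length class, which is again an upper envelope of $O(n)$ lines with integer slopes. Segments crossing the midpoint score (best suffix of the left half) $+$ (best prefix of the right half). The sum of two convex piecewise-linear functions is computed by merging their breakpoints in linear time. The envelope of the whole array is then the pointwise maximum of three convex functions: left, right, and crossing. Each has $O(n)$ pieces, and their upper envelope can be computed by a merge in $O(n\log n)$ time, for example by sorting the breakpoints and taking hull maxima. The prefix and suffix envelopes of the combined array are handled similarly. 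This gives $T(n)=2T(n/2)+O(n\log n)=O(n\log^2 n)$.

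Space stays $O(n)$ if only the current level's envelopes are kept, and the recursion is arranged depth-first with linear-size intermediate lists. The delicate point I would check carefully is that the shortest-length tie-break is preserved through the merges, namely by breaking ties between equal lines in favour of the steeper-offset side, that is, the shorter length.
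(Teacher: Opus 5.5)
Your proof is correct, but it takes a genuinely different route from the one the paper relies on.

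\textbf{The paper's route.} The paper takes this lemma from Sakai and retraces the construction, in weighted form, in the appendix. It rests on a splitting lemma: an interval that stops being boundary-positive at offset $\widehat\alpha(i,j)$ can be cut at a position $\widehat\kappa(i,j)$ where a prefix or suffix reaches score zero, without losing the optimum. A lower-hull forest over the prefix-sum points answers such split queries in $O(\log^2 n)$ time. Recursive splitting yields a candidate tree, and a charging argument shows it has only $O(n)$ intervals. Candidates are grouped by length, and the catalog is the upper hull of the points $(|I|,S_0(I))$. The $O(n\log^2 n)$ bound comes from $O(n)$ split queries at $O(\log^2 n)$ each.

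\textbf{Your route.} You instead lift the classical divide-and-conquer MSS recurrence to functions of $\alpha$. This is the four-field summary of \cref{lem:mss_summary_merge}, with each field replaced by an upper hull of (length, sum) points. The crossing field is the Minkowski sum of the left suffix hull and the right prefix hull. This is more elementary and self-contained, and it is faster than you state. All hull vertex lists are already sorted by length, so the union of the three hulls is a linear merge followed by one monotone-chain pass. The Minkowski sum is a linear slope merge. Hence $T(n)=2T(n/2)+O(n)=O(n\log n)$ rather than $O(n\log^2 n)$, which bears on the construction-time question the paper raises in its open problems.

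\textbf{What Sakai's route buys.} It gives a structural $O(n)$ bound on the candidate set that does not depend on the slopes being the integers $1,\ldots,|A|$. Your bound of $|A|$ on every hull size uses exactly that fact. Sakai's charging argument also gives $O(z)$ candidates in the weighted run-length setting of \cref{sec:rle}, where expanded lengths are not confined to $[1..z]$. There, your recurrence alone would only bound the best-segment hull by $O(z\log z)$ vertices.
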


Four observations are in order regarding \cref{lemma:sakaiDS}.

\begin{itemize}
    \item First, while the list of \cref{lemma:sakaiDS} does not explicitly return the MSS for a query offset $\alpha$,
    we can compute the sum $\sum_{i=a}^b (X_0[i] - \alpha)$ for an entry $(\alpha, [a..b])$ in $L$ in constant time
    if we have the array of prefix sums $C_0[0..n]$ of $X_0$, where $C_0[0] := 0$ and $C_0[i] := \sum_{k=1}^i X_0[k]$ for each $i \in [1..n]$.
    To see that, define the prefix sum of $X_\alpha$ for each $i\in[1..n]$ as
    \(
        C_\alpha[i]
        :=
        \sum_{k=1}^{i}X_\alpha[k]
        =
        C_0[i]-\alpha i.
    \)
    Then, for $1\le i\le j\le n$,
    \(
        S_\alpha(i,j)
        =
        C_0[j]-C_0[i-1]-\alpha(j-i+1),
    \)
    and hence the value of an answer range is obtained in constant time.
    
    \item Second, the upper bound $|L| \le n$ is tight since 
    $L$ can reach the size $n$ for the input 
    $X_0 = [1, 2, 3, \dots, n]$, where the OMSS query for $\alpha = k+\epsilon$ gives the range $[k+1..n]$ for every $\epsilon \in (0,1)$ and $k \in [0..n-1]$.
    
    \item Third, there can be several optimum ranges for one offset.
    The list of \cref{lemma:sakaiDS} stores one of minimum length.
    By \cref{eq:shortest_answer}, the stored range contains neither a zero-sum proper prefix nor a zero-sum proper suffix.
    However, Sakai's statement does not prescribe which optimum range is returned exactly at a breakpoint, which we here enforce by a modification that has no impact on the complexity of \cref{lemma:sakaiDS}.
\item 
Finally, Sakai allows returning the empty interval whenever no nonempty interval has positive score. 
In that case every entry of $X_\alpha$ is nonpositive, 
and a shortest nonempty optimum is the singleton $[i..i]$, where
$ i\in\arg\max_{j\in[1..n]}X_0[j]. $
We therefore replace the empty interval by this singleton.
\end{itemize}

For a range $I$, its \emph{score}
\( f_I(\alpha) := S_\alpha(I) \)
is an affine function.
Then the answer to the OMSS query with offset $\alpha$
is \( \max_I f_I(\alpha). \)
The graph of $\alpha \to \max_I f_I(\alpha)$, obtained by taking the pointwise maximum of all interval lines, is called the \emph{upper envelope}.
We further say that $f_I$ is \emph{active} at $\alpha$ if $f_I(\alpha) = \max_J f_J(\alpha)$.

Suppose that the lines of two scores meet at an offset $\beta$ in the upper envelope,
with $f_I$ active immediately to the left of $\beta$ and $f_J$ active immediately to its right.
Then $|J|<|I|$ (otherwise, the negative slope of $f_J$ would be at least that of $f_I$, and $f_I$ would be active immediately to the right of $\beta$ as well).
Also, $J$ is a shortest answer at $\beta$: if another optimum range $K$ at $\beta$
satisfied $|K|<|J|$, then $f_K$ would dominate $f_J$ immediately to the right of $\beta$.

Accordingly, we can associate every breakpoint with the envelope range active immediately to its right.
Thus, an entry $(\beta,I)$ represents a shortest answer from $\beta$ up to, but excluding, the next breakpoint.
If several ranges define the same affine function, 
they have the same value and length for every offset, and any one of them may be retained.
We call that the \emph{right-active convention}, and impose the right-active convention by a linear scan once we have constructed the list of \cref{lemma:sakaiDS}. 
Doing so does not change the time and space bounds of \cref{lemma:sakaiDS}.

We formalize the list $L$ of \cref{lemma:sakaiDS} to reuse the same concepts afterwards, which we term as a \emph{catalog}.
For a fixed interval $I$ and a family $\mathcal F(I)$ of candidate subranges of $I$, 
a \emph{catalog} is a sequence
\(
    \mathcal C_{\mathcal F}(I)
    =
    \bigl(
        (\beta_0,J_0),\ldots,(\beta_k,J_k)
    \bigr),
\)
where
\(
    0=\beta_0<\cdots<\beta_k,
\)
such that $J_j \in \mathcal F(I)$ is the selected shortest optimum for every offset
\(
    \alpha\in[\beta_j,\beta_{j+1}),
\)
with $\beta_{k+1} := +\infty$.
As aforementioned, at a breakpoint, we select the range immediately to its right.
Consequently, a predecessor search among the breakpoints returns the answer for any given offset $\alpha$.
A \emph{predecessor} of $\alpha$ is the largest offset in $L$ that is smaller than or equal to $\alpha$.
With the right-active convention, predecessor search returns a shortest answer also when $\alpha$ is exactly a breakpoint.

Now we present our main problem of interest, which combines the OMSS and MSS queries on subarrays of $X_0$. 

\problembox{%
  \textsc{Range Offset Maximum-Segment-Sum (R-OMSS)}\\
  \textbf{Construction Input:} array of real numbers $X_0[1..n]$ \\
  \textbf{Query Input:} real-value offset $\alpha \ge 0$ and a range $[a..b] \subseteq [1..n]$.\\
  \textbf{Query Output:} value and a shortest answer range $[\ell..r]$ achieving $\max_{[\ell..r] \subseteq [a..b]} \sum_{i=\ell}^r (X_0[i] - \alpha)$.
}
Using the notation of $S_\alpha(i,j)$, the answer to an R-OMSS query on $X_0$ with query range $[a..b]$ and offset $\alpha$ can be computed as $\max_{[\ell..r] \subseteq [a..b]} S_\alpha(\ell, r)$.
The R-OMSS is a generalization of the OMSS problem (with the query range $[1..n]$) and the range MSS problem introduced by Chen et al.~\cite{chen2007range}, 
which asks for the MSS of a query range $[a..b]$ without the offset parameter (i.e., $\alpha = 0$).
Like for MSS and its restriction SMSS, we can also define the suffix and prefix versions of R-OMSS queries, 
where the answer range is constrained to be a suffix or a prefix of the query range, respectively.

\problembox{%
  \textsc{Suffix Range Offset Maximum-Segment-Sum (Suffix R-OMSS)}\\
  \textbf{Construction Input:} array of real numbers $X_0[1..n]$ \\
  \textbf{Query Input:} real-value offset $\alpha \ge 0$ and a range $[a..b] \subseteq [1..n]$.\\
  \textbf{Query Output:} value and the starting position $\ell$ of a shortest suffix answer achieving $\max_{\ell \in [a..b]} \sum_{i=\ell}^b (X_0[i] - \alpha)$.
}

A symmetric definition can be given for \emph{prefix R-OMSS} queries, where the answer range is constrained to be a prefix of the query range.

\section{Data structure for R-OMSS Queries}\label{sec:range_omss}

Our main goal of this paper is to devise data structures for answering R-OMSS queries on $X_0$, 
where a non-negative offset $\alpha \ge 0$ and a query range $[a..b]$ are given at query time. 
There are two naive approaches prioritizing either (1) space or (2) time.
\begin{enumerate}[(1)]
    \item We can apply Kadane's algorithm on $X_{\alpha}[a..b]$. 
    Since any value in $X_{\alpha}$ can be computed in $O(1)$ time from $X_0$, 
    Kadane's algorithm answers R-OMSS in $O(n)$ time using $O(1)$ space additional to $X_0$.
\item Alternatively, we can maintain the data structure of \cref{lemma:sakaiDS} for all $\Theta(n^2)$ possible query ranges. This results in an $O(n^3)$-space data structure with $O(\log n)$ query time.
\end{enumerate}

In what follows, we present two non-trivial data structures for R-OMSS queries whose complexities lie between those of solutions (1) and (2).
We start with an $O(n)$-space  data structure answering suffix R-OMSS queries on $X_0$ in $O(\log^2 n)$ time.
By symmetry, the same data structure results can be obtained for prefix R-OMSS queries.
Next, in \cref{sec:general}, we consider a data structure for (general) R-OMSS queries.
By combining the data structure of \cref{lemma:sakaiDS} with the suffix R-OMSS data structures, we obtain an $O(n \log n)$-space data structure that can answer an R-OMSS query in $O(\log^2 n)$ time.

Finally, note that all our data structure results can also handle the case $\alpha < 0$ within the same asymptotic space bounds, by maintaining the negated input array and answering the corresponding minimum segment-sum query with offset $-\alpha$. All the data structures for the minimum segment sum variants are analogous to those for the maximum segment sum variants.

\subsection{Data Structure for Suffix R-OMSS Queries}\label{sec:suffix}

\begin{figure}[tp]
	\begin{center}
            \includegraphics[scale=0.8]{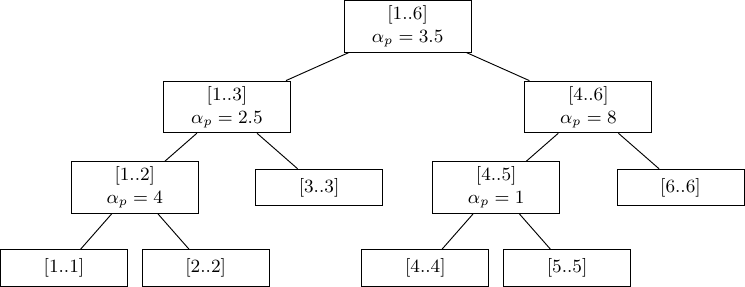}
	\end{center}
    \caption{SBST on $X_0 = [4, 1, 6, 1, 8, 0]$.
    Each node is labeled with its corresponding canonical range.
    Each internal node is additionally labeled with its threshold offset.
}
    \label{fig:sbst}
\end{figure}

We start with devising a data structure for range suffix OMSS queries.
The main idea is to maintain the canonical ranges (defined below) of $X_0$ using a \textit{suffix binary search tree (SBST)}. 
Each internal node in the tree stores a \textit{threshold offset}.
By comparing $\alpha$ with the offsets stored along a root-to-leaf path in the tree, as in a standard binary search tree, we can track the endpoint of the range corresponding to the query answer when the query range is canonical.
To map this logic to arbitrary query intervals, we define a \emph{canonical cover} of a range~$[a..b]$ as a set of canonical ranges that are pairwise disjoint and whose union is $[a..b]$.
Finally, we show that we can partition the query range~$[a..b]$ into a canonical cover of size $O(\log n)$, and answer the query using a dynamic programming approach.

The canonical ranges of $X_0$ are the elements of a recursive binary partition of $[1..n]$ into $O(n)$ ranges, 
which can be represented by a binary tree. 
The \emph{canonical ranges} of $X_0$ are defined as follows: 
\begin{enumerate}[(1)]
    \item $[1..n]$ is a canonical range, and 
    \item for every canonical range $[g..h]$ with $g < h$, 
    both $[g..g+\floor{(h-g)/2}]$  
    and $[g+\floor{(h-g)/2}+1..h]$ are also canonical ranges. 
\end{enumerate}
Under this definition, there are $O(n)$ canonical ranges in total. 
We then define the SBST on $X_0[g..h]$ as follows:

\begin{itemize}
    \item The root of the SBST on $X_0[g..h]$ is the canonical range $[g..h]$.
    \item If $g < h$, then the SBSTs on $X_0[g..g+\floor{(h-g)/2}]$ and $X_0[g+\floor{(h-g)/2}+1..h]$ are the left and right child subtrees of the root, respectively. 
        Otherwise, for $g=h$, the SBST root itself is a leaf with canonical range $[g..g]$.
\end{itemize}
See \cref{fig:sbst} for an example.
(The reader familiar with segment trees may notice that the SBST is a specialization of a segment tree indexing all integers in $[1..n]$.)

Let $T$ denote the SBST on $X_0[1..n]$.
From the definition, each canonical range of $X_0$ is a node in $T$, and vice versa.
The leaves of $T$ are all canonical ranges of size $1$, i.e., $[i..i] = \{i\}$ for each $i \in [1..n]$.
By construction, $T$ is a binary tree with $O(n)$ nodes and height $O(\log n)$.
Finally, we augment each node $p$ of $T$ with a \textit{threshold offset} $\alpha_p$. 
Before defining the threshold offsets, we introduce the following key property for answering SMSS queries.

\begin{proposition}\label{prop:sbst}
For any non-negative offsets $\alpha<\beta$, the shortest answer range of the SMSS query
on $X_\beta$ is a subrange of the shortest answer range of the SMSS query on $X_\alpha$.
\end{proposition}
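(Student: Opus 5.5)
The plan is to compare suffix scores directly. Since both answer ranges are suffixes of $[1..n]$, being a subrange simply means that the start position $\ell_\beta$ of the shortest optimal suffix for $X_\beta$ is at least the start position $\ell_\alpha$ of the shortest optimal suffix for $X_\alpha$. For a start position $\ell\in[1..n]$, write the suffix score as
\[
  g_\gamma(\ell) := S_\gamma(\ell,n) = (C_0[n]-C_0[\ell-1]) - \gamma\,(n-\ell+1).
\]
By the tie-breaking rule of \cref{eq:shortest_answer}, $\ell_\gamma$ is the \emph{largest} maximizer of $g_\gamma$ over $[1..n]$, because shorter suffixes have larger starting positions.

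I would argue by contradiction and assume $\ell_\beta<\ell_\alpha$. By optimality of $\ell_\alpha$ at offset $\alpha$, we have $g_\alpha(\ell_\alpha)\ge g_\alpha(\ell_\beta)$. The key identity is
\[
  g_\beta(\ell)=g_\alpha(\ell)-(\beta-\alpha)(n-\ell+1).
\]
Hence
\[
  g_\beta(\ell_\alpha)-g_\beta(\ell_\beta)
  = \bigl(g_\alpha(\ell_\alpha)-g_\alpha(\ell_\beta)\bigr)+(\beta-\alpha)(\ell_\alpha-\ell_\beta).
\]
The first term is nonnegative. The second is strictly positive, because $\beta>\alpha$ and $\ell_\alpha>\ell_\beta$. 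So $g_\beta(\ell_\alpha)>g_\beta(\ell_\beta)$, which contradicts the optimality of $\ell_\beta$ at offset $\beta$. Therefore $\ell_\beta\ge\ell_\alpha$, and the suffix $[\ell_\beta..n]$ is contained in $[\ell_\alpha..n]$.

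The only delicate point is the tie-breaking. It does not actually matter here, because the inequality obtained is strict even when $\ell_\alpha$ merely ties with $\ell_\beta$ at offset $\alpha$. I would still note explicitly that the shortest optimum suffix is unique, as remarked after \cref{eq:shortest_answer}, so that "the" answer range is well defined. The argument uses nothing about $\alpha,\beta$ beyond $\alpha<\beta$, so nonnegativity is not needed. The same reasoning, or symmetry, gives the analogous statement for prefixes.
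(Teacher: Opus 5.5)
Your proof is correct and is essentially the paper's exchange argument: both compare the two start positions via the fact that going from offset $\alpha$ to $\beta$ lowers the score of the block between them by $(\beta-\alpha)$ times its length. The only difference is the direction of transfer. The paper gets a strict inequality at $\beta$ from the tie-breaking rule and pushes it back to $\alpha$, whereas you push the weak inequality at $\alpha$ forward and get strictness from $(\beta-\alpha)(\ell_\alpha-\ell_\beta)>0$; your version therefore needs no tie-breaking and in fact shows that every optimal suffix at $\beta$ starts no earlier than every optimal suffix at $\alpha$.
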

\begin{proof}
Let $[x..n]$ and $[x'..n]$ be the shortest answer ranges for offsets $\alpha$ and
$\beta$, respectively.
Suppose that $x'<x$.
The range $[x..n]$ is shorter than $[x'..n]$.
Since $[x'..n]$ is the shortest answer at offset $\beta$, while $[x..n]$ is shorter
than $[x'..n]$, we must have
\(
    S_\beta(x',n)>S_\beta(x,n)
\)
--- in case of equality, we would already favor the shorter range $[x..n]$ by \cref{eq:shortest_answer}.
It follows that $S_\beta(x',x-1)>0$, and hence
\[
    S_\alpha(x',x-1)
    =
    S_\beta(x',x-1)+(\beta-\alpha)(x-x')
    >0.
\]
Consequently,
$S_\alpha(x',n)>S_\alpha(x,n)$, contradicting the optimality of $[x..n]$.
Therefore $x'\ge x$.
\end{proof}

Let $[x_p..y_p]$ be the range corresponding to an internal node $p$, and set
\(
    z_p:=x_p+\floor{(y_p-x_p)/2}.
\)
For an offset $\alpha\ge0$, let $[x_\alpha..y_p]$ be the shortest SMSS answer on
$X_\alpha[x_p..y_p]$.
We define the threshold offset of $p$ by
\(
    \alpha_p
    :=
    \inf\{\alpha\ge0\mid x_\alpha\in[z_p+1..y_p]\}.
\)
The set in this definition is nonempty, since for sufficiently large $\alpha$ the shortest SMSS answer is the singleton $[y_p..y_p]$.
By \cref{prop:sbst}, $x_\alpha$ is non-decreasing as a function of $\alpha$.
At the breakpoint at which a suffix from the left child and a suffix from the right child have equal value, 
the latter suffix is shorter and is therefore the unique shortest SMSS answer.
Consequently,
\begin{equation}\label{eq:sbst_threshold_rule}
    x_\alpha\in[x_p..z_p]
    \Longleftrightarrow
    \alpha<\alpha_p,
\end{equation}
and $x_\alpha\in[z_p+1..y_p]$ if and only if $\alpha\ge\alpha_p$.
Leaves store no threshold.

The data structure for suffix R-OMSS queries is composed of 
\begin{enumerate}[(a)]
    \item the array $C_0$, and  \label{it:c0}
    \item the SBST $T$ on $X_0$. \label{it:sbst}
\end{enumerate}
Each node of the SBST stores its threshold offset and its canonical range.
(\ref{it:c0}) and (\ref{it:sbst}) together use $O(n)$ space in total.

We now describe how to answer a query using this data structure.
We first consider the case where the query range is a canonical range $[x_p..y_p]$ corresponding to a node $p$ in $T$.
In this case, it suffices to find the position $x$ such that $[x..y_p]$ is the answer range.
To find $x$, we start from the node $p$ and compare $\alpha$ with $\alpha_p$.
As in a standard binary search tree, we recursively perform the same comparison:
we proceed to the left child if $\alpha<\alpha_p$, and to the right child otherwise,
until we reach a leaf node corresponding to the canonical range $[x'..x']$.
Then $x' = x$ from the following proposition.

\begin{proposition}\label{prop:sbst2}
Let $p$ be an internal node with range $[x_p..y_p]$ and midpoint $z_p$.
If $\alpha<\alpha_p$, then the shortest SMSS answers on
$X_\alpha[x_p..y_p]$ and $X_\alpha[x_p..z_p]$ have the same starting position.
If $\alpha\ge\alpha_p$, then the shortest SMSS answers on
$X_\alpha[x_p..y_p]$ and $X_\alpha[z_p+1..y_p]$ have the same starting position.
\end{proposition}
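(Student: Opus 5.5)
We argue directly from the threshold rule \cref{eq:sbst_threshold_rule}, which locates the starting position $x_\alpha$ of the shortest SMSS answer on $X_\alpha[x_p..y_p]$ in one of the two halves. The basic observation is a decomposition of suffixes. Every suffix of $[x_p..y_p]$ that starts in $[z_p+1..y_p]$ is exactly a suffix of the right child's range $[z_p+1..y_p]$, with the same score and length. Every suffix starting at some $x\in[x_p..z_p]$ is a suffix $[x..z_p]$ of the left child's range concatenated with the fixed block $[z_p+1..y_p]$. Hence
\[
S_\alpha(x,y_p)=S_\alpha(x,z_p)+S_\alpha(z_p+1,y_p),
\]
and its length is $(z_p-x+1)+(y_p-z_p)$. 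The second summand and the second length term do not depend on $x$.

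\textbf{Case $\alpha\ge\alpha_p$.} By \cref{eq:sbst_threshold_rule}, $x_\alpha\in[z_p+1..y_p]$. The shortest answer $[x_\alpha..y_p]$ maximizes the lexicographic key of \cref{eq:shortest_answer} over all suffixes of $[x_p..y_p]$. It therefore also maximizes it over the smaller family of suffixes starting in $[z_p+1..y_p]$, and it belongs to that family. These are exactly the suffixes of $X_\alpha[z_p+1..y_p]$, with identical keys. Since one endpoint is fixed, the shortest optimum suffix is unique, so the two starting positions coincide.

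\textbf{Case $\alpha<\alpha_p$.} Here $x_\alpha\in[x_p..z_p]$. Consider any $x\in[x_p..z_p]$. By the decomposition, its key $\bigl(S_\alpha(x,y_p),-(y_p-x+1)\bigr)$ equals the key $\bigl(S_\alpha(x,z_p),-(z_p-x+1)\bigr)$ shifted by the constant vector $\bigl(S_\alpha(z_p+1,y_p),-(y_p-z_p)\bigr)$. A constant shift preserves lexicographic order, so maximizing over $x\in[x_p..z_p]$ in either form selects the same $x$.

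Because $x_\alpha$ is the global maximizer over all of $[x_p..y_p]$ and lies in $[x_p..z_p]$, it is the maximizer restricted to $[x_p..z_p]$. That restricted maximizer is, by the shift argument, the unique shortest SMSS start on $X_\alpha[x_p..z_p]$.

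\textbf{Main obstacle.} The delicate point is that both cases rely on the threshold rule \cref{eq:sbst_threshold_rule} being exact at the breakpoint $\alpha=\alpha_p$, including ties. This is where the shortest-answer tie-break matters. At a tie between a left-child suffix and a right-child suffix, the right one is strictly shorter and hence wins. Together with the monotonicity of $x_\alpha$ from \cref{prop:sbst}, this shows that the infimum defining $\alpha_p$ is attained. Consequently $\alpha=\alpha_p$ falls in the second case, and the shortest answer used in each case is unique.
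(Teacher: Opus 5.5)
Your proposal is correct and follows essentially the paper's route: use \cref{eq:sbst_threshold_rule} to place the parent's shortest suffix in the relevant half, then identify the restricted optimum with the child's shortest suffix. Your translation-invariance of the lexicographic key is a direct version of the paper's two-sided contradiction ($x<x'$ and $x>x'$), and you handle the right-child case explicitly, where it is even simpler, instead of calling it symmetric.
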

\begin{proof}
We prove the first case; the second one is symmetric.
By \cref{eq:sbst_threshold_rule}, the shortest SMSS answer $[x..y_p]$ on
$X_\alpha[x_p..y_p]$ starts in the left child.
Let $[x'..z_p]$ be the shortest SMSS answer on $X_\alpha[x_p..z_p]$.

Suppose first that $x<x'$.
If $S_\alpha(x,y_p)=S_\alpha(x',y_p)$, then the shorter range $[x'..y_p]$ would be
the shortest answer for the parent interval.
Therefore $S_\alpha(x,y_p)>S_\alpha(x',y_p)$, which implies
$S_\alpha(x,x'-1)>0$.
Hence
\[
    S_\alpha(x,z_p)
    =
    S_\alpha(x',z_p)+S_\alpha(x,x'-1)
    >
    S_\alpha(x',z_p),
\]
contradicting the optimality of $[x'..z_p]$.

Suppose now that $x>x'$.
If $S_\alpha(x',z_p)=S_\alpha(x,z_p)$, then the shorter range $[x..z_p]$ would be
the shortest answer for the left child.
Thus $S_\alpha(x',z_p)>S_\alpha(x,z_p)$, and adding the common suffix
$[z_p+1..y_p]$ yields
$S_\alpha(x',y_p)>S_\alpha(x,y_p)$, contradicting the optimality of the parent suffix.
Therefore $x=x'$.
\end{proof}

Since the height of $T$ is $O(\log n)$, we can answer a suffix R-OMSS query on $X_0$ in $O(\log n)$ time when the query range is canonical.

Finally, we consider how to answer a suffix R-OMSS query on $X_{0}$ 
for a general query range $[a..b]$ using the data structure.
We first partition $[a..b]$ into $c = O(\log n)$ distinct canonical ranges 
$[a_1..b_1], \dots, [a_c..b_c]$, ordered from left to right (i.e., $a_1 = a$, $b_c = b$, and $b_k+1 = a_{k+1}$ for all $k\in[1..c-1]$).
This partition can be computed by starting from the root of $T$ and traversing the tree downwards in two directions for finding the leaves of the canonical ranges $[a..a]$ and $[b..b]$,
respectively. 
We then collect $O(\log n)$ canonical ranges corresponding to children on the two root-to-leaf paths.
At each level of $T$, there are at most two nodes whose canonical ranges $[x..y]$ satisfy $[x..y] \cap [a..b] \neq \emptyset$ but $[x..y] \not\subseteq [a..b]$; 
only these nodes require recursion to the next level, while any other intersecting node is fully contained in $[a..b]$ and can be reported as a canonical range immediately.
Therefore, we can compute these canonical ranges in $O(\log n)$ time. 

If $c = 1$ (i.e., $[a..b]$ is a canonical range), 
we can answer the R-OMSS query in $O(\log n)$ time using $T$.
Otherwise, we apply a dynamic programming approach as follows.
Define an array $A[1..c]$ of size $c$, where for each $k \in [1..c]$, $A[k]$ stores the answer range of the SMSS query on $X_{\alpha}[a_1..b_k]$.
Then $A[c]$ stores the answer of the suffix R-OMSS query.

To compute $A[c]$, we first handle the base case by computing $A[1]$ in $O(\log n)$ time using $T$.
Next, for $k > 1$, we compute $A[k]$ from $A[k-1]$ as follows.
To compute $A[k]$, it suffices to determine the answer range $[x..b_k]$ of the SMSS query on $X_{\alpha}[a_1..b_k]$. 
Now we introduce the following lemma, which is used to compute $A[k]$.

\begin{figure}[t]
    \centering
    \includegraphics[width=0.8\textwidth]{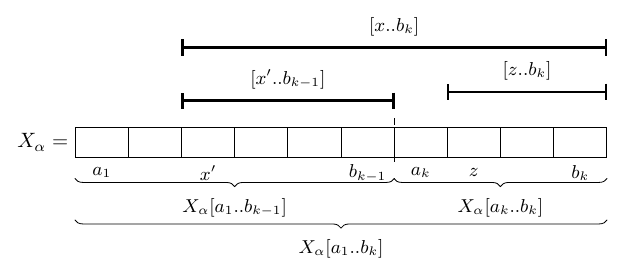}
    \caption{A sketch of the setting in \cref{lem:computak} with the setting $x = x'$.}
    \label{fig:computak}
\end{figure}

\begin{lemma}\label{lem:computak}
Let $[x..b_k]$ be the shortest SMSS answer on $X_\alpha[a_1..b_k]$, let
$[x'..b_{k-1}]$ be the shortest SMSS answer on $X_\alpha[a_1..b_{k-1}]$, and let
$[z..b_k]$ be the shortest SMSS answer on $X_\alpha[a_k..b_k]$.
Then $x\in\{x',z\}$.
\end{lemma}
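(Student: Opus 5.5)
We split on where the shortest answer $[x..b_k]$ on $X_\alpha[a_1..b_k]$ starts: either $x\in[a_k..b_k]$ or $x\in[a_1..b_{k-1}]$. In each case we show that $x$ coincides with the start of the shortest answer on the corresponding subarray, by the same exchange argument used in \cref{prop:sbst2}. That proposition treats the special case where the left part and the right part are the two children of a node; here the left part is the arbitrary range $[a_1..b_{k-1}]$ and the right part is $[a_k..b_k]$. Its proof never uses canonicity, only the split into a left and a right piece, so we may transplant it. Concretely, we replace $[x_p..z_p]$ by $[a_1..b_{k-1}]$, $[z_p+1..y_p]$ by $[a_k..b_k]$, and the threshold comparison by the case distinction on $x$.

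\emph{Case $x\in[a_k..b_k]$.} We show $x=z$.
\begin{itemize}
\item If $x<z$: the suffix $[z..b_k]$ is shorter, so the tie-break \cref{eq:shortest_answer} forces $S_\alpha(x,b_k)>S_\alpha(z,b_k)$. Hence $S_\alpha(x,z-1)>0$. Then $[x..b_k]$, which lies inside $[a_k..b_k]$, beats $[z..b_k]$ on $X_\alpha[a_k..b_k]$, a contradiction.
\item If $x>z$: optimality of $[z..b_k]$ on $[a_k..b_k]$ together with the tie-break gives $S_\alpha(z,b_k)>S_\alpha(x,b_k)$. Since $z\ge a_k\ge a_1$, the range $[z..b_k]$ is also a candidate on $[a_1..b_k]$, a contradiction.
\end{itemize}

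\emph{Case $x\in[a_1..b_{k-1}]$.} We show $x=x'$, working with the common suffix $[a_k..b_k]$.
\begin{itemize}
\item If $x<x'$: the shorter candidate $[x'..b_k]$ must lose strictly, so $S_\alpha(x,x'-1)>0$. Then $S_\alpha(x,b_{k-1})>S_\alpha(x',b_{k-1})$, contradicting the optimality of $[x'..b_{k-1}]$.
\item If $x>x'$ (with $x\le b_{k-1}$): shortness of $[x'..b_{k-1}]$ forces strict inequality $S_\alpha(x',b_{k-1})>S_\alpha(x,b_{k-1})$. Adding $S_\alpha(b_{k-1}+1,b_k)$ to both sides gives $S_\alpha(x',b_k)>S_\alpha(x,b_k)$, contradicting the optimality of $x$.
\end{itemize}

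The only delicate point is the handling of ties, and it is exactly where the tie-break is applied strictly. In each subcase the shorter of the two competing ranges wins on equality. So every inequality becomes strict in the direction we need: either because the shorter range lost, or because the longer range is not the shortest optimum. I expect no other obstacle; the rest is additivity $S_\alpha(u,w)=S_\alpha(u,v)+S_\alpha(v+1,w)$.
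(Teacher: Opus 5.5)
Your proof is correct and follows the same route as the paper: you split on whether $x$ lies in $[a_1..b_{k-1}]$ or in $[a_k..b_k]$. In the first case, appending the fixed block $[a_k..b_k]$ preserves the lexicographic order of \cref{eq:shortest_answer}, and in the second the candidate suffixes are exactly those of $X_\alpha[a_k..b_k]$. Your four explicit exchange subcases simply unfold the paper's one-line remark that adding the common block preserves both the value order and the length order among ties.
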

\begin{proof}
    See \cref{fig:computak} for a sketch of the setting.
If $x\le b_{k-1}$, every suffix of $X_\alpha[a_1..b_k]$ starting at or before
$b_{k-1}$ consists of a suffix of $X_\alpha[a_1..b_{k-1}]$ followed by the fixed block
$X_\alpha[a_k..b_k]$.
Adding the same block preserves both the value order and the length order among ties.
Hence, the shortest SMSS answer among these candidates starts at $x'$.

If $x\ge a_k$, the candidate lies completely inside $X_\alpha[a_k..b_k]$, and by
definition its shortest SMSS answer starts at $z$.
These are the only two cases because $a_k=b_{k-1}+1$.
\end{proof}

The answer ranges of the SMSS queries on $X_{\alpha}[a_1..b_{k-1}]$ and $X_{\alpha}[a_k..b_k]$ can be obtained from $A[k-1]$ and from $T$, respectively, in $O(\log n)$ time.
Thus, by \cref{lem:computak}, given $A[k-1]$, the value $A[k]$ can be computed in $O(\log n)$ time, which implies that $A[c]$ can be computed in $O(\log^2 n)$ time.

\begin{figure}[tp]
	\begin{center}
		\includegraphics[width=\linewidth]{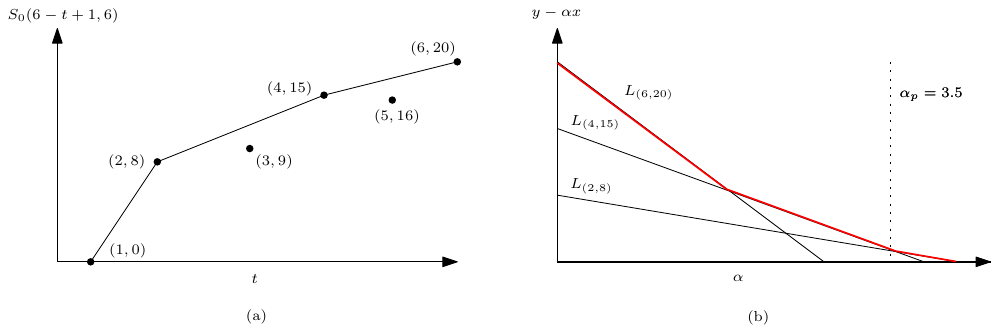}
	\end{center}
    \caption{Example illustrating the computation of the threshold offset at the root of $T$ in \cref{fig:sbst}. (a) The point set $P$ with its convex hull. (b) The dual lines corresponding to the convex hull of $P$. The red segments represent the upper envelope of these lines.}
    \label{fig:construct_t}
\end{figure}

\subparagraph*{Construction time.}
As $C_0$ can be constructed in $O(n)$ time, it suffices to analyze the construction time of $T$. To construct $T$, for each node $p$ whose corresponding range is $[x_p..y_p]$ with $\ell_p = y_p - x_p + 1$, we compute its threshold offset $\alpha_p$ as follows.
Consider the point set
$P = \{(t, S_{0}(y_p - t + 1, y_p)) \mid 1 \le t \le \ell_p\}$,
that is, each point corresponds to a suffix of $[x_p..y_p]$, where the $x$-coordinate $t$ is the suffix length and the $y$-coordinate is its sum.
We keep the vertices of the upper convex hull of $P$, since only hull vertices can maximize $y-\alpha x$ for some $\alpha\ge0$.
The convex hull can be constructed in $O(\ell_p)$ time, since the points in $P$ are already sorted by their first coordinates~\cite{graham1972efficient}. 

For any suffix represented by a point $(x, y) \in P$, we have
$S_{\alpha}(y_p - x + 1, y_p) = y - \alpha x$.
Thus, given an offset $\alpha$, answering the SMSS query on $X_0[x_p..y_p]$ reduces to finding the point $(x, y) \in P$ that maximizes $y - \alpha x$.
By point-line duality, each point $(x, y)$ corresponds to the line
$L_{(x, y)}(w) = y - w x$,
and $\max_{(x, y) \in P} (y - \alpha x)$ equals the value of the upper envelope at $w = \alpha$.
The upper envelope is piecewise linear and changes only at intersection points of consecutive envelope lines.

We first determine the shortest optimal suffix at offset $\alpha=0$.
It is represented by a point of maximum second coordinate, choosing the one of minimum first coordinate in case of a tie; this point can be identified during the hull construction.
If this suffix starts in the right half of $[x_p..y_p]$, then the definition of the threshold gives $\alpha_p=0$.
Otherwise, the shortest optimal suffix at offset $0$ starts in the left half.
Since it eventually becomes the singleton $[y_p..y_p]$, \cref{prop:sbst} implies that $\alpha_p>0$ is the first positive envelope breakpoint at which the right-active shortest optimum starts in the right half.
At that breakpoint, the right-half suffix is selected because it is shorter.
Since the envelope breakpoints are sorted by slope, this breakpoint can be found by binary search over the hull in $O(\log\ell_p)$ time.
See \cref{fig:construct_t} for an example.

Thus, the nodes at each level of $T$ can be constructed in $O(n)$ time, implying that the entire tree $T$ can be constructed in $O(n \log n)$ time overall.

We summarize the results of this section in the following theorem.

\begin{theorem}\label{thm:range_suffix_omss}
Given an array $X_0$ of size $n$, there exists an $O(n)$-space data structure that can answer a prefix/suffix R-OMSS query on $X_0$ in $O(\log^2 n)$ time for any non-negative offset $\alpha \ge 0$.
When the query range is a canonical range of $X_0$, the data structure can answer a prefix/suffix R-OMSS query in $O(\log n)$ time.
We can construct this data structure in $O(n \log n)$ time.
\end{theorem}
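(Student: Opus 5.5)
The proof assembles the pieces developed in this subsection, so the plan is to account separately for space, query time (canonical and general ranges), construction time, and the prefix case by symmetry.

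\emph{Space.} The structure consists of the prefix-sum array $C_0$ and the SBST $T$. $T$ has $O(n)$ nodes, one per canonical range, and each node stores its range and one threshold offset $\alpha_p$. Hence the total space is $O(n)$.

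\emph{Canonical query ranges.} Let the query range be $[x_p..y_p]$ for a node $p$. We descend from $p$, going left if $\alpha<\alpha_p$ and right otherwise, until a leaf $[x'..x']$ is reached. Correctness follows by induction on the height of $p$, using \cref{prop:sbst2}. At each internal node, the shortest SMSS answer on the current range has the same starting position as the one on the chosen child. At a leaf, the only suffix is the singleton itself. Thus the leaf gives the start $x$ of the shortest suffix answer, and its value $C_0[y_p]-C_0[x-1]-\alpha(y_p-x+1)$ is computed in $O(1)$. The height of $T$ is $O(\log n)$, so the query takes $O(\log n)$ time.

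\emph{General query ranges.} We decompose $[a..b]$ into $c=O(\log n)$ consecutive canonical ranges in $O(\log n)$ time, by the two root-to-leaf traversals. We then run the dynamic program over $A[1..c]$. $A[1]$ is a canonical query. For $k>1$, \cref{lem:computak} says the new start is either $x'$ (from $A[k-1]$) or $z$ (a canonical query on $[a_k..b_k]$). We compare $S_\alpha(x',b_k)$ and $S_\alpha(z,b_k)$ with $C_0$, breaking ties toward the larger start, i.e.\ the shorter range. This matches \cref{eq:shortest_answer}, so by induction $A[k]$ is the shortest SMSS answer on $X_\alpha[a_1..b_k]$. Each step costs $O(\log n)$, giving $O(\log^2 n)$ in total.

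\emph{Construction.} For each node $p$, we build the upper hull of the suffix points in $O(\ell_p)$ time, since they are sorted by $x$-coordinate. We then locate $\alpha_p$: it is $0$ if the shortest maximum at offset $0$ starts in the right half; otherwise it is the first envelope breakpoint whose right-active range starts in the right half. This is well defined because, by \cref{prop:sbst}, the start is monotone in $\alpha$, so a binary search over the hull works. Summing $O(\ell_p)$ over one level of $T$ gives $O(n)$, and there are $O(\log n)$ levels, so construction takes $O(n\log n)$.

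\emph{Prefix queries.} We build the same structure on the reversed array $X_0[n..1]$ and map query ranges and answer positions through $i\mapsto n+1-i$. This keeps all bounds.

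\emph{Main obstacle.} The delicate part is the tie-breaking consistency. The threshold rule \cref{eq:sbst_threshold_rule} must hold exactly at $\alpha=\alpha_p$; this needs the observation that at the breakpoint the right-child suffix is shorter and hence selected. The DP tie-breaking must also match the shortest-answer convention, otherwise the induction through \cref{prop:sbst2} and \cref{lem:computak} could pick a non-shortest optimum.
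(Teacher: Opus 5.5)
Your treatment of the suffix case matches the paper's argument: the $O(n)$ space, the canonical descent justified inductively by \cref{prop:sbst2}, the dynamic program over the canonical cover via \cref{lem:computak} with ties broken toward the shorter range, and the hull-based computation of each $\alpha_p$. The step that fails as stated is the prefix case.

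Building the identical structure on the reversed array and translating through $i\mapsto n+1-i$ does not preserve the family of canonical ranges. The split $[g..g+\floor{(h-g)/2}]$ always gives the larger half to the left child, so after reversal the larger half lies on the right. Already for $n=3$, the canonical range $[1..2]$ of $X_0$ becomes $[2..3]$ in the reversed array, which is not canonical there. For $n=2^k+1$, the canonical range $[2^{k-1}+2..n]$ of $X_0$ becomes $[1..2^{k-1}]$ in reversed coordinates. This is the reversed structure's canonical range $[1..2^{k-1}+1]$ without its last element. Its canonical cover has $k-1$ pieces whose heights sum to $\Theta(k^2)$, so your dynamic program spends $\Theta(\log^2 n)$ time on a prefix query over a canonical range of $X_0$. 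Hence the theorem's second claim, $O(\log n)$ time for canonical ranges, is not established for prefix queries. Neither is the $O(\log n)$ per-range cost of the $\mathsf{pre}$ fields that \cref{thm:range_omss} relies on. Your remark that reversal ``keeps all bounds'' holds only for the general $O(\log^2 n)$ bound.

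The repair is routine; either of two options works.
\begin{itemize}
\item Build the reversed structure with the mirrored split rule (larger half to the right child), so that its canonical ranges map exactly onto those of $X_0$.
\item Alternatively, as the paper's ``by symmetry'' intends, give the same tree $T$ a second threshold per internal node $p$. Define it as the infimum of offsets at which the end of the shortest prefix answer on $X_\alpha[x_p..y_p]$ lies in $[x_p..z_p]$. Descend to the right child below this threshold and to the left child at or above it.
\end{itemize}
The mirrored versions of \cref{prop:sbst,prop:sbst2} and of the hull construction then give the same $O(n)$ space, $O(\log n)$ canonical query time, and $O(n\log n)$ construction time.
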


\subsection{Data Structure for General R-OMSS Queries}\label{sec:general}

For a nonempty interval $I=[g..h]$ and an offset $\alpha$, define the \emph{summary} of $I$ by
\(
    \mathcal S_\alpha(I)
    :=
    \bigl(
        S_\alpha(g,h),
        \mathsf{pre}_\alpha(I),
        \mathsf{suf}_\alpha(I),
        \mathsf{best}_\alpha(I)
    \bigr),
\)
where $\mathsf{pre}_\alpha(I)$ and $\mathsf{suf}_\alpha(I)$ are the maximum values of a prefix and a suffix of $X_\alpha[g..h]$, respectively, and
$\mathsf{best}_\alpha(I)$ is the maximum value of a nonempty segment of $X_\alpha[g..h]$.
Together with each of these three values, we store a shortest answer range attaining it according to \cref{eq:shortest_answer}.

Given two adjacent nonempty intervals $I=[g..h]$ and $J=[h+1..k]$,
we can compute the summary of $I \cup J$ in constant time by the following equations.
\begin{align}
    S_\alpha(g,k)
        &= S_\alpha(g,h)+S_\alpha(h+1,k), \label{eq:summary_total}\\
    \mathsf{pre}_\alpha(I\cup J)
        &= \max\bigl(
            \mathsf{pre}_\alpha(I),
            S_\alpha(g,h)+\mathsf{pre}_\alpha(J)
        \bigr), \label{eq:summary_prefix}\\
    \mathsf{suf}_\alpha(I\cup J)
        &= \max\bigl(
            \mathsf{suf}_\alpha(J),
            \mathsf{suf}_\alpha(I)+S_\alpha(h+1,k)
        \bigr), \label{eq:summary_suffix}\\
    \mathsf{best}_\alpha(I\cup J)
        &= \max\bigl(
            \mathsf{best}_\alpha(I),
            \mathsf{best}_\alpha(J),
            \mathsf{suf}_\alpha(I)+\mathsf{pre}_\alpha(J)
        \bigr). \label{eq:summary_best}
\end{align}
Each maximum in \cref{eq:summary_prefix,eq:summary_suffix,eq:summary_best} is resolved by
the lexicographic order defined by \cref{eq:shortest_answer}.
For example, the range corresponding to
$\mathsf{suf}_\alpha(I)+\mathsf{pre}_\alpha(J)$ is obtained by concatenating the suffix
range stored with $\mathsf{suf}_\alpha(I)$ and the prefix range stored with
$\mathsf{pre}_\alpha(J)$.
We prove correctness of \cref{eq:summary_total,eq:summary_prefix,eq:summary_suffix,eq:summary_best} together with the following claim.

\begin{lemma}\label{lem:mss_summary_merge}
Given $\mathcal S_\alpha(I)$ and $\mathcal S_\alpha(J)$ for two adjacent nonempty intervals $I$ and $J$, we can compute $\mathcal S_\alpha(I\cup J)$ in constant time.
\end{lemma}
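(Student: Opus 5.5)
The plan is to verify the four merge equations one at a time. For each, I would partition the candidate ranges of $I\cup J$ into a constant number of classes. In each class, the best candidate under the lexicographic order of \cref{eq:shortest_answer} is given by one of the stored entries, possibly combined with the fixed total $S_\alpha$. Taking the lexicographic maximum over the classes then selects a globally shortest optimum. Every operation is a constant number of additions, comparisons and range concatenations, so the constant running time follows once correctness is established.

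\cref{eq:summary_total} follows immediately from additivity of $S_\alpha$. For \cref{eq:summary_prefix}, a prefix $[g..r]$ of $I\cup J$ either has $r\le h$, and so is a prefix of $I$, or has $r\ge h+1$, and so equals $I$ followed by a prefix $[h+1..r]$ of $J$. In the second class, every candidate is the fixed block $I$ concatenated with a prefix of $J$. This adds the constant $S_\alpha(g,h)$ to the score and the constant $|I|$ to the length. It therefore preserves the lexicographic order of \cref{eq:shortest_answer} among these candidates, which is the same argument already used in the proof of \cref{lem:computak}. Hence the best candidate in this class is $I$ concatenated with the stored shortest prefix of $J$. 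The best candidate in the first class is the stored prefix of $I$. Comparing the two lexicographically yields a shortest optimum prefix; ties between classes are resolved correctly because lengths are compared explicitly. \cref{eq:summary_suffix} follows by symmetry.

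For \cref{eq:summary_best}, I split the segments $[\ell..r]\subseteq[g..k]$ into three classes: $r\le h$ (segments inside $I$), $\ell\ge h+1$ (segments inside $J$), and $\ell\le h<r$ (crossing segments). The first two classes are handled by the stored $\mathsf{best}$ entries. A crossing segment decomposes uniquely as a nonempty suffix $[\ell..h]$ of $I$ followed by a nonempty prefix $[h+1..r]$ of $J$. Both its score and its length are sums of the two parts. Hence maximizing $(S,-\text{length})$ lexicographically over pairs reduces to separately maximizing each part lexicographically. Indeed, if one part were not lexicographically maximal, replacing it with the maximal part would weakly improve the pair, and strictly improve it if the replaced part was strictly worse. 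So the stored suffix of $I$ concatenated with the stored prefix of $J$ is a shortest optimum among crossing segments.

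The only delicate point is this tie-breaking separability: one must confirm that the pair order $(S,-\text{len})$ is compatible with componentwise addition. This holds because lexicographic order on $\mathbb R^2$ is a total order compatible with addition, i.e., an ordered group. Once this observation is made explicit, it covers both the constant-block shift in the prefix and suffix cases and the two-part sum in the crossing case, so I expect no further obstacles.
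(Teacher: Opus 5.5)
Your proposal is correct and follows essentially the same route as the paper's proof. Both arguments split prefixes, suffixes, and segments by their position relative to the common boundary, take the stored shortest representative in each class (shifted by the fixed block or concatenated across the boundary), and compare lexicographically. Your observation that the lexicographic order on (score, negative length) pairs is compatible with addition gives an explicit justification for the paper's brief claim that choosing minimum-length representatives also minimizes the length of the concatenation.
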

\begin{proof}
Every prefix of $I\cup J$ is either contained in $I$, or consists of all of $I$
followed by a prefix of $J$.
Within the second class, using the shortest prefix-MSS answer on $J$ first maximizes the value and then
minimizes the resulting length.
Comparing minimum-length representatives of the two classes according to
\cref{eq:shortest_answer} proves \cref{eq:summary_prefix}.
The argument for \cref{eq:summary_suffix} is symmetric.

Every segment of $I\cup J$ is either contained in $I$, contained in $J$, or crosses their
common boundary.
Among the crossing segments, the value is maximized by combining a maximum suffix of $I$
with a maximum prefix of $J$.
Choosing minimum-length representatives also minimizes the length of the concatenation.
If several candidates have the same value and length, any one of them may be retained.
Comparing this crossing candidate with the two minimum-length internal candidates proves
\cref{eq:summary_best}.
Finally, \cref{eq:summary_total} follows from additivity of the score $S_\alpha$.
\end{proof}

We now construct the data structure for general R-OMSS queries.
It consists of the following three parts:
\begin{enumerate}[(1)]
    \item the prefix-sum array $C_0$,
    \item the data structures of \cref{thm:range_suffix_omss}, which answer prefix and suffix R-OMSS queries in $O(\log n)$ time when the query range is canonical, and
    \item for every canonical range $[g..h]$, the $O(h-g+1)$-space OMSS data structure of \cref{lemma:sakaiDS} on $X_0[g..h]$.
\end{enumerate}
The total length of all canonical ranges is $O(n\log n)$.
Consequently, (3) uses $O(n\log n)$ space, while (1) and (2) use $O(n)$ space.

Given a query range $[a..b]$, we partition it in $O(\log n)$ time into
$c=O(\log n)$ pairwise disjoint canonical ranges
\(
    I_1=[a_1..b_1],\ldots,I_c=[a_c..b_c]
\)
ordered from left to right.
For every $k\in[1..c]$, we obtain the summary $\mathcal S_\alpha(I_k)$ as follows.

\begin{itemize}
    \item We compute its total value
    \(
        S_\alpha(a_k,b_k)
        =
        C_0[b_k]-C_0[a_k-1]-\alpha(b_k-a_k+1)
    \)
    in constant time.
    \item We obtain its prefix and suffix fields $\mathsf{pre}$ and $\mathsf{suf}$ in $O(\log n)$ time from the data structures of \cref{thm:range_suffix_omss}, since $I_k$ is canonical.
    \item We obtain its best field $\mathsf{best}$ in $O(\log n)$ time from the OMSS structure stored for $I_k$.
\end{itemize}

We merge the $c$ summaries from left to right using \cref{lem:mss_summary_merge}.
After processing $I_1,\ldots,I_k$, the maintained summary is
\[
    \mathcal S_\alpha(I_1\cup\cdots\cup I_k).
\]

\subparagraph*{Complexity Analysis}
After all summaries have been merged, its $\mathsf{best}$ field is an answer to the R-OMSS query on $[a..b]$.
The $O(\log n)$ summaries each require $O(\log n)$ query time, while all merge operations together require only $O(\log n)$ time.
The total query time is therefore $O(\log^2 n)$.

The construction time of the data structure in \cref{lemma:sakaiDS} on an interval of length $\ell$ is $O(\ell\log^2\ell)$.
Since the total length of all canonical ranges is $O(n\log n)$, all structures in (3) can be constructed in $O(n\log^3n)$ time~\cite{sakai2024data}.
The remaining components require $O(n\log n)$ construction time.

We summarize the result in the following theorem.

\begin{theorem}\label{thm:range_omss}
Given an array $X_0$ of size $n$, there exists an $O(n\log n)$-space data structure that can answer an R-OMSS query on $X_0$ in $O(\log^2n)$ time for any non-negative offset $\alpha \ge 0$.
We can construct this data structure in $O(n\log^3n)$ time.
\end{theorem}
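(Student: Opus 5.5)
The proof assembles the components listed just before the statement and checks space, query time, correctness, and construction time in that order.

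\textbf{Space.} Components (1) and (2) take $O(n)$ space by \cref{thm:range_suffix_omss}. For (3), group the canonical ranges by depth in $T$. At each depth they are pairwise disjoint, so their lengths sum to at most $n$. Since $T$ has $O(\log n)$ levels, the total length is $O(n\log n)$. By \cref{lemma:sakaiDS}, the OMSS structure on an interval of length $\ell$ uses $O(\ell)$ space, so (3) uses $O(n\log n)$ space.

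\textbf{Correctness.} First, decompose $[a..b]$ into the canonical cover $I_1,\ldots,I_c$ with $c=O(\log n)$, exactly as in \cref{sec:suffix}. For each $I_k$ we need its full summary $\mathcal S_\alpha(I_k)$. The total is obtained from $C_0$. The fields $\mathsf{pre}$ and $\mathsf{suf}$ come from the canonical-range case of \cref{thm:range_suffix_omss}, using the prefix and suffix variants. The field $\mathsf{best}$ comes from a predecessor search in the catalog of $I_k$. Under the right-active convention, this search returns a shortest answer even when $\alpha$ is exactly a breakpoint.

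Next, show by induction on $k$ that the left-to-right merge maintains $\mathcal S_\alpha(I_1\cup\cdots\cup I_k)$. This is \cref{lem:mss_summary_merge} applied to the adjacent intervals $I_1\cup\cdots\cup I_{k-1}$ and $I_k$. Every segment of $[a..b]$ is seen by this process: it lies inside one block, or it crosses a boundary and is then covered by the suffix/prefix combination at some merge. Hence the final $\mathsf{best}$ field is a shortest answer range of the R-OMSS query, and its value is obtained from $C_0$ in constant time.

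\textbf{Query time.} Each of the $c$ blocks costs $O(\log n)$: two root-to-leaf descents of height $O(\log n)$ and one binary search in a list of length at most $n$. The merges cost $O(1)$ each, so the query takes $O(\log^2 n)$ time in total.

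\textbf{Construction.} Components (1) and (2) take $O(n\log n)$ time by \cref{thm:range_suffix_omss}. For (3), build the \cref{lemma:sakaiDS} structure on each canonical range of length $\ell$ in $O(\ell\log^2\ell)\subseteq O(\ell\log^2 n)$ time. Summed over one level this is $O(n\log^2 n)$, and over all $O(\log n)$ levels it is $O(n\log^3 n)$.

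The only subtle point is that the shortest-length tie-break survives the merges. This is handled inside \cref{lem:mss_summary_merge}, because concatenating minimum-length representatives preserves both the value order and the length order. It is also needed that the catalog answers are themselves shortest at breakpoints, which the right-active convention guarantees. Beyond these two points, the proof is routine accounting.
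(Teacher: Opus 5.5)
Your proposal is correct and matches the paper's proof essentially step for step. It uses the same three components: $C_0$, the prefix/suffix structures of \cref{thm:range_suffix_omss}, and a \cref{lemma:sakaiDS} catalog for every canonical range. It also uses the same $O(\log n)$-size canonical cover whose per-block summaries are merged left to right via \cref{lem:mss_summary_merge}, and the same accounting over the $O(n\log n)$ total length of canonical ranges, giving $O(n\log n)$ space, $O(\log^2 n)$ query time, and $O(n\log^3 n)$ construction time.
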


\subparagraph*{Returning Multiple Solutions.}
Although a query returns a shortest answer range, we can also report all optimum ranges that contain neither a zero-sum proper prefix nor a zero-sum proper suffix.  
We call such an optimum range \emph{irreducible}.
While a shortest answer is irreducible, an irreducible optimum range need not be a globally shortest answer.

\begin{proposition}\label{prop:multiplesolution}
    Any two distinct irreducible answer ranges of an R-OMSS query are disjoint.
\end{proposition}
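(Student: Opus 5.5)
We argue by contradiction: assume two distinct irreducible optimum ranges $[\ell_1..r_1]$ and $[\ell_2..r_2]$ inside $[a..b]$ intersect, and let $M$ be the common optimum value of the R-OMSS query with offset $\alpha$. The only tools needed are additivity of $S_\alpha$ and optimality of $M$: every nonempty subrange of $[a..b]$ has score at most $M$.

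First, neither range can contain the other. Suppose $[\ell_2..r_2]\subsetneq[\ell_1..r_1]$. Then at least one of $[\ell_1..\ell_2-1]$ or $[r_2+1..r_1]$ is nonempty, and
\(
    M = S_\alpha(\ell_1,r_1) = S_\alpha(\ell_1,\ell_2-1)+M+S_\alpha(r_2+1,r_1),
\)
where an empty piece is read as $0$. Hence the two pieces sum to $0$. If one piece is empty, the other has sum $0$ and is a zero-sum proper prefix or suffix of $[\ell_1..r_1]$, contradicting irreducibility. If both are nonempty, the prefix $[\ell_1..r_2]$ is a proper prefix, so $S_\alpha(\ell_1,r_2)\le M$ by optimality. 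Therefore $S_\alpha(\ell_1,\ell_2-1)\le 0$, and symmetrically $S_\alpha(r_2+1,r_1)\le 0$. Since these two sum to $0$, both are $0$, and $[\ell_1..\ell_2-1]$ is a zero-sum proper prefix, again a contradiction.

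It remains to treat proper crossing. By symmetry take $\ell_1<\ell_2\le r_1<r_2$. Use
\(
    S_\alpha(\ell_1,r_2)+S_\alpha(\ell_2,r_1)=S_\alpha(\ell_1,r_1)+S_\alpha(\ell_2,r_2)=2M.
\)
Both ranges on the left are nonempty subranges of $[a..b]$, since $\ell_2\le r_1$. So each is at most $M$, and therefore both equal $M$. Now
\(
    S_\alpha(\ell_1,r_1)=S_\alpha(\ell_1,\ell_2-1)+S_\alpha(\ell_2,r_1),
\)
which gives $S_\alpha(\ell_1,\ell_2-1)=0$. This is a zero-sum proper prefix of $[\ell_1..r_1]$ (it is nonempty because $\ell_1<\ell_2$, and proper because $\ell_2\le r_1$), contradicting irreducibility.

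The main obstacle is the nesting case with slack on both sides. There, irreducibility alone does not immediately kill a zero-sum split into a positive and a negative piece, so we must invoke optimality of the proper prefix $[\ell_1..r_2]$ to force both pieces to be nonpositive. The crossing case reduces to the exchange identity once we note that the intersection is nonempty, hence a legitimate candidate range.
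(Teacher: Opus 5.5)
Your proof is correct and follows essentially the same route as the paper. Both split into a nested case and a properly crossing case, and both use only additivity of $S_\alpha$ plus optimality of $M$; in the nested case, your comparison against $[\ell_1..r_2]$ and $[\ell_2..r_1]$ is exactly the paper's $A\cup B$ and $B\cup C$ argument, while in the crossing case your exchange identity repackages, without a sign split, the paper's observation that $a=c$ and that otherwise the union or the intersection would exceed $M$ (folding the equal-left-endpoint case into the nested case is only a reorganization).
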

\begin{proof}
Let $I=[\ell_1..r_1]$ and $J=[\ell_2..r_2]$ be two distinct overlapping answer ranges.

If $\ell_1=\ell_2$, then $r_1\ne r_2$ because $I\ne J$. 
After swapping $I$ and $J$ if necessary, assume $r_1<r_2$.
Since $I$ and $J$ have the same value,
\(
S_\alpha(r_1+1, r_2) = S_\alpha(J)-S_\alpha(I)=0.
\)
Thus, $[r_1+1..r_2]$ is a zero-sum proper suffix of $J$, 
contradicting irreducibility. 
This contradiction shows that $\ell_1\ne\ell_2$. 

After swapping $I$ and $J$ if necessary, assume $\ell_1<\ell_2$.
Let the common maximum score be $S_\alpha(J) = S_\alpha(I)= M$, and consider the following two cases.

\begin{itemize}
    \item First suppose that $r_1<r_2$.
    Write
    \(A:=[\ell_1..\ell_2-1],\)
    \(B:=[\ell_2..r_1],\) and
    \(C:=[r_1+1..r_2], \)
    and let $a := S_\alpha(A)$, $b := S_\alpha(B)$, and $c := S_\alpha(C)$
    be their scores.
    Since $a+b=M=b+c$, we have $a=c$.
    If $a>0$, then the union $A\cup B\cup C$ has value $M+a>M$.
    If $a<0$, then the intersection $B$ has value $M-a>M$.
    Both cases contradict maximality.
    Hence, $a=c=0$, so $A$ is a zero-sum proper prefix of $I$ and $C$ is a zero-sum proper suffix of $J$.
    
    \item Now suppose that $r_2\le r_1$; this includes both nested ranges and the case of equal right endpoints.
    Write
    \( A := [\ell_1..\ell_2-1], \)
    \( B := [\ell_2..r_2], \) and
    \( C := [r_2+1..r_1], \)
    where $C$ may be empty.
    Set $c:=0$ if $C$ is empty, and $c:=S_\alpha(C)$ otherwise.
    Further, set $a := S_\alpha(A)$ and $b := S_\alpha(B)$.
    Since $b=M$ and $a+b+c=M$, we have $a+c=0$.
    If $a>0$, then $A\cup B$ has value $M+a>M$.
    If $a<0$, then $B\cup C$ has value $M-a>M$.
    Thus, $a=0$, and $A$ is a zero-sum proper prefix of $I$.
    This contradicts the assumption in all cases.
\end{itemize}
\end{proof}

Let $M$ be the value of an initial R-OMSS query on $[a..b]$.
After reporting an answer range $[\ell..r]$, recursively query the two remaining ranges
$[a..\ell-1]$ and $[r+1..b]$, omitting empty ranges.
A recursive branch is stopped as soon as its answer value is smaller than $M$.
By \cref{prop:multiplesolution}, every other answer range is disjoint from $[\ell..r]$ and is therefore fully contained in one of the two remaining ranges.
Conversely, every range reported by the recursion has value $M$ and is an answer to the original query.
The recursion performs only a constant number of queries per reported range.

\begin{corollary}\label{cor:range_omss_multiple}
Given an array $X_0$ of size $n$,
we can enumerate all distinct irreducible answer ranges of an R-OMSS query on $X_0$ with the data structure of \cref{thm:range_omss}.
The delay is $O(\log^2n)$ per reported range.
We can report up to $g$ distinct answer ranges in \( O(g\log^2n) \) time.
\end{corollary}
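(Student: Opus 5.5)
The plan is to turn the recursive procedure described just before the statement into an enumeration algorithm and verify three properties: it reports only irreducible optimum ranges, it reports each of them exactly once, and the delay between consecutive outputs is $O(\log^2 n)$.

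\textbf{Step 1: run the recursion.} First answer the R-OMSS query on $[a..b]$ with offset $\alpha$ using \cref{thm:range_omss}. This returns the optimum value $M$ and a shortest answer range $[\ell..r]$. The range is irreducible by the consequence of \cref{eq:shortest_answer} that a shortest answer has no zero-sum proper prefix or suffix. Then maintain a stack of pending subranges, initialized with $[a..\ell-1]$ and $[r+1..b]$ (omitting empty ones). Repeatedly pop a subrange $[a'..b']$ and query it. If its value is smaller than $M$, discard it. Otherwise the returned range has value $M$, so it is an optimum of the original query. It is also a shortest answer within $[a'..b']$, hence irreducible. Report it and push its two complementary pieces inside $[a'..b']$.

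\textbf{Step 2: completeness and no duplicates.} Argue by induction that every irreducible optimum range $K$ not yet reported lies entirely inside some pending subrange. The induction step uses \cref{prop:multiplesolution}: $K$ is disjoint from each reported range, so when a subrange containing $K$ is split around a reported range, $K$ falls into one of the two pieces. A subrange containing $K$ has value at least $M$, and $M$ is the global maximum, so its value is exactly $M$ and it is never discarded. Every subrange therefore either eventually reports $K$ or is split further. The pending subranges are pairwise disjoint and disjoint from all reported ranges, so no range is reported twice, and the process terminates because the subranges strictly shrink.

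\textbf{Step 3: delay bound.} This is the main obstacle, because the discarded queries must not accumulate between outputs. Each reported range creates at most two children. A child that is discarded costs one query and creates nothing further, so the total work is $O(1)$ queries per reported range plus one for the root. For a delay bound, rather than an amortized one, I would process the subranges in depth-first order: immediately after reporting a range, query both of its children, and report up to two new ranges from them, pushing their children onto the stack. Between two consecutive outputs the algorithm then performs at most a constant number of queries, namely the two children checked plus at most one further pop. Each query costs $O(\log^2 n)$ by \cref{thm:range_omss}, so the delay is $O(\log^2 n)$. Stopping after $g$ outputs gives $O(g\log^2 n)$ total time, since all ranges reported this way are distinct.
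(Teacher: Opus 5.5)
Your Steps 1 and 2 match the paper's argument: you split around the reported range and invoke \cref{prop:multiplesolution} for completeness. Your charging argument in Step 3 also correctly yields the $O(g\log^2n)$ total. The gap is in the worst-case delay.

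In the scheme you describe, the answers of both children are reported at once, and the residual pieces of \emph{those} ranges are pushed unqueried. A later pop therefore costs a query that may return a value below $M$. Nothing bounds how many such fruitless pops occur in a row, so your claim of ``at most one further pop'' is unjustified.

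For example, let the irreducible optima be $K_1,\dots,K_g$ from left to right, separated by very negative entries, with $|K_1|>|K_2|>\dots>|K_g|$. Then every residual subrange returns its rightmost optimum. Suppose left pieces are popped first. Each round reports two consecutive optima and leaves the optimum-free gap between them, unqueried, beneath the productive left piece. After the last of these optima is reported, $\Theta(g)$ gaps are popped and queried consecutively without any output. This happens before termination, or before the next output if further optima lie to the right. If right pieces are popped first, reverse the length order.

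The underlying problem is that you report two levels at once, whereas whatever you defer must already be known to contain an optimum. That is the invariant behind the paper's one-line proof: push a residual range only after it has been queried and found to have value $M$, and store its answer with it. Then every pop reports immediately without a query. Between two consecutive outputs, and after the last one, the algorithm performs only the two queries on the residual ranges of the range just reported. This gives the $O(\log^2n)$ delay.
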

\begin{proof}
    We eagerly query both residual ranges and recurse only on those whose value is $M$.
\end{proof}

\section{A Time--Space Trade-Off}\label{sec:tradeoff}

The data structure of \cref{thm:range_omss} stores an OMSS data structure for every canonical range of the binary partition of $[1..n]$.
Since the total length of the canonical ranges on each level is $n$, the $O(\log n)$ levels account for the $O(n\log n)$ space usage.
In this section, we parameterize the number of levels by a trade-off parameter~$\tau$ to lower the number of levels while increasing the degree of the underlying interval hierarchy.
This gives a continuous trade-off between Kadane's linear-space scanning solution and the polylogarithmic-query data structure of \cref{thm:range_omss}.

Let $\tau$ be an integer with $1 \le \tau \le \ceil{\log_2 n}$, and define
\( \Delta := \ceil{n^{1/\tau}}. \)
We first introduce a rooted ordered interval tree $\mathcal T_\tau$.
The root corresponds to the range $[1..n]$.
For every node $v$ whose range $I_v=[g_v..h_v]$ has length $\ell_v:=h_v-g_v+1 \ge 2$, 
we partition $I_v$ into $\min(\Delta,\ell_v)$ consecutive ranges whose lengths differ by at most one, 
and make these ranges the children of $v$.
We stop after depth $\tau$ or when a range has length one.

\begin{lemma}\label{lem:tradeoff_hierarchy}
The tree $\mathcal T_\tau$ has depth at most $\tau$.
At every depth, its node ranges are pairwise disjoint and their total length is at most $n$.
\end{lemma}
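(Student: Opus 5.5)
The plan is a direct induction on depth, read off from the construction of $\mathcal T_\tau$. Depth is counted from the root at depth $0$. For the depth bound there is nothing to prove beyond the construction rule: children are created only for nodes whose depth is less than $\tau$, since the construction stops after depth $\tau$. Hence every node has depth at most $\tau$. Termination is also immediate, because each split of a range of length $\ell_v\ge 2$ into $\min(\Delta,\ell_v)\ge 2$ parts produces strictly shorter, nonempty ranges.

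For the second claim, I would prove by induction on $t\in[0..\tau]$ the following statement: the ranges of the nodes at depth $t$ are pairwise disjoint subranges of $[1..n]$. The base case $t=0$ is the single range $[1..n]$. For the inductive step, every node at depth $t+1$ is a child of some node $v$ at depth $t$, and its range is a subrange of $I_v$. The children of a fixed $v$ form a partition of $I_v$ into consecutive ranges; this partition is well defined with nonempty parts because $\min(\Delta,\ell_v)\le \ell_v$. So two distinct children of the same $v$ are disjoint. Two children of distinct parents $v\ne v'$ lie inside $I_v$ and $I_{v'}$ respectively, and these are disjoint by the induction hypothesis. Note that leaves at depth $t$ simply contribute no nodes at depth $t+1$. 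This is why the total length at deeper levels is only bounded by $n$ rather than equal to $n$, which explains the phrasing ``at most $n$''.

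The total-length bound then follows at once. Pairwise disjoint subranges of $[1..n]$ have total length at most $|[1..n]|=n$.

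There is no real obstacle here. The only point to watch is the bookkeeping around nodes that stop early, either because they have length one or because they reach depth $\tau$. Such nodes must not be counted at deeper levels, and handling that correctly is exactly what turns the equality into an inequality.
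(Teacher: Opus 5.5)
Your argument for disjointness and the length bound is the paper's argument (the children of a node partition its range), written out as an explicit induction.

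For the depth bound you take a different, essentially tautological route: you read it off the cutoff ``we stop after depth $\tau$'', so the first claim holds by definition. The paper instead proves that this cutoff never truncates a non-singleton range. A range of length $\ell>\Delta$ splits into parts of length at most $\ceil{\ell/\Delta}$, and a range with $\ell\le\Delta$ splits into singletons. Hence after $q$ steps every non-singleton range has length at most $\ceil{n/\Delta^q}$, and $\Delta^\tau\ge n$ forces every range at depth $\tau$ to be a singleton.

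Your proof is valid for the statement as literally phrased. However, your own closing remark explicitly allows a non-singleton node that ``reaches depth $\tau$'' and becomes a leaf. The paper's route rules this case out: it shows that every leaf is a singleton $[i..i]$. \Cref{lem:tradeoff_cover} silently relies on that stronger fact. If a non-singleton leaf existed, a query range ending strictly inside it would have no partition into node ranges. You should therefore add the short length computation, using $\ceil{\ceil{n/\Delta^q}/\Delta}=\ceil{n/\Delta^{q+1}}$, rather than appealing to the stopping rule.
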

\begin{proof}
The children of every node form a partition of the range of their parent.
Consequently, the node ranges at every depth are pairwise disjoint and have total length at most $n$.

After $q$ subdivision steps, the length of every non-singleton node range is at most
$\ceil{n/\Delta^q}$.
Since $\Delta^\tau \ge n$, every range has length at most one after $\tau$ subdivision steps.
Hence, the depth of $\mathcal T_\tau$ is at most $\tau$.
\end{proof}

\subsection{Node Summaries}\label{sec:tradeoff_node_summaries}

For every node $v$ of $\mathcal T_\tau$, we store a data structure that returns the MSS summary
$\mathcal S_\alpha(I_v)$ defined in \cref{sec:general}.
For $I_v=[g_v..h_v]$, let $\ell_v:=h_v-g_v+1$.
We store:
\begin{enumerate}[(1)]
    \item the length $\ell_v$ and the sum $S_0(g_v,h_v)$,
    \item a data structure for prefix OMSS queries on $X_0[g_v..h_v]$, \label{it:tradeoff:prefix}
    \item a data structure for suffix OMSS queries on $X_0[g_v..h_v]$, and \label{it:tradeoff:suffix}
    \item the OMSS data structure of \cref{lemma:sakaiDS} on $X_0[g_v..h_v]$.
\end{enumerate}

We describe (\ref{it:tradeoff:prefix}); the construction of (\ref{it:tradeoff:suffix}) is symmetric.
For every prefix length $t\in[1..\ell_v]$, let
\(
    p_v(t):=S_0(g_v,g_v+t-1).
\)
After subtracting the offset $\alpha$, the value of this prefix is $p_v(t)-\alpha t$.
Therefore, answering a prefix OMSS query is equivalent to finding a point maximizing $(x,y) \mapsto y-\alpha x$ among
\(
    P_v:=\{(t,p_v(t))\mid t\in[1..\ell_v]\},
\)
i.e. $\max_{t\in[1..\ell_v]} \{p_v(t)-\alpha t\}$.
As in the construction in \cref{sec:suffix}, we retain the vertices of the upper convex hull of $P_v$.
Since the points are sorted by their first coordinate, the hull can be constructed in $O(\ell_v)$ time and occupies $O(\ell_v)$ space.
The breakpoints between consecutive hull vertices are sorted, and hence a binary search returns the optimal prefix length in $O(\log\ell_v)$ time.
We obtain the suffix structure analogously from the points
\( \{(t,S_0(h_v-t+1,h_v))\mid t\in[1..\ell_v]\}. \)

\begin{lemma}\label{lem:tradeoff_node_summary}
For every node $v$ of $\mathcal T_\tau$, we can construct an $O(\ell_v)$-space data structure in
$O(\ell_v\log^2\ell_v)$ time such that we can obtain $\mathcal S_\alpha(I_v)$ in $O(\log\ell_v)$ time for every non-negative offset $\alpha \ge 0$.
\end{lemma}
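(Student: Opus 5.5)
The plan is to assemble the four stored components and bound each one's space, construction time, and query time separately. The summary consists of the total value, the prefix field, the suffix field, and the best field. I will bound the cost of each field.

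\emph{Total value.} Store $\ell_v$ and $S_0(g_v,h_v)$; both take $O(1)$ space and are computed by a linear scan in $O(\ell_v)$ time. At query time,
\[
S_\alpha(g_v,h_v)=S_0(g_v,h_v)-\alpha\ell_v ,
\]
so the total value costs $O(1)$.

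\emph{Prefix field.} Build the upper convex hull of $P_v$. Because the points arrive sorted by first coordinate, the monotone-chain scan~\cite{graham1972efficient} does this in $O(\ell_v)$ time and space. Store the hull vertices together with the breakpoints between consecutive vertices; the breakpoint for vertices $(t_1,y_1)$ and $(t_2,y_2)$ is the slope $(y_2-y_1)/(t_2-t_1)$. These slopes are monotone along the hull.

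For a query $\alpha$, binary search for the hull vertex maximizing $y-\alpha x$, in $O(\log\ell_v)$ time. Two points need care.
\begin{enumerate}[(i)]
    \item Only hull vertices can maximize a linear functional, since $\alpha\ge0$ leaves the direction unrestricted on the upper hull. Thus the optimum prefix value is attained at a vertex.
    \item We need the \emph{shortest} optimum prefix, in line with \cref{eq:shortest_answer}. When $\alpha$ lies exactly on a breakpoint, two adjacent vertices tie, and we return the one with the smaller first coordinate. Collinear non-vertex points between them can be ignored: they have the same value but lie strictly between two vertices in $t$, so the smaller-$t$ vertex is still shortest. To realize this deterministically, we apply the right-active convention to the breakpoint list, as in \cref{lemma:sakaiDS}.
\end{enumerate}
The vertex found gives the prefix length $t$, hence the range $[g_v..g_v+t-1]$, and its value follows from $p_v(t)-\alpha t$ in $O(1)$.

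\emph{Suffix field.} This is symmetric, using the suffix point set.

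\emph{Best field.} Invoke \cref{lemma:sakaiDS} on $X_0[g_v..h_v]$. This gives $O(\ell_v)$ space and $O(\ell_v\log^2\ell_v)$ construction time, plus the linear right-active post-scan. Queries are answered by predecessor binary search in $O(\log\ell_v)$ time, and the value is recovered in $O(1)$ from the stored sums. There is one catch: the value computation in the observations after \cref{lemma:sakaiDS} uses the global prefix-sum array $C_0$. I will store a local prefix-sum array of length $\ell_v+1$ in the node, which keeps everything within $O(\ell_v)$ space; alternatively, a global $C_0$ can be shared.

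\emph{Totals.} Space is $O(\ell_v)$, construction is dominated by the $O(\ell_v\log^2\ell_v)$ term, and a query costs $O(\log\ell_v)$. For $\ell_v=1$ the bounds are read as $O(1)$.

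The main obstacle is the tie-breaking at breakpoints in the prefix and suffix hull search. Getting the shortest optimum rather than an arbitrary optimum requires care at equal slopes, and also at $\alpha=0$, where the maximum-$y$ point of minimum $t$ must be selected. I would handle this exactly as in the threshold construction of \cref{sec:suffix}: identify the minimum-$t$ maximum point during the hull construction, and associate each breakpoint with the vertex immediately to its right.
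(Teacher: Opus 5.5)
Your proposal is correct and follows the same route as the paper: store the total, the two upper-hull structures for prefixes and suffixes, and the OMSS structure of \cref{lemma:sakaiDS}, with the $O(\ell_v\log^2\ell_v)$ construction cost coming solely from the latter. Your extra handling of breakpoint ties via the right-active convention and of the prefix sums needed for the best field makes explicit what the paper leaves implicit. The justification in (i) is oddly worded; the actual reason is that $(-\alpha,1)$ points upward, so a maximizer of $y-\alpha x$ can always be taken among upper-hull vertices.
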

\begin{proof}
We can compute the total score
\(
    S_\alpha(g_v,h_v)
    =
    S_0(g_v,h_v)-\alpha\ell_v
\)
in constant time.
The prefix and suffix structures described above each use $O(\ell_v)$ space, can be constructed in $O(\ell_v)$ time, and can be queried in $O(\log\ell_v)$ time.
Finally, \cref{lemma:sakaiDS} gives an $O(\ell_v)$-space OMSS data structure that can be constructed in
$O(\ell_v\log^2\ell_v)$ time and queried in $O(\log\ell_v)$ time.
Combining the four fields gives the stated bounds.
\end{proof}

\subsection{Range Decomposition and Queries}\label{sec:tradeoff_query}

Given a query range $[a..b]$, we compute a canonical cover with respect to $\mathcal T_\tau$ as follows.
Starting at the root, we discard a node whose range is disjoint from $[a..b]$, report a node whose range is contained in $[a..b]$, and recurse into all children of every remaining node.
The reported ranges are pairwise disjoint, occur in left-to-right order, and form a partition of $[a..b]$.

\begin{lemma}\label{lem:tradeoff_cover}
We can partition every range $[a..b]\subseteq[1..n]$ into $O(\Delta \tau)=O(\tau n^{1/\tau})$ node ranges of $\mathcal T_\tau$.
We can compute the partition in $O(\Delta \tau)$ time.
\end{lemma}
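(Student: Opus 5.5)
The plan is to mimic the segment-tree argument used for the binary SBST in \cref{sec:suffix}, now with fan-out at most $\Delta$ and, by \cref{lem:tradeoff_hierarchy}, depth at most $\tau$. Consider the recursive procedure described before the statement. It discards nodes disjoint from $[a..b]$, reports nodes contained in $[a..b]$, and recurses into the children of every remaining node; call the remaining nodes \emph{partial}.

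First we verify that the reported ranges form a partition of $[a..b]$. Every position $i\in[a..b]$ lies in the root range. Descending along the unique root-to-leaf path containing $i$, every node on it intersects $[a..b]$. The first node on this path that is contained in $[a..b]$ is reported, and such a node exists because the leaf has range $[i..i]$. Note that a leaf of $\mathcal T_\tau$ has length one: $\mathcal T_\tau$ only stops subdividing at length one, since after $\tau$ steps all ranges already have length one by \cref{lem:tradeoff_hierarchy}. Hence no leaf is partial, and partial nodes always have children to recurse into. No ancestor of a reported node is reported, because the recursion stops there. Disjointness then follows from disjointness of ranges at each depth together with this ancestor argument. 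The left-to-right order follows by visiting children in order.

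The key counting step is to show that at every depth there are at most two partial nodes. A partial node intersects $[a..b]$ but is not contained in it, so its range contains $a$ or $b$ together with a position outside $[a..b]$. Since ranges at one depth are pairwise disjoint, at most one range contains $a$ and at most one contains $b$, so there are at most two partial nodes per depth. Every reported or discarded node other than the root is a child of a partial node. Each partial node has at most $\Delta$ children, and there are at most $2\tau$ partial nodes over all depths. Hence the total number of visited nodes, and in particular of reported ranges, is at most $1+2\tau\Delta=O(\tau\Delta)$.

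Finally, we bound $\Delta$ and the running time. Since $\tau\ge1$, we have $\Delta=\ceil{n^{1/\tau}}\le n^{1/\tau}+1\le 2n^{1/\tau}$, so $O(\tau\Delta)=O(\tau n^{1/\tau})$. Each visited node costs $O(1)$ time if it stores its range endpoints and its ordered children list, so the running time is proportional to the number of visited nodes, i.e.\ $O(\Delta\tau)$. I expect the only mildly delicate point to be the claim that at most two nodes per depth are partial. One could refine the traversal to locate the relevant children by arithmetic, since children have near-equal lengths, instead of scanning all children, but that refinement is unnecessary for the stated bound.
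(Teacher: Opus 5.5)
Your proposal is correct and follows essentially the same route as the paper: at most two nodes per depth (those containing $a$ and $b$) are partially covered, each has at most $\Delta$ children, and the depth bound of \cref{lem:tradeoff_hierarchy} gives $O(\Delta\tau)$ reported ranges and traversal time, with $\Delta\le 2n^{1/\tau}$. You additionally justify why the reported ranges form a partition and why no leaf can be partial; the paper leaves both points implicit.
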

\begin{proof}
At every depth, at most two node ranges intersect $[a..b]$ without being contained in $[a..b]$: the node containing $a$ and the node containing $b$.
Every such node has at most $\Delta$ children.
Apart from at most two children that remain only partially covered, every intersecting child can be reported immediately.
Thus, we report at most $2 \Delta$ node on each level.
Since $\mathcal T_\tau$ has depth at most $\tau$ by \cref{lem:tradeoff_hierarchy}, the canonical cover contains $O(\Delta \tau)$ ranges.
The same traversal computes the cover in $O(\Delta \tau)$ time.
Since $\ceil{x}\le 2x$ for every $x\ge 1$, we have $\Delta=O(n^{1/\tau})$.
\end{proof}

Let $I_1,\dots,I_c$ be the canonical cover of $[a..b]$, ordered from left to right.
For every $k\in[1..c]$, we query the data structure of \cref{lem:tradeoff_node_summary} at the node representing $I_k$ and obtain $\mathcal S_\alpha(I_k)$.
We then merge these summaries from left to right using \cref{lem:mss_summary_merge}.
After processing $I_1,\dots,I_k$, the maintained summary is exactly
$\mathcal S_\alpha(I_1\cup\cdots\cup I_k)$.
Consequently, the $\mathsf{best}$ component after processing all $c$ ranges is an answer to the R-OMSS query on $[a..b]$.

\begin{theorem}\label{thm:range_omss_tradeoff}
Let $\tau$ be an integer with $1\le \tau\le\ceil{\log_2 n}$.
Given an array $X_0$ of size $n$, 
there exists an $O(\tau n)$-space data structure that can answer an R-OMSS query on $X_0$ in \( O\bigl(\tau n^{1/\tau}\log n\bigr) \)
time for every non-negative offset $\alpha \ge 0$.
We can construct this data structure in $O(\tau n\log^2 n)$ time.
\end{theorem}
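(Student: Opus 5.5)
The data structure is the tree $\mathcal T_\tau$ together with the node structures of \cref{lem:tradeoff_node_summary} at every node, plus the prefix-sum array $C_0$. The proof has three parts: space, query time, and construction time. Correctness needs no new argument. The cover of \cref{lem:tradeoff_cover} partitions $[a..b]$ into consecutive node ranges, and by \cref{lem:mss_summary_merge} and induction on $k$ the left-to-right merge maintains $\mathcal S_\alpha(I_1\cup\cdots\cup I_k)$. Its $\mathsf{best}$ field after the last merge is therefore a shortest answer range for $[a..b]$, and its value follows in constant time from $C_0$.

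\emph{Space.} By \cref{lem:tradeoff_node_summary}, node $v$ uses $O(\ell_v)$ space, plus $O(1)$ words for its range and child pointers. By \cref{lem:tradeoff_hierarchy}, the lengths at each depth sum to at most $n$ and there are at most $\tau+1$ depths (counting the root). Hence the length-dependent part totals $O(\tau n)$. For the $O(1)$-per-node overhead, I observe that every internal node has at least two children, because $\min(\Delta,\ell_v)\ge 2$ when $\ell_v\ge 2$ and $n\ge 2$. So the number of nodes is at most twice the number of leaves, and the leaves are disjoint nonempty ranges, giving $O(n)$ nodes. Together with $C_0$, the space is $O(\tau n)$.

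\emph{Query time.} Computing the cover takes $O(\Delta\tau)$ time and yields $c=O(\tau n^{1/\tau})$ ranges by \cref{lem:tradeoff_cover}. Obtaining each summary costs $O(\log\ell_v)\subseteq O(\log n)$ by \cref{lem:tradeoff_node_summary}. Each merge costs $O(1)$. The total is $O(\tau n^{1/\tau}\log n)$. To locate children during the traversal in constant time each, every node stores its children in an array in left-to-right order; this fits within the node-count bound above.

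\emph{Construction time.} Node $v$ is built in $O(\ell_v\log^2\ell_v)\subseteq O(\ell_v\log^2 n)$ time. Summing over each depth with \cref{lem:tradeoff_hierarchy} gives $O(n\log^2 n)$ per depth, and over $\tau+1$ depths gives $O(\tau n\log^2 n)$. Building the tree shape (computing the near-equal partitions) and $C_0$ takes $O(n)$ time. The only delicate point is the accounting: the tree has $\tau+1$ depths rather than $\tau$, and there is per-node overhead. Both are absorbed since $\tau\ge1$ and the node count is $O(n)$.
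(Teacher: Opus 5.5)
Your proposal is correct and follows the paper's proof. Space and construction time come from the per-depth length bound of \cref{lem:tradeoff_hierarchy} multiplied by the $\tau+1$ depths, and query time comes from the cover size of \cref{lem:tradeoff_cover} multiplied by the $O(\log n)$ cost per summary, with your extra bookkeeping (the $O(n)$ node count from at-least-binary branching, the child arrays, and $C_0$ for recovering values) being sound and only making explicit overheads the paper absorbs without comment.
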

\begin{proof}
By \cref{lem:tradeoff_hierarchy}, the total length of the node ranges at every depth is at most $n$.
By \cref{lem:tradeoff_node_summary}, the data structures at one depth therefore use $O(n)$ space and can be constructed in $O(n\log^2 n)$ time.
There are at most $\tau+1=O(\tau)$ depths, yielding $O(\tau n)$ space and $O(\tau n\log^2 n)$ construction time.

By \cref{lem:tradeoff_cover}, the query range has a canonical cover of size
$c=O(\tau n^{1/\tau})$.
Obtaining one node summary takes $O(\log n)$ time by \cref{lem:tradeoff_node_summary}, and merging two summaries takes constant time by \cref{lem:mss_summary_merge}.
The total query time is therefore
$O(c\log n)=O(\tau n^{1/\tau}\log n)$.
\end{proof}

For constant $\tau$, \cref{thm:range_omss_tradeoff} uses linear space and gives a sublinear query time.
In particular, we obtain the following consequence.

\begin{corollary}\label{cor:range_omss_linear_space}
For every fixed constant $\varepsilon>0$, there exists an $O(n)$-space data structure that can answer an R-OMSS query on $X_0$ in
\(
    O(n^\varepsilon\log n)
\)
time.
The data structure can be constructed in $O(n\log^2 n)$ time, where the constants hidden in the asymptotic notation depend on $\varepsilon$.
\end{corollary}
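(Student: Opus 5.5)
The plan is to instantiate \cref{thm:range_omss_tradeoff} with a constant trade-off parameter depending only on $\varepsilon$. Concretely, set $\tau:=\ceil{1/\varepsilon}$, so $1/\tau\le\varepsilon$ and $n^{1/\tau}\le n^{\varepsilon}$ for $n\ge 1$. The theorem requires $1\le\tau\le\ceil{\log_2 n}$. For all $n$ with $\ceil{\log_2 n}\ge\tau$, the theorem applies directly. It then yields $O(\tau n)=O(n)$ space, $O(\tau n^{1/\tau}\log n)=O(n^{\varepsilon}\log n)$ query time, and $O(\tau n\log^2 n)=O(n\log^2 n)$ construction time, with $\tau$ absorbed into the constants as the statement allows.

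The only remaining point is the finite range of small inputs with $\ceil{\log_2 n}<\tau$, that is, $n\le 2^{\tau-1}$, which is bounded by a constant depending on $\varepsilon$. For these we could take $\tau':=\ceil{\log_2 n}$ instead. It is simpler, though, to note that $n$ is then $O_\varepsilon(1)$: Kadane's algorithm on $X_\alpha[a..b]$ answers any query in $O(n)=O_\varepsilon(1)$ time using the $O(n)$-space array $X_0$ (with $C_0$ giving the value), and construction is trivially $O(n)$. The degenerate case $n=1$ is covered the same way. Hence the bounds hold for all $n$ with constants depending on $\varepsilon$.

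I expect no real obstacle beyond this boundary bookkeeping. The one thing to verify is that the hidden constant in \cref{thm:range_omss_tradeoff} grows at most linearly in $\tau$ and is otherwise uniform. This is the case because the proof there only uses $\Delta=\ceil{n^{1/\tau}}\le 2n^{1/\tau}$ and counts $\tau+1$ depths, so fixing $\tau$ turns those factors into constants.
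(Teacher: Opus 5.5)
Your proposal is correct and matches the paper's proof, which also sets $\tau=\ceil{1/\varepsilon}$ in \cref{thm:range_omss_tradeoff} and absorbs $\tau$ into the constants. Your extra treatment of the finitely many small $n$ with $\ceil{\log_2 n}<\tau$ fills a detail the paper passes over; there, Kadane's algorithm on the stored array gives the claimed bounds.
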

\begin{proof}
Set $\tau=\ceil{1/\varepsilon}$ in \cref{thm:range_omss_tradeoff}.
Then $\tau$ is a constant and $n^{1/\tau}\le n^\varepsilon$.
\end{proof}

The parameter $\tau$ continuously trades space for query time.
For example, setting
$\tau=\left\lceil\frac{\log n}{\log\log n}\right\rceil$
gives
\( O\!\left(\frac{n\log n}{\log\log n}\right) \)
space and 
    \( O\!\left(\frac{\log^3 n}{\log\log n}\right) \)
    query time.
At the other endpoint, $\tau=\ceil{\log_2 n}$ yields a binary hierarchy and recovers the
$O(n\log n)$-space and $O(\log^2 n)$-query bounds of \cref{thm:range_omss}.

\section{Special Types of Inputs}\label{sec:special}
Based on the results of the previous section, 
we analyze and devise specializations of our data structures for R-OMSS queries on $X_0$ under special input types. 
Specifically, we consider the following three scenarios: 
(1) the offset $\alpha$ is restricted to positive integers, 
(2) $X_0$ is an array over an alphabet of size $\sigma$, and 
(3) $X_0$ is given in run-length encoded (RLE) form.

\subsection{Integer Offsets}
When the offset $\alpha$ is restricted to positive integers, the data structure of \cref{thm:range_omss} can be improved by the following modifications. First, we introduce a lemma that states the data structure of \cref{lemma:sakaiDS} can be improved under this restriction.

\begin{lemma}\label{lem:sakai_integer}
When the offset $\alpha$ is restricted to positive integers, there exists an $s(n, U)$-space data structure that answers an OMSS query on $X_0$ in $t(n, U)$ time, where $s(n, U)$ and $t(n, U)$ denote the space and query-time complexities, respectively, of a predecessor data structure storing $n$ integers from a universe of size $U$.
\end{lemma}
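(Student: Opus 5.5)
The plan is to start from the catalog $L=((\beta_0,J_0),\ldots,(\beta_k,J_k))$ of \cref{lemma:sakaiDS}, which has $k+1\le n$ entries and obeys the right-active convention, and to replace the binary search over real breakpoints by a predecessor structure over integer keys. For an integer offset $\alpha$, the answer is $J_j$ for the unique $j$ with $\beta_j\le\alpha<\beta_{j+1}$. Since $\alpha$ is an integer, this condition is equivalent to $\ceil{\beta_j}\le\alpha<\ceil{\beta_{j+1}}$.

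So I would replace every breakpoint $\beta_j$ by the integer key $\kappa_j:=\ceil{\beta_j}$. The keys are non-decreasing in $j$, but several consecutive breakpoints may share the same ceiling. In that case every integer offset in that cell of the catalog lies in the interval of the last of them. I therefore keep only the entry with the largest index $j$ among each group of equal keys. A linear scan over $L$ suffices, because $L$ is sorted.

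The resulting keys $\kappa_{j_0}<\kappa_{j_1}<\cdots$ are at most $n$ distinct integers. Each key is stored together with its interval $J_j$ as satellite data. For an integer query $\alpha$, the predecessor of $\alpha$ among these keys is some $\kappa_{j}$. I then need to check that $J_j$ is the selected shortest answer at $\alpha$. We have $\beta_j\le\kappa_j\le\alpha$. We also have $\alpha<\kappa_{j'}$ for the next retained key, and every discarded breakpoint between them has ceiling $\le\kappa_j$ or equal to $\kappa_{j'}$. Together with the choice of the last entry in each group, this gives $\alpha<\beta_{j+1}$. The value of the answer is then computed in constant time from $C_0$, as in the first observation after \cref{lemma:sakaiDS}. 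A query below the smallest key cannot occur, since $\beta_0=0$ gives key $0$ and $\alpha>0$.

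The remaining step is fixing the universe $U$. I would take $U$ to be the range of integer offsets that matter. Every breakpoint lies in $[0,\max_i X_0[i]-\min_i X_0[i]]$, because beyond $\max X_0$ the answer is constant. So offsets $\alpha\ge\ceil{\beta_k}$ map to the last key, and larger queries can be clamped in $O(1)$ time. The universe is then the integers in $[0..\ceil{\beta_k}]$, up to an additive shift, and the space is $s(n,U)$ plus $O(n)$ for the intervals, which is absorbed since $s(n,U)=\Omega(n)$.

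The step I expect to need the most care is the ceiling-collapsing argument at ties: when $\alpha$ equals an integer breakpoint exactly, the right-active convention must still return the right shortest answer. The inequality $\beta_j\le\alpha$ with $\kappa_j\le\alpha$ handles this, since $\beta_j\le\ceil{\beta_j}$.
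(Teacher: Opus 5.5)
Your core argument is correct and essentially matches the paper's proof. You round every breakpoint up, keep only the last entry among equal ceilings, and store the resulting keys in a predecessor structure. Your check that the predecessor of an integer $\alpha$ yields the $j$ with $\beta_j\le\alpha<\beta_{j+1}$, including exact integer breakpoints, is more explicit than the paper's.

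You diverge from the paper only in how you handle the universe, and two things there need adjusting.

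First, the paper reads $U$ as the bound on the query offsets. This is how the lemma is used later: offsets lie in $[1..U]$, giving $O(\log\log U)$ query time. The paper then simply discards all keys with $\ceil{\beta_j}>U$, since no query in $[1..U]$ can have them as predecessor. You instead take the key universe to be $[0..\ceil{\beta_k}]$ and clamp queries. That makes the universe depend on the magnitudes in $X_0$, and it can be far larger than the offset universe. You should add the discarding step, so that all keys lie in $[0..\min(U,\ceil{\beta_k})]$.

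Second, your claim that every breakpoint lies in $[0,\max_i X_0[i]-\min_i X_0[i]]$ is false. For $X_0=[5,6]$, the full range $[1..2]$ is the shortest answer exactly for $\alpha<5$. So $5$ is a breakpoint, although $\max_i X_0[i]-\min_i X_0[i]=1$. The correct bound, which is what your own justification shows, is $[0,\max(0,\max_i X_0[i])]$. The rest of your argument only uses $\ceil{\beta_k}$, so this error is harmless.
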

\begin{proof}
    We modify the data structure of \cref{lemma:sakaiDS} as follows.
    For each (offset, interval) pair $(\alpha, I)$ in $L$ (the list defined in the statement of \cref{lemma:sakaiDS}), we replace it with the integer pair $(\ceil{\alpha}, I)$.
    If there exist two distinct pairs $(\alpha_1, I_1)$ and $(\alpha_2, I_2)$ in $L$ such that $\alpha_1 < \alpha_2$ and $\ceil{\alpha_1} = \ceil{\alpha_2}$, 
    we discard the pair $(\alpha_1, I_1)$ and maintain only $(\ceil{\alpha_2}, I_2)$ in $L$. 
    We also discard all pairs $(\alpha, I)$ with $\ceil{\alpha} > U$.
    We then answer the query by performing the binary search over the offsets by maintaining a predecessor data structure on $L$.
\end{proof}

For example, if we use a y-fast trie~\cite{willard1983log} as the predecessor data structure, for which $s(n, U) = O(n)$ and $t(n, U) = O(\log \log U)$, 
then an OMSS query on $X_0$ can be answered in $O(\log \log U)$ time using $O(n)$ space; this is $O(\log\log n)$ when $U=n^{O(1)}$.

Next, for each canonical range $[g..h]$ of $X_0$, let $T_{[g..h]}$ be the rooted subtree of the SBST $T$ on $X_0$ (defined in \cref{sec:suffix}) whose root corresponds to $[g..h]$.
Let $I_0[g..h]$ be the shortest suffix R-OMSS answer on $[g..h]$ at offset $0$.
We define a catalog $L_{[g..h]}$ containing the sentinel pair
\(
    (0,I_0[g..h])
\)
and, for every distinct positive threshold offset $\beta$ stored at a node of $T_{[g..h]}$, the pair
$(\beta,I_\beta[g..h])$, where $I_\beta[g..h]$ is the shortest suffix R-OMSS answer on the root range $[g..h]$ at offset $\beta$.
If the same positive threshold occurs at several nodes, we store only one pair for that key.
In particular, for $g=h$, the catalog consists only of $(0,[g..g])$.

The catalog has size $O(h-g+1)$.
To prove its correctness, write its distinct keys as
\(
    0=\beta_0<\beta_1<\cdots<\beta_k
\)
and set $\beta_{k+1}:=+\infty$.
For every $j\in[0..k]$ and every $\alpha\in[\beta_j,\beta_{j+1})$, every comparison with a node threshold made during the SBST descent has the same outcome as at offset $\beta_j$:
there is no stored node threshold strictly between $\beta_j$ and $\alpha$, and equality is routed to the right child.
Consequently, the descent reaches the same leaf, so the answer is $I_{\beta_j}[g..h]$.
This also proves correctness when $\alpha=\beta_j$, since routing equality to the right implements the right-active convention.
A descendant threshold that does not change the answer on the root range merely creates a redundant entry carrying the same root answer as an adjacent entry; keeping such thresholds therefore does not affect correctness.
The sentinel at offset $0$ ensures that every non-negative query offset has a predecessor.

Therefore, $L_{[g..h]}$ plays the same role as the data structure of \cref{lemma:sakaiDS} for suffix R-OMSS queries restricted to $[g..h]$.
By maintaining these catalogs for all canonical ranges of $X_0$ and applying the rounding and duplicate-removal modification from the proof of \cref{lem:sakai_integer}, we obtain the following lemma, since the total length of all canonical ranges of $X_0$ is $O(n \log n)$.

\begin{lemma}\label{lem:suffix_integer}
   When the offset $\alpha$ is restricted to positive integers and the query range is restricted to a canonical range of $X_0$, 
   there exists a data structure that can answer a suffix R-OMSS query on $X_0$ in $t(n \log n, U)$ time using $\sum_I s(|L_I|,U)$ space, where the sum is taken over all canonical ranges $I$ of $X_0$.
   Here, $s(n, U)$ and $t(n, U)$ denote the space and query-time complexities, respectively, of a predecessor data structure storing $n$ integers from a universe of size $U$.
\end{lemma}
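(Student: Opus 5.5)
The plan is to take the catalog $L_{[g..h]}$ defined above for every canonical range $[g..h]$ and apply the rounding and duplicate-removal step from the proof of \cref{lem:sakai_integer} to it. The construction proceeds in three steps. First, replace every key $\beta$ of $L_{[g..h]}$ by $\ceil{\beta}$. Whenever several keys round to the same integer, keep only the pair with the largest original key. Discard all pairs whose rounded key exceeds $U$. Second, store the surviving integer keys of every catalog in its own predecessor structure over the universe $[0..U]$, with the associated answer range as satellite data. For a query with integer offset $\alpha$ and canonical range $[g..h]$, we locate the node of $T$ for $[g..h]$ in $O(\log n)$ time by a descent, or in $O(1)$ time via a lookup table if one is stored. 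We then run one predecessor query on the structure of $L_{[g..h]}$ and return the attached range.

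The correctness argument is the main step. By the discussion preceding the lemma, $L_{[g..h]}$ is a valid catalog for suffix R-OMSS on $[g..h]$: for $\alpha\in[\beta_j,\beta_{j+1})$ the answer is $I_{\beta_j}[g..h]$. Fix an integer $\alpha$ and let $\beta_j$ be its real predecessor. Since $\beta_j\le\alpha$ and $\alpha$ is an integer, we have $\ceil{\beta_j}\le\alpha$. Every later key $\beta_{j'}$ with $j'>j$ satisfies $\beta_{j'}>\alpha$, so $\ceil{\beta_{j'}}>\alpha$. Hence the integer predecessor of $\alpha$ among the rounded keys is $\ceil{\beta_j}$.

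It remains to check which pair survives under the key $\ceil{\beta_j}$. Among the pairs sharing that rounded key, the retained one has the largest original key. Any pair sharing the rounded key $\ceil{\beta_j}$ has an original key at most $\ceil{\beta_j}\le\alpha$, because keys above $\alpha$ round to values above $\alpha$. The largest such key is therefore $\beta_j$ itself, so exactly the pair $(\beta_j, I_{\beta_j}[g..h])$ survives and is returned. Pairs with rounded key above $U$ are never the predecessor of an admissible offset $\alpha\le U$, so discarding them is harmless. The sentinel $(0,\cdot)$ guarantees that a predecessor always exists.

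For the complexity bounds, each catalog has $|L_I|=O(|I|)$ keys, so its predecessor structure uses $s(|L_I|,U)$ space. Summing over all canonical ranges gives the claimed space $\sum_I s(|L_I|,U)$. Since $|L_I|\le |I|\le n\le n\log n$ and $t$ is monotone in its first argument, one predecessor query costs at most $t(n\log n,U)$. Locating the node is dominated by this cost, or is constant time with an auxiliary table. I expect the only delicate point to be the tie handling in the rounding step, since it must preserve the right-active convention at breakpoints. The argument above handles this because keeping the largest original key per rounded key selects the last catalog entry at or below every integer.
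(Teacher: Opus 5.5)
Your proposal is correct and follows the paper's route. It reuses the catalog-correctness argument that precedes the lemma and then applies the rounding and duplicate-removal step of \cref{lem:sakai_integer}, and your explicit check that the pair surviving under key $\ceil{\beta_j}$ is exactly $(\beta_j, I_{\beta_j}[g..h])$ fills in a detail the paper leaves implicit. One small caveat is that an $O(\log n)$ descent to locate the node is not dominated by $t(n\log n,U)$ in general (e.g., for y-fast tries), so you should rely on the constant-time lookup, or on the node being supplied by the canonical-cover traversal as in the paper's corollary.
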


Finally, by combining the data structures of \cref{lem:sakai_integer} and \cref{lem:suffix_integer}, together with $C_0$, we obtain the following corollary.

\begin{corollary}
Assume a word-RAM model with word size $w=\Omega(\log U)$. 
When the offset $\alpha$ is restricted to positive integers from the universe $[1..U]$, 
there exists an $O(n\log n)$-space data structure that answers an R-OMSS query on
$X_0$ in $O(\log n\log\log U)$ time. 
If $U=n^{O(1)}$, the query time is $O(\log n\log\log n)$.
\end{corollary}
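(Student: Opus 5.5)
The plan is to reuse the query algorithm of \cref{thm:range_omss} verbatim and replace each $O(\log n)$-time subroutine by a predecessor search over integer keys. The data structure stores three parts. First, the prefix-sum array $C_0$. Second, for every canonical range $I$ of $X_0$, the rounded catalog $L_I$ of \cref{lem:suffix_integer} for suffix queries, together with its symmetric prefix counterpart. Third, for every canonical range, the rounded OMSS list of \cref{lem:sakai_integer}. Each catalog is instantiated with a y-fast trie~\cite{willard1983log}, which has $s(m,U)=O(m)$ space and $t(m,U)=O(\log\log U)$ query time under the assumption $w=\Omega(\log U)$.

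For the space bound, each catalog $L_I$ and each OMSS list on a canonical range $I$ has size $O(|I|)$. The total length of all canonical ranges is $O(n\log n)$, so the total space is $\sum_I O(|I|)=O(n\log n)$. The array $C_0$ adds only $O(n)$.

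For a query $(\alpha,[a..b])$, proceed in three steps.
\begin{enumerate}
    \item Decompose $[a..b]$ into $c=O(\log n)$ canonical ranges $I_1,\dots,I_c$ in $O(\log n)$ time, as in \cref{sec:suffix}.
    \item For each $I_k$, assemble the summary $\mathcal S_\alpha(I_k)$. The total score comes from $C_0$ in $O(1)$ time. The $\mathsf{pre}$ and $\mathsf{suf}$ fields come from one predecessor query each on the prefix and suffix catalogs of $I_k$. The $\mathsf{best}$ field comes from one predecessor query on the OMSS list of $I_k$. Each of these three predecessor queries costs $O(\log\log U)$, and each returned interval is converted into its value in $O(1)$ time via $C_0$.
    \item Merge the $c$ summaries left to right by \cref{lem:mss_summary_merge}, at $O(1)$ per merge.
\end{enumerate}
The total query time is $O(\log n+\log n\log\log U)=O(\log n\log\log U)$. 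When $U=n^{O(1)}$ we have $\log\log U=O(\log\log n)$, which gives the second claim.

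The step needing care is correctness of the rounding for integer offsets. Take an integer $\alpha$ and the original catalog breakpoints $\beta_j\le\alpha<\beta_{j+1}$. The rounding map $\beta\mapsto\ceil{\beta}$ gives $\ceil{\beta_j}\le\alpha$ and $\ceil{\beta_{j+1}}>\alpha$, the latter because $\beta_{j+1}>\alpha$ and $\alpha$ is an integer. Among all entries whose rounded key equals $\ceil{\beta_j}$, the construction keeps the one with the largest original offset. Since predecessor search for $\alpha$ lands on key $\ceil{\beta_j}$, the retained entry is exactly $(\beta_j,J_j)$. Hence the predecessor search returns the same shortest answer as the unrounded catalog, including the right-active convention at breakpoints.

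Finally, discarding keys larger than $U$ is harmless, because queries satisfy $\alpha\le U$. The sentinel key $0$ (or $\ceil{\beta_0}$) guarantees that every query offset $\alpha\ge1$ has a predecessor. These observations, together with \cref{lem:sakai_integer} and \cref{lem:suffix_integer}, complete the argument.
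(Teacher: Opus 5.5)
Your proposal is correct and follows the paper's proof. Like the paper, you store y-fast tries over the rounded catalogs of \cref{lem:sakai_integer,lem:suffix_integer} (plus the prefix counterpart) for every canonical range, giving $O(n\log n)$ total entries, and answer a query with a constant number of $O(\log\log U)$-time predecessor queries per range of the $O(\log n)$-size canonical cover, merged by \cref{lem:mss_summary_merge}. Your explicit check that rounding keys up and keeping the largest original offset per rounded key preserves the predecessor answer for integer $\alpha$ is a useful detail that the paper leaves implicit in the proof of \cref{lem:sakai_integer}.
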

\begin{proof}
We use y-fast tries as the predecessor structures of \cref{lem:sakai_integer,lem:suffix_integer} (and the symmetric prefix counterpart of \cref{lem:suffix_integer}).
Each y-fast trie is static, uses linear space in the size of its catalog, and answers a predecessor query in $O(\log\log U)$ worst-case time.
The total number of entries in all catalogs is $O(n\log n)$ 
because a catalog associated with a canonical range $I$ has $O(|I|)$ entries,
and the total length of all canonical ranges is $O(n\log n)$. 
Hence, the total space is $O(n\log n)$.
An R-OMSS query uses a canonical cover of size $O(\log n)$ and performs a constant number of predecessor queries for each range in the cover. 
The resulting summaries are merged in constant time per range. 
Thus, the total query time is $O(\log n\log\log U)$.
\end{proof}

\subsection{Bounded Alphabet Size}
While the number of distinct answers represented by the breakpoint list of \cref{lemma:sakaiDS} can be $\Theta(n)$,
this number can be less if $X_0$ is an array over an alphabet $\Sigma\subset\mathbb R$ of bounded size $\sigma$.
In particular, if $\sigma = 2$, we show that the size of this list is $\Theta(n^{2/3})$ in the worst case.

In this subsection, an \emph{answer range} means any range attaining the maximum value, before selecting a shortest answer.
We call two ranges $I$ and $J$ \emph{compatible} if
\(
    (|I|, S_0(I))=(|J|, S_0(J)).
\)
Equivalently, $S_\alpha(I)=S_\alpha(J)$ for every offset $\alpha$.
For a binary array, this is equivalent to saying that the two corresponding subarrays are permutations of each other.
Two offsets $\alpha$ and $\beta$ are called \emph{non-compatible} if no answer range of the OMSS query with offset $\alpha$ is compatible with an answer range of the OMSS query with offset $\beta$.

\subparagraph{Roadmap}
The basic idea is study inputs with a high number of non-compatible offsets.
Geometrically speaking, such non-compatible offsets emerge when the edges of the upper convex hull of the points have pairwise distinct slopes.
Such slope changes naturally occur in a strictly monotonically decreasing order when drawing a quarter of a circle from its leftmost to topmost point.
The main challenge is thus to find the correct discretization of the circle quarter to be expressible with binary alphabets.
In the setting of binary alphabets, Christoffel words are used in literature to express horizontal and vertical moves by 0 and 1, respectively.
In what follows, we combine some geometric discretization ideas with the strictly monotonically decreasing slopes expressible by a family of Christoffel words.
For that, we start with some definitions.

For every length $\ell\in[1..n]$, define
\[
    f_{X_0}(\ell)
    :=
    \max_{1\le i\le n-\ell+1}
    \sum_{j=i}^{i+\ell-1}X_0[j].
\]
Thus, $f_{X_0}(\ell)$ is the maximum number of ones in a length-$\ell$ subarray of $X_0$.
The value of an OMSS query with offset $\alpha$ is
\begin{equation}\label{eq:binary_omss_envelope}
    \max_{\ell\in[1..n]}
    \bigl(f_{X_0}(\ell)-\alpha\ell\bigr).
\end{equation}
Consequently, the compatible answer types that can occur for some offset are represented by exposed points of the upper convex hull of
\(
    P_{X_0}
    :=
    \{(\ell,f_{X_0}(\ell))\mid \ell\in[1..n]\}.
\)

We first bound the number of vertices of this hull.
A \emph{convex chain} is a list $p_0,\ldots,p_h$ of integer points in $[0..n]^2$ with increasing first coordinates and non-decreasing second coordinates
with the property that for the edges $e_i:=p_i-p_{i-1} = (a_i, b_i)$, $i\in[1..h]$, the slopes $\frac{b_i}{a_i}$ of the edges are strictly decreasing.
The \emph{primitive direction vector} of an edge $e_i$ is defined as
\(
\left(
\frac{a}{\gcd(|a|,|b|)},
\frac{b}{\gcd(|a|,|b|)}
\right).
\)
Edges have the same primitive direction vector if and only if they have the same slope.
On a strictly convex chain, two different edges cannot have the same slope.
This allows us to count edges by counting possible primitive directions:

\begin{lemma}\label{lem:binary_lattice_chain}
Let $p_0,\ldots,p_h$ be a convex chain. 
Then $h=O(n^{2/3})$.
\end{lemma}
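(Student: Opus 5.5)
The argument is the classical lattice-point count behind the Jarník-type bound for convex lattice chains. It rests on two constraints on the edge vectors $e_i=(a_i,b_i)$. First, the edges have pairwise distinct slopes, hence pairwise distinct primitive direction vectors. Second, since the chain lives in $[0..n]^2$ with increasing first coordinates and non-decreasing second coordinates, every $a_i\ge 1$, every $b_i\ge 0$, and the coordinate sums satisfy $\sum_i a_i\le n$ and $\sum_i b_i\le n$. Hence $\sum_i (a_i+b_i)\le 2n$.

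To turn this into a bound on $h$, I will replace each edge by its primitive direction $(a_i',b_i')$, with $a_i'\ge 1$ and $b_i'\ge 0$. This only decreases the $\ell_1$-norm, since $a_i'+b_i'\le a_i+b_i$. So we obtain $h$ pairwise distinct nonnegative integer vectors, all with first coordinate at least one, whose total $\ell_1$-norm is at most $2n$.

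The counting step is then as follows. The number of nonnegative integer vectors with $\ell_1$-norm exactly $s$ is at most $s+1$, so at most $\sum_{s\le k}(s+1)=O(k^2)$ vectors have norm at most $k$. For a fixed count $h$, the total norm of $h$ distinct such vectors is minimized by taking the vectors of smallest norm first. Let $k$ be the largest integer such that fewer than $h/2$ vectors have norm at most $k$. Then there are at least $h/2$ vectors of norm greater than $k$, and $k=\Theta(\sqrt h)$ because the count of vectors of norm at most $k$ grows as $\Theta(k^2)$. This gives
\[
2n\;\ge\;\sum_i (a_i'+b_i')\;\ge\;\frac h2\cdot k\;=\;\Omega(h^{3/2}),
\]
and therefore $h=O(n^{2/3})$.

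The only care needed is making the choice of $k$ precise. I will take $k:=\lfloor c\sqrt h\rfloor$ for a small constant $c$ so that at most $h/2$ vectors have norm at most $k$. The degenerate case of small $h$ is trivial. I expect no real obstacle, since everything beyond this is routine. The strict decrease of slopes is used exactly once, to get distinctness of the primitive directions; the equivalence between same slope and same primitive vector is already stated just before the lemma.
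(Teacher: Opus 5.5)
Your proof is correct and is essentially the paper's argument: both use the distinctness of the primitive directions, the $O(R^2)$ bound on the number of nonnegative integer vectors of $\ell_1$-norm at most $R$, and the budget $\sum_i(a_i+b_i)\le 2n$. The only difference is where the cutoff comes from: the paper fixes $R=\lceil n^{1/3}\rceil$ and bounds the short and long edges separately by $O(R^2)$ and $2n/R$, whereas you take the cutoff $\Theta(\sqrt h)$ and get the Jarn\'{\i}k-type inequality $2n=\Omega(h^{3/2})$ directly, which is the alternative route the paper mentions right after its proof.
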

\begin{proof}
For every $i\in[1..h]$, write
\(
    e_i := p_i-p_{i-1} := (a_i,b_i).
\)
By the definition of a convex chain, we have $a_i\ge1$ and $b_i\ge0$ for every $i\in[1..h]$, and thus
\( \sum_{i=1}^{h}a_i\le n \)
and
\( \sum_{i=1}^{h}b_i\le n. \)
Since consecutive edges of the chain have distinct slopes, their primitive direction vectors are pairwise distinct.
Fix an integer $R\ge1$.
There are only $O(R^2)$ primitive integer vectors $(a,b)$ with $a\ge1$, $b\ge0$, and $a+b\le R$.
Hence, the chain has at most $O(R^2)$ edges with $a_i+b_i\le R$.
On the other hand, the number of edges with $a_i+b_i>R$ is at most
\(
    \frac{1}{R}\sum_{i=1}^{h}(a_i+b_i)
    \le
    \frac{2n}{R}
\)
since each of them contributes more than $R$ to the sum $\sum_{i=1}^{h}(a_i+b_i)$.
Therefore,
\( h=O\left(R^2+\frac{n}{R}\right) \), for every $R\ge1$.
Choosing $R=\ceil{n^{1/3}}$ proves the claim.
\end{proof}

Alternatively, we can show \cref{lem:binary_lattice_chain} by Jarník's theorem (\cite[Satz~1]{jarnik26uber} or \cite{barany12jarnik}) that states that the perimeter of a convex lattice polygon with $h$ vertices is $\Omega(h^{3/2})$ under every fixed norm
such as the Manhattan distance we used in the proof.
Since the perimeter in $[1..n] \times [1..n]$ is at most $4n$, $c h^{3/2} \le 4n$ for some constant $c$, and thus $h = O(n^{2/3})$.

\begin{corollary}\label{cor:bounded_alphabet}
The maximum number of pairwise non-compatible offsets of a binary array of size $n$ is $O(n^{2/3})$.
\end{corollary}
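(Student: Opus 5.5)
The plan is to reduce the count of pairwise non-compatible offsets to the number of edges of a convex chain, and then invoke \cref{lem:binary_lattice_chain}.

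\emph{Step 1.} Fix a set $\alpha_1<\alpha_2<\cdots<\alpha_m$ of pairwise non-compatible offsets. For each $\alpha_j$, choose an answer range $I_j$ and set $q_j:=(|I_j|,S_0(I_j))$. We may assume that $I_j$ attains $f_{X_0}(|I_j|)$: a length-$|I_j|$ window with more ones would score strictly higher at $\alpha_j$, contradicting optimality. Hence $q_j\in P_{X_0}$. By \cref{eq:binary_omss_envelope}, $q_j$ maximizes $y-\alpha_j x$ over $P_{X_0}$, so it lies on the upper convex hull of $P_{X_0}$ and is touched by a supporting line of slope $\alpha_j$.

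\emph{Step 2.} Non-compatibility says that no answer range at $\alpha_j$ shares its (length, sum) pair with an answer range at $\alpha_{j'}$. In particular, the points $q_j$ are pairwise distinct. Each $q_j$ is thus a hull point maximizing a linear functional. By choosing the maximizer of minimum length at each $\alpha_j$, we may take every $q_j$ to be a vertex of the upper hull. Distinct offsets then yield distinct hull vertices, so $m$ is at most the number of vertices of the upper hull of $P_{X_0}$.

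\emph{Step 3.} We now bound the number of hull vertices. The hull vertices relevant to nonnegative slopes, ordered by first coordinate, are integer points in $[0..n]^2$. Their first coordinates increase and, since only slopes $\alpha\ge0$ matter, their second coordinates do not decrease. Strict convexity makes the edge slopes strictly decreasing, so these vertices form a convex chain. By \cref{lem:binary_lattice_chain}, the chain has $O(n^{2/3})$ edges and hence $O(n^{2/3})$ vertices, so $m=O(n^{2/3})$.

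\emph{Main obstacle.} The delicate point is Step 2 together with the restriction to nonnegative slopes. The non-decreasing second-coordinate requirement in the definition of a convex chain holds only on the portion of the hull to the left of the vertex that maximizes $f_{X_0}$. Offsets $\alpha\ge0$ select exactly that portion, but an offset might also be optimal at a non-vertex point of an edge. To handle this, I would use the fact that every maximizer of a linear functional lies on a face of the hull and that some vertex of that face is also a maximizer. Replacing $q_j$ by that vertex keeps the $q_j$ distinct, because the vertex is still an answer at $\alpha_j$, and non-compatibility then separates it from the answers at every other offset.
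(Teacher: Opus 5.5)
Your proof is correct and follows essentially the paper's argument: non-compatibility lets you map each offset injectively to a vertex of the face of the upper hull of $P_{X_0}$ exposed by $y-\alpha x$, and \cref{lem:binary_lattice_chain} bounds the number of such vertices. The one difference is that the paper gets non-decreasing second coordinates from the fact that $f_{X_0}$ is itself non-decreasing (extending a window by one position cannot lose ones), so the whole upper hull is already a convex chain and the worry in your last paragraph does not arise; your route through the restriction $\alpha\ge 0$ is also valid.
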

\begin{proof}
    The function $f_{X_0}$ is non-decreasing: a length-$\ell$ subarray attaining $f_{X_0}(\ell)$ can be extended by one position to a length-$(\ell+1)$ subarray without decreasing its number of ones.
    Hence, the vertices of the upper convex hull of $P_{X_0}$ form a convex lattice chain whose two coordinates are non-decreasing.
    By \cref{lem:binary_lattice_chain}, this chain has $O(n^{2/3})$ vertices.
    
    Let $A$ be a set of pairwise non-compatible offsets.
    For every $\alpha\in A$, the linear function $(x,y)\mapsto y-\alpha x$ exposes a face of the upper hull of $P_{X_0}$: 
    this face is a single vertex except at a breakpoint, where it is the edge joining two consecutive hull vertices.

    Choose one vertex $v_\alpha$ of this face.
    The point $v_\alpha$ represents an answer type of the OMSS query with offset $\alpha$ by \cref{eq:binary_omss_envelope}.
    If $\alpha,\beta\in A$ are distinct, then $v_\alpha\ne v_\beta$, since otherwise the two offsets would have compatible answer ranges.
    Thus, $\alpha\mapsto v_\alpha$ is injective, and $|A|=O(n^{2/3})$.
\end{proof}

For the lower bound, we use the following definitions.
A lattice point $(q,p)\in\mathbb Z^2$ is called \emph{primitive} if $\gcd(|q|,|p|)=1$.
Let
\(
    \mathbb P
    :=
    \left\{
        (q,p)\in\mathbb Z^2
        \ \middle|\
        \gcd(p,q)=1
    \right\}
\)
denote the set of primitive lattice points.
A polygon is called \emph{rational} if all its vertices have rational coordinates.  
Further, a polygon~$\mathcal A$ is called \emph{star-shaped} (with respect to the origin) if,
for every $(x,y)\in\mathcal A$ and every $\lambda\in[0,1]$, we have
$\lambda(x,y) = (\lambda x,\lambda y)\in\mathcal A$.
We use the following lemma.

\begin{lemma}[{\cite[Eq.(1.2)]{barany20primitive}}]\label{lem:primitive_lattice_estimate}
For a bounded rational polygon $\mathcal A$ that is star-shaped with
respect to the origin, 
\[
    \left|
        t\mathcal A \cap \mathbb P 
    \right|
    =
    \frac{6}{\pi^2}
    \operatorname{area}(\mathcal A)t^2
    +o(t^2)
    \text{~as~}
t\to\infty,
\]
where $\pi$ is the ratio of the circumference of a circle to its diameter.
\end{lemma}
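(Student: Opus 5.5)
The plan is to derive the estimate from the classical count of all lattice points in $t\mathcal A$ by Möbius inversion; this is the standard route behind \cite[Eq.(1.2)]{barany20primitive}. Since $\gcd(0,0)=0$, the origin is not in $\mathbb P$. Let $N(s):=|s\mathcal A\cap(\mathbb Z^2\setminus\{(0,0)\})|$ and $P(s):=|s\mathcal A\cap\mathbb P|$ for $s>0$. Choose $R$ with $\mathcal A\subseteq[-R,R]^2$. Then $N(s)=P(s)=0$ whenever $s<1/R$, because every nonzero lattice point has a coordinate of absolute value at least $1$.

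First, I will establish the lattice-point count
\[
N(s)=\operatorname{area}(\mathcal A)\,s^2+O(s+1)
\]
for all $s>0$, with the constant depending only on $\mathcal A$. For this I compare $N(s)$ with the unit squares centred at lattice points. A square meeting the boundary of $s\mathcal A$ lies within distance $\sqrt2$ of the boundary. Hence the number of such squares is $O(\operatorname{per}(\mathcal A)\,s+1)$, since the boundary is a polygon of perimeter $s\cdot\operatorname{per}(\mathcal A)$. All other squares lie entirely inside or entirely outside $s\mathcal A$. This step is the main technical point, because the error must hold uniformly for all small $s$ used later. The additive $+1$ in the error covers both the excluded origin and the small-$s$ regime.

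Second, I will express $N$ in terms of $P$ using that $\mathcal A$ is star-shaped. Every nonzero lattice point $v$ can be written uniquely as $v=d\,u$ with $d=\gcd$ of its coordinates, $d\ge1$, and $u\in\mathbb P$. If $v\in t\mathcal A$, then $u=\tfrac1d v\in t\mathcal A$ by star-shapedness (take $\lambda=1/d$), so $u\in\tfrac td\mathcal A$. Conversely, if $u\in\tfrac td\mathcal A\cap\mathbb P$, then $du\in t\mathcal A$. Therefore
\[
N(t)=\sum_{d\ge1}P(t/d).
\]
The sum is finite, since all terms with $d>Rt$ vanish. Applying Möbius inversion to this finite identity gives
\[
P(t)=\sum_{1\le d\le Rt}\mu(d)\,N(t/d).
\]

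Third, I substitute the estimate from the first step:
\[
P(t)=\operatorname{area}(\mathcal A)\,t^2\sum_{d\le Rt}\frac{\mu(d)}{d^2}+O\Bigl(\sum_{d\le Rt}\bigl(\tfrac td+1\bigr)\Bigr).
\]
The error sum is $O(t\log t)$. I complete the Möbius sum to $\sum_{d\ge1}\mu(d)/d^2=1/\zeta(2)=6/\pi^2$. The tail $\sum_{d>Rt}1/d^2=O(1/t)$ contributes only $O(t)$ after multiplication by $t^2$. Altogether
\[
P(t)=\frac{6}{\pi^2}\operatorname{area}(\mathcal A)\,t^2+O(t\log t),
\]
and $O(t\log t)=o(t^2)$, which proves the claim. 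Rationality of $\mathcal A$ is not needed for this weaker $o(t^2)$ form; it matters only for the sharper error terms in the cited work.
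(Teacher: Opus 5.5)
Your proof is correct. It also supplies something the paper does not have: the paper gives no proof of this lemma and simply imports it from \cite{barany20primitive}. Your route is the classical one. You prove a lattice-point count $N(s)=\operatorname{area}(\mathcal A)s^2+O(s+1)$ that holds uniformly in $s>0$, establish the identity $N(t)=\sum_{d\ge1}P(t/d)$, apply finite M\"obius inversion, and finish with $\sum_{d}\mu(d)/d^2=6/\pi^2$. This makes the lemma self-contained and gives the sharper error $O(t\log t)$, which is more than the application to $\mathcal K$ in \cref{thm:bounded_alphabet} requires. The citation, by contrast, buys only brevity.

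One inference in your second step is misstated. From $u=\frac1d v\in t\mathcal A$ you cannot conclude $u\in\frac td\mathcal A$: star-shapedness gives $\frac td\mathcal A\subseteq t\mathcal A$, not the reverse inclusion. The conclusion is nevertheless true by plain scaling, since $v\in t\mathcal A$ holds if and only if $v/d\in\frac td\mathcal A$.

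This shows that star-shapedness is not used anywhere in your argument, just as you already noted for rationality. What you actually prove is therefore the estimate for every bounded polygon, not only for rational star-shaped ones.
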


\begin{figure}[t]
    \centering
    \includegraphics[width=0.8\textwidth]{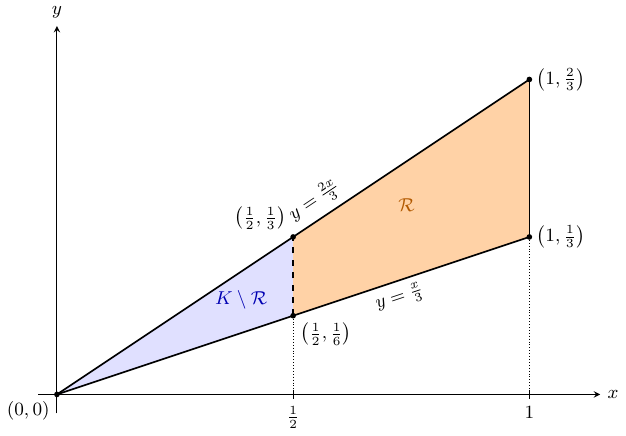}
    \caption{The triangular region $\mathcal K$ and the trapezoidal region $\mathcal R$
    used in the proof of \cref{thm:bounded_alphabet}.
    $\mathcal K$ covers the region $\mathcal K \setminus \mathcal R$ and $\mathcal R$.
}
    \label{fig:bounded_alphabet}
\end{figure}

The main idea of proving the following theorem is to build a binary string by concatenating binary strings of decreasing densities of ones and appending a sufficiently long run of ones.
These binary strings follow the spirit of lower Christoffel words for approximating line segments~\cite{luca25digital}.
We choose the densities to be rational numbers and pairwise distinct,
which we obtain from the primitive lattice points in a trapezoidal region, cf.~\cref{fig:bounded_alphabet}.

\begin{theorem}\label{thm:bounded_alphabet}
The maximum number of pairwise non-compatible offsets of a binary array of size $n$ is $\Theta(n^{2/3})$.
\end{theorem}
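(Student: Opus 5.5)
The upper bound is \cref{cor:bounded_alphabet}, so it remains to construct, for every $n$, a binary array with $\Omega(n^{2/3})$ pairwise non-compatible offsets. By \cref{eq:binary_omss_envelope}, it suffices to build $X_0$ such that the upper convex hull of $P_{X_0}$ has $\Omega(n^{2/3})$ vertices. For then the offsets chosen strictly between the slopes of consecutive hull edges each expose a single distinct vertex. Two such offsets have answer ranges of different lengths $\ell$ and different values $f_{X_0}(\ell)$ at those vertices, so they are non-compatible.

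\textbf{Construction.} Let $t:=\Theta(n^{1/3})$ and apply \cref{lem:primitive_lattice_estimate} to the trapezoid $\mathcal R$ of \cref{fig:bounded_alphabet}. We choose $\mathcal R$ as the region star-shaped with respect to the origin (the difference of two triangles $\mathcal K$ and $\mathcal K\setminus\mathcal R$ from the origin) consisting of points $(q,p)$ with $q\ge 1$, $0\le p< q$, and $c_1\le q+p\le c_2$. Both triangles are star-shaped and rational, so the lemma gives $\Theta(t^2)=\Theta(n^{2/3})$ primitive points $(q_i,p_i)$ in $t\mathcal R$. Each has $q_i+p_i=\Theta(t)$, so the distinct densities $p_i/q_i\in[0,1)$ are pairwise distinct, and the total length is $\sum q_i=O(t^3)=O(n)$. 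For each $i$, let $w_i$ be the lower Christoffel word of slope $p_i/q_i$: a binary word of length $q_i$ with $p_i$ ones, spread as evenly as possible, so that every factor of length $m$ has between $\lfloor mp_i/q_i\rfloor$ and $\lceil mp_i/q_i\rceil$ ones. Sort the words by decreasing density and set
\[
X_0 := 1^{M}\, w_{1}w_{2}\cdots w_{h},
\]
with a run of ones of length $M=\Theta(n)$ placed at the start, and pad with zeros if needed to reach length exactly $n$.

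\textbf{Key step.} We show that for each $k$, the length $\ell_k:=M+\sum_{i\le k}q_i$ satisfies $f_{X_0}(\ell_k)=M+\sum_{i\le k}p_i$, and that these points $(\ell_k,f(\ell_k))$ are hull vertices. The points lie on the concave polyline with edge vectors $(q_i,p_i)$ in decreasing slope order, so they are strictly convex. It remains to prove that no other window exceeds this polyline, i.e.\ $f_{X_0}(\ell)\le$ polyline$(\ell)$ for every $\ell$. Any window of length $\ell\ge M$ can be compared to the prefix $1^M w_1\cdots$. Shifting the window right trades ones of the leading run for the next symbols, whose densities are at most $1$; the words come in decreasing-density order, so a prefix-aligned window maximizes the count up to the $\pm1$ rounding of Christoffel words.

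\textbf{Main obstacle.} That $\pm1$ rounding slack is the hard part: a window ending inside $w_{k+1}$ or straddling boundaries may gain one extra one and lift a point above the polyline, destroying vertices. We would handle this by restricting to every other primitive point, or to points whose consecutive slopes differ by at least $1/\Theta(t^2)$, which still leaves $\Omega(t^2)$ points. Consecutive edges then bend enough that an additive error of $O(1)$ per window does not absorb a polyline vertex. Concretely, we would verify that the vertex at $\ell_k$ sits at least a constant above the chord joining its neighbours, using $q_i=\Theta(t)$ together with the slope gaps. We expect this quantitative margin check to be the delicate step, and we would try the simpler route of taking lower Christoffel words, which are prefix-balanced: prefixes are extremal (their count is a floor), so the error is one-sided.
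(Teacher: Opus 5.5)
\textbf{Gap.} You never prove the step you call the main obstacle, namely $f_{X_0}(\ell)\le g(\ell)$ for all $\ell$. Here $g$ is your concave polyline, extended with slope $1$ on $[0,M]$ and slope $0$ after its last vertex. The repair you sketch cannot work.

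At the vertex between edges $(q_k,p_k)$ and $(q_{k+1},p_{k+1})$, the height above the chord joining its neighbours is $(p_kq_{k+1}-q_kp_{k+1})/(q_k+q_{k+1})$. The numerator is a positive integer, and along the chain
\[
\sum_k q_kq_{k+1}(\delta_k-\delta_{k+1})\le (c_2t)^2 .
\]
Hence, for any subfamily of your vectors (all with $q=\Theta(t)$), only $O(t)$ vertices can have a numerator of order $t$, which is what a constant margin requires. So the average margin is of order $1/t$, and a constant margin at $\Omega(t^2)$ vertices is impossible. Thinning to every other point does not change this. Your condition of slope gaps at least $1/\Theta(t^2)$ is vacuous: distinct reduced fractions with denominators $O(t)$ already differ by at least $1/(c_2t)^2$. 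An additive $O(1)$ error therefore cannot be absorbed this way.

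\textbf{Missing idea.} There is no slack to absorb. Your closing remark about lower Christoffel words points the right way, but one-sided prefix error alone does not control non-prefix windows. Choose $M\ge\sum_i q_i$ explicitly. Then:
(i) A window starting inside the leading run is dominated by the prefix window of the same length. Each shift to the right removes a one of the run and adds a symbol that is at most one.
(ii) The prefix window of length $\ell=M+\sum_{i\le k}q_i+s$ with $0\le s<q_{k+1}$ contains exactly $M+\sum_{i\le k}p_i+\lfloor sp_{k+1}/q_{k+1}\rfloor\le g(\ell)$ ones, with equality at $s=0$, by the floor-prefix property of lower Christoffel words.
(iii) A window starting after the run contains at most $\sum_i p_i\le M\le g(\ell)$ ones if $\ell\ge M$, and at most $\ell=g(\ell)$ ones otherwise.
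With these three facts your hull-vertex framework is a valid alternative proof, and it even permits densities in $[0,1)$.

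\textbf{Comparison with the paper.} The paper never analyzes arbitrary windows. It takes the leading run $1^{2N}$ and densities in $[\frac13,\frac23]$. At each chosen offset $\alpha_t\in(\delta_{t+1},\delta_t)$, every nonempty prefix of $X_{\alpha_t}$ is then positive, which forces every maximum segment to start at position $1$. The same floor-prefix property then shows that the prefix sums have a unique maximum at the boundary after the $t$-th block.
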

\begin{proof}
We give a matching lower bound.

Fix an integer $m\ge1$, and let
\[
    \mathcal R
    :=
    \left\{
        (x,y)\in\mathbb R^2
        \ \middle|\
	x \in \left[\frac12, 1\right] 
	\wedge
	y \in \left[\frac{x}{3},\frac{2x}{3}\right]
    \right\}.
\]

Define
\[
    \mathcal V_m
    :=
    m\mathcal R\cap\mathbb P
    =
    \left\{
        (x,y)\in\mathbb N^2
        \ \middle|\
	x \in \left[\ceil{\frac{m}{2}}, m\right]
	\wedge
	y \in \left[\ceil{x/3},\floor{2x/3}\right]
	\wedge
        \gcd(x,y)=1
    \right\}.
\]

Let
\(
    \mathcal K
    :=
    \left\{
        (x,y)\in\mathbb R^2
        \ \middle|\
	x \in [0,1]
	\wedge
	y \in [\frac{x}{3}, \frac{2x}{3}]
    \right\}.
\)
be the triangle with vertices
$ (0,0),$
$\left(1,\frac13\right),$ and
 $   \left(1,\frac23\right)$.
In particular $\mathcal K$ is a rational star-shaped polygon.
See \cref{fig:bounded_alphabet} for an illustration of $\mathcal K$ and $\mathcal R$.

Let $P(t)$ denote the number of primitive lattice points in $t\mathcal K$, i.e., $P = |\mathbb P \cap t\mathcal K|$.
Since
\[
    \operatorname{area}(\mathcal K)
    =
    \int_0^1
        \left(\frac{2x}{3}-\frac{x}{3}\right)
    \,dx
    =
    \frac16,
\]
\cref{lem:primitive_lattice_estimate} gives
\begin{equation}\label{eq:primitive_points_in_wedge}
    P(t)
    =
    \frac{1}{\pi^2}t^2+o(t^2).
\end{equation}

The set $\mathcal V_m$ consists of the primitive points in
$m\mathcal K$ whose first coordinate is at least $m/2$.
Consequently,
\(
    |\mathcal V_m|
    =
    P(m)-P(m/2)+O(m).
\)
The additive $O(m)$ term accounts for lattice points on the vertical boundary
line with x-coordinate $m/2$, and for the rounding in the inequalities defining $\mathcal V_m$.
Using \cref{eq:primitive_points_in_wedge}, we obtain
\begin{align}
    |\mathcal V_m|
    =
    \frac{1}{\pi^2}m^2
    -
    \frac{1}{\pi^2}\left(\frac m2\right)^2
    +o(m^2)
    +O(m)
    =
    \frac{3}{4\pi^2}m^2+o(m^2)
    =
    \Theta(m^2).
    \label{eq:primitive_pairs}
\end{align}

Since every $(q,p)\in\mathcal V_m$ satisfies
\(
    \frac{m}{2}\le q\le m,
\)
we obtain
\begin{equation}\label{eq:primitive_pair_length}
    N
    :=
    \sum_{(q,p)\in\mathcal V_m}q
    =
    \Theta(m^3),
\end{equation}
since
\(
    \frac{m}{2}|\mathcal V_m|
    \le
    N
    \le
    m|\mathcal V_m|
\)
and due to \cref{eq:primitive_pairs}.

For every $(q,p)\in\mathcal V_m$, define the binary string $W_{q,p}[1..q]$ by
\begin{equation}\label{eq:balanced_binary_block}
    W_{q,p}[s]
    :=
    \floor{\frac{sp}{q}}
    -
    \floor{\frac{(s-1)p}{q}}
    \text{~for each~}
    s\in[1..q].
\end{equation}
In the following, we call $W_{q,p}$ a \emph{block}.
The block $W_{q,p}$ contains exactly $p$ ones.
Also, every prefix of length $s$ contains
\(
    \sum_{j=1}^{s}W_{q,p}[j]
    =
    \floor{\frac{sp}{q}}
    \le
    \frac{sp}{q}
\)
ones.

Let us write
\(
    \mathcal V_m
    =
    \{(q_1,p_1),\ldots,(q_M,p_M)\}
\)
such that
\( \delta_1>\delta_2>\cdots>\delta_M, \)
with densities
\( \delta_i:=\frac{p_i}{q_i}\) for some $M \in \Theta(m^2)$.
The densities $\delta_i$ are pairwise distinct because all fractions $p_i/q_i$ are
reduced.
Let
\(
    W
    :=
    W_{q_1,p_1}
    W_{q_2,p_2}
    \cdots
    W_{q_M,p_M}
\)
and define our input array
\(
    X^{(m)}
    :=
    1^{2N}W
\)
for the OMSS problem.
By \cref{eq:primitive_pair_length}, the length of $X^{(m)}$ is
\(
    |X^{(m)}| = 3N=\Theta(m^3).
\)

For every $t\in[1..M-1]$, choose an offset
\(
    \alpha_t\in(\delta_{t+1},\delta_t).
\)
We claim that the unique OMSS answer on $X^{(m)}$ for $\alpha_t$ is the prefix
\begin{equation}\label{eq:binary_lower_answer}
    I_t
    :=
    \left[
        1..\,
        2N+\sum_{i=1}^{t}q_i
    \right].
\end{equation}

Since $\delta \in [\frac13,\frac23]$,
we have
$\alpha_t \in (0,\frac23)$.
Every nonempty prefix of $X^{(m)}_{\alpha_t}$ has positive sum.
Because $\alpha_t < 1$,
this is immediate for prefixes contained in the leading run of ones.
A prefix ending after $s\le N$ positions of $W$ has sum at least
\(
    2N(1-\alpha_t)-\alpha_t s
    \ge
    N(2-3\alpha_t)
    >
    0.
\)
Therefore, every maximum-sum segment starts at the first position:
we can extend a non-prefix substring to the left by the preceding nonempty prefix to strictly increase its sum.

Consider a prefix of length $s$ of the block $W_{q_i,p_i}$.
Its score is
\(
W_{q_i,p_i}.S_\alpha(1,s) :=
    \floor{\frac{sp_i}{q_i}}-\alpha_t s
    \le
    (\delta_i-\alpha_t)s.
\)
We consider two cases.

\subparagraph{Case 1: $i\le t$}
Then $\delta_i>\alpha_t$.
For every proper prefix $s<q_i$,
\[
W_{q_i,p_i}.S_\alpha(1,s) =
    \floor{\frac{sp_i}{q_i}}-\alpha_t s
    \le
    (\delta_i-\alpha_t)s
    <
    (\delta_i-\alpha_t)q_i
    =
    p_i-\alpha_t q_i 
    = 
    W_{q_i,p_i}.S_\alpha(1,q_i)
\]
Hence, within each of the first $t$ blocks, the prefix sum is uniquely
maximized at the end of each block.

\subparagraph{Case 2: $i>t$}
Then $\delta_i<\alpha_t$, and every nonempty prefix of $W_{q_i,p_i}$
has score
\[
W_{q_i,p_i}.S_\alpha(1,s) =
    \floor{\frac{sp_i}{q_i}}-\alpha_t s
    \le
    (\delta_i-\alpha_t)s
    <
    0.
\]
Thus, the prefix sums of $X^{(m)}_{\alpha_t}$ attain their unique maximum
at the boundary following the first $t$ blocks.
This proves \cref{eq:binary_lower_answer}.

The ranges
\(
    I_1,I_2,\ldots,I_{M-1}
\)
have pairwise distinct lengths, and each is the unique answer for its
respective offset.
Therefore, the offsets
\(
    \alpha_1,\alpha_2,\ldots,\alpha_{M-1}
\)
are pairwise non-compatible.
By \cref{eq:primitive_pairs,eq:primitive_pair_length},
\(
    M-1
    =
    \Theta(m^2)
    =
    \Theta\bigl(|X^{(m)}|^{2/3}\bigr).
\)

For an arbitrary target length $n$, choose
\(
    m
    :=
    \floor{(n/3)^{1/3}}.
\)
Since $N\le m^3$, the constructed array has length $3N\le n$.
Append zeros until its length is exactly $n$.
Every selected offset is positive, so the appended zeros have negative
adjusted value and do not change any of the unique answer ranges above.
Hence, there exists a binary array of length $n$ with
$\Omega(n^{2/3})$ pairwise non-compatible offsets.

\end{proof}

\subsection{Run-Length Encoded (RLE) Input}\label{sec:rle}
Let
\(
    X_0 := x_1^{\ell_1}x_2^{\ell_2}\cdots x_z^{\ell_z}
\)
be the run-length encoding of $X_0$, where $x_i\neq x_{i+1}$ for every $i \in [1..z-1]$ and $\ell_i\ge 1$ for every $i \in [1..z]$.
We define
\( L[i]:=\sum_{t=1}^{i}\ell_t \)
and
\( V[i]:=\sum_{t=1}^{i}\ell_t x_t \)
for $i\in[1..z]$, and stipulate that $L[0]=V[0]=0$.
Thus, the $i$-th run occupies the range $[L[i-1]+1..L[i]]$.
The \emph{run-value array} $[x_1,\ldots,x_z]$, the \emph{run-exponent array} $[\ell_1,\ldots,\ell_z]$, $L$, and $V$ use $O(z)$ space.
Given a position $p\in[1..n]$, its run can be found by a predecessor search in $L$ in $O(\log z)$ time, 
while a run boundary can be translated back to a position of $X_0$ in constant time.

\begin{example}\label{ex:rle}
    For $X_0 = [2^1, (-1)^{3}, 2^{2}]$ and $\alpha=0$, 
the MSS of the run-value array $[2,-1,2]$ is the entire array, 
whereas the MSS of $X_0$ is its last run.
\end{example}
In what follows, we want to obtain the space and time complexities in terms of $z$ for answering R-OMSS queries on the $O(z)$-space representation of $X_0$.
We therefore introduce a weighted version of the OMSS problem.

We call 
\(
    \widehat X  := ((x_1,\ell_1),\ldots,(x_z,\ell_z)),
\)
the \emph{weighted array} of $X_0$ and define
\(
    \widehat S_\alpha(i,j)
    :=
    \sum_{t=i}^{j} \ell_t(x_t-\alpha).
\)

A \emph{weighted positive OMSS} (WPOMSS) \emph{query} returns an interval maximizing $\widehat S_\alpha(i,j)$ if the maximum value is positive, and returns the empty interval otherwise.
Among all intervals attaining the maximum value, it returns one with the shortest expanded length $\lambda(i,j)=L[j]-L[i-1]$.
While implicit, the data structure of \cite{sakai2024data} allows us to compute WPOMSS queries as follows. 
We give proof details in \cref{app:thmWOPMSS}.

\begin{restatable}{theorem}{thmWOPMSS}
\label{thmWOPMSS}
Given a weighted array $\widehat X$ of size $z$, there exists an $O(z)$-space data structure that supports WPOMSS queries in $O(\log z)$ time.
We can construct the data structure in $O(z\log^2z)$ time.
\end{restatable}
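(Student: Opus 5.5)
The plan is to reduce WPOMSS on $\widehat X$ to the unweighted setting of \cref{lemma:sakaiDS}, re-running Sakai's argument with weights rather than expanding the runs (expanding would cost $O(n)$ space).

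\textbf{Affine scores.} For every run interval $[i..j]$ of $\widehat X$, the score is
\[
\widehat S_\alpha(i,j)=(V[j]-V[i-1])-\alpha\,(L[j]-L[i-1]).
\]
This is an affine function of $\alpha$ with slope $-\lambda(i,j)$. The answer value is therefore the upper envelope of $O(z^2)$ lines, clipped from below at $0$ because of the positivity condition. Geometrically this is the same picture as Sakai's, except that the prefix points are $(L[k],V[k])$ for $k\in[0..z]$ instead of $(k,C_0[k])$. The abscissae are still strictly increasing, since $\ell_t\ge1$, but they are no longer unit-spaced.

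\textbf{Envelope size.} I would show that the envelope has at most $z$ breakpoints by mirroring the nesting argument behind \cref{prop:sbst}. As $\alpha$ grows, the shortest optimum interval (by expanded length, with ties broken toward shorter) changes only to strictly shorter intervals. These nest, or at least move monotonically, in Sakai's sense. The key point is that his monotonicity proof only uses that each element has positive length and additive value. Concretely, it only uses the exchange argument of \cref{prop:sbst}, where $(\beta-\alpha)(x-x')$ is replaced by $(\beta-\alpha)\lambda$. That argument carries over verbatim with $\lambda>0$. So the catalog has $O(z)$ entries (offset, run interval). We sort it by offset, apply the right-active convention by a linear scan, and prepend the sentinel that returns the empty interval once the maximum becomes nonpositive; that threshold is $\max_t x_t$. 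Predecessor search then gives $O(\log z)$ query time and $O(z)$ space.

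\textbf{Construction.} I would follow Sakai's divide-and-conquer on the run indices. Split the runs into halves. Recursively compute the catalogs of both halves. Compute the crossing candidates as suffix-of-left plus prefix-of-right: the best suffix and prefix as functions of $\alpha$ are upper envelopes of the weighted point sets $(L\text{-differences},V\text{-differences})$, obtainable as convex hulls in linear time since the points are sorted by abscissa. Merge the three piecewise-linear functions by a sweep. Each level costs $O(z\log z)$, which is where sorting and merging of breakpoints enters, giving $O(z\log^2 z)$ overall. I expect that Sakai's correctness arguments depend on entries only via value and length and hence transfer.

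\textbf{Main obstacle.} The hardest step is justifying the $O(z)$ bound on catalog size, i.e.\ that weights do not break Sakai's charging argument. That argument charges each breakpoint to a distinct endpoint, and with weights it must be checked that the optimal endpoints at breakpoints are run boundaries. They are, because a shortest optimum never splits a run: within a run, $\ell_t(x_t-\alpha)$ is either all positive or all nonpositive per element, so the optimum includes the whole run or none of it. This is what makes the run-level envelope coincide with the expanded one. I would first prove this "no split runs" claim, and then reuse Sakai's breakpoint bound on the expanded array, restricted to run boundaries.
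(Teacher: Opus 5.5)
Your proposal has a genuine gap at the step you yourself flag as the main obstacle: the $O(z)$ bound on the number of catalog entries. The $O(z)$ space claim rests on it, and neither of your arguments establishes it.

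\textbf{The nesting argument fails.} For unrestricted intervals the answers do not nest. Take the weighted array $((10,1),(-100,1),(3,8))$. At $\alpha=0$ the answer is $[3..3]$ with score $24$; at $\alpha=5$ it is $[1..1]$ with score $5$. These are disjoint. The exchange argument of \cref{prop:sbst} uses the fixed right endpoint, and ``moving monotonically'' yields no count.

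\textbf{The fallback fails too.} The bound $|L|\le n$ of \cref{lemma:sakaiDS} counts possible lengths, i.e.\ slopes. Run-aligned intervals can have $\Theta(\min(n,z^2))$ distinct expanded lengths, so restricting to run boundaries gains nothing. Sakai's charging does not assign breakpoints to endpoints. It assigns candidate-tree splits to removed positions, and in the expanded array those positions need not be run boundaries. The ``no split runs'' fact only relates WPOMSS to OMSS on the expanded array; it is not needed for this theorem and bounds nothing here.

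\textbf{Why it matters for your construction.} Without a size bound, your own merge recurrence only gives $|C(m)|\le 2|C(m/2)|+O(m)$. That is $O(z\log z)$ entries at the root, not $O(z)$.

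\textbf{How the paper closes the gap.} It uses Sakai's candidate tree on run indices. By the weighted splitting lemma, every internal vertex splits $[i..j]$ into $[i..\widehat\kappa(i,j)-1]$ and $[\widehat\kappa(i,j)+1..j]$, removing a distinct run index. Hence there are only $O(z)$ candidates, and some active candidate is always optimal. The catalog is then the upper hull of the $O(z)$ grouped points $(\lambda(I_\mu),\widehat S_0(I_\mu))$ together with $(0,0)$. Construction costs $O(z)$ split queries of $O(\log^2 z)$ each.

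\textbf{How your exchange idea can close it.} Your exchange argument, applied correctly, proves non-crossing rather than nesting. Suppose $[a..b]$ is optimal at $\alpha$ and $[c..d]$ is optimal at $\beta>\alpha$ with $a<c\le b<d$. Comparing $[a..b]$ with $[a..d]$ gives $\widehat S_\alpha(b+1,d)\le 0$. Since expanded lengths are positive, $\widehat S_\beta(b+1,d)<0$, so $[c..b]$ beats $[c..d]$ at $\beta$, a contradiction. The mirrored crossing $c<a\le d<b$ is symmetric. A later answer cannot strictly contain an earlier one, because expanded lengths strictly decrease along the envelope. So the distinct answers form a laminar family of run intervals, which has at most $2z-1$ members.

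\textbf{Your construction given this lemma.} Your divide-and-conquer is a valid alternative to the paper's pipeline of split structure, candidate tree and hull, though it is not Sakai's method. With the lemma it gives $O(z)$ space. It even gives $O(z\log z)$ construction time, since every node's envelopes then have linear size and each merge is a linear sweep.
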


Similarly, we can extend our suffix R-OMSS data structure of \cref{sec:suffix} to weighted arrays.
For that, we replace the identity
\(
    S_\alpha(i,j)
    =
    S_\beta(i,j)+(\beta-\alpha)(j-i+1)
\)
in \cref{prop:sbst}, 
by
\(
    \widehat S_\alpha(i,j)
    =
    \widehat S_\beta(i,j)
    +(\beta-\alpha)\bigl(L[j]-L[i-1]\bigr).
\)
Since $L[j]-L[i-1]>0$, the same monotonicity argument applies.
Moreover, the points used for constructing a threshold offset for suffixes ending at run $h$ become
\(
    \bigl(L[h]-L[h-t],\,V[h]-V[h-t]\bigr),
\)
whose first coordinates are strictly increasing in $t$.
Therefore, we can construct their upper hull in linear time, and obtain the following claim.

\begin{lemma}\label{lem:weighted_suffix}
Given a weighted array of size $m$, there exists an $O(m)$-space data structure supporting weighted prefix and suffix R-OMSS queries in $O(\log^2m)$ time.
We can construct the data structure in $O(m\log^2 m)$ time.
For a canonical query range, the query time is $O(\log m)$.
\end{lemma}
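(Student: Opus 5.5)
We mirror the construction of \cref{sec:suffix} on the weighted array $\widehat X=((x_1,\ell_1),\ldots,(x_m,\ell_m))$. We build an SBST over the run indices $[1..m]$ using the same canonical ranges. For a node $p$ with range $[x_p..y_p]$ and midpoint $z_p$, the threshold $\alpha_p$ is the infimum offset at which the shortest weighted suffix answer on $[x_p..y_p]$ starts in $[z_p+1..y_p]$. Throughout, score means $\widehat S_\alpha$ and length means expanded length $\lambda(i,j)=L[j]-L[i-1]$. We also store the arrays $L$ and $V$, so that $\widehat S_\alpha(i,j)=V[j]-V[i-1]-\alpha\lambda(i,j)$ is available in constant time. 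Space is $O(m)$ as before.

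The correctness steps are then reproved in weighted form.

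First, the monotonicity of \cref{prop:sbst}: if $x'<x$ were the start at the larger offset $\beta$, then $\widehat S_\beta(x',x-1)>0$, and hence
\[
\widehat S_\alpha(x',x-1)=\widehat S_\beta(x',x-1)+(\beta-\alpha)\lambda(x',x-1)>0,
\]
since $\lambda>0$. This gives a contradiction as before.

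Second, the analogue of \cref{eq:sbst_threshold_rule} follows, because at a breakpoint the right-half suffix has strictly smaller expanded length and is therefore the unique shortest answer.

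Third, \cref{prop:sbst2} and \cref{lem:computak} carry over verbatim. Their arguments only use additivity of the score, the fact that appending a fixed block preserves both value order and length order, and the tie-break by length. All of these hold for $\widehat S_\alpha$ and $\lambda$, since $\lambda$ is additive over adjacent run ranges.

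The query algorithm is unchanged. On a canonical range we descend the tree in $O(\log m)$ time. A general range is split into $O(\log m)$ canonical ranges, and we run the left-to-right dynamic program of \cref{lem:computak}, giving $O(\log^2 m)$ time. Prefix queries are handled symmetrically with a second tree.

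For construction, at each node we form the points $(L[y_p]-L[y_p-t],\,V[y_p]-V[y_p-t])$ for $t\in[1..y_p-x_p+1]$. Their first coordinates are strictly increasing, so the upper hull is built in linear time. We select the shortest maximal point at offset $0$ as before, and find $\alpha_p$ by binary search over the hull breakpoints. This costs $O(\ell_p)$ per node plus $O(\log m)$, hence $O(m\log m)$ in total, which is within the claimed $O(m\log^2 m)$.

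The main obstacle is checking that the tie-breaking arguments survive weighting. Specifically, the shortest answer must be defined by expanded length rather than by run count. In the proof of \cref{prop:sbst2}, equal scores must imply that the shorter candidate in $\lambda$ is preferred, which holds because nested suffixes differ by a positive $\lambda$. I would double-check this carefully at the threshold breakpoint.
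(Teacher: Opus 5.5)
Your proposal is correct and follows the paper's own argument. The paper likewise reruns the SBST construction of \cref{thm:range_suffix_omss} with $\widehat S_\alpha$: it replaces the identity in \cref{prop:sbst} by one whose length factor $L[j]-L[i-1]$ is positive, and it builds each node's hull from the points $\bigl(L[h]-L[h-t],\,V[h]-V[h-t]\bigr)$, whose first coordinates are strictly increasing. Your $O(m\log m)$ construction count is simply sharper than the stated $O(m\log^2 m)$. The tie-breaking point you wanted to double-check is harmless, since for nested suffixes (or prefixes) ordering by expanded length coincides with ordering by run count.
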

\begin{proof}
Apply the construction and proof of \cref{thm:range_suffix_omss} with the weighted sum $\widehat S_\alpha$.
Our aforementioned observations give the weighted counterparts of \cref{prop:sbst,prop:sbst2} and of the threshold-offset construction.
All remaining arguments use only additivity of interval sums and are unchanged.
\end{proof}

We next relate weighted run intervals to segments of the expanded array.
All answers obey the shortest-answer rule of \cref{eq:shortest_answer}.

\begin{lemma}\label{lem:rle_aligned_summary}
Let $p\le q$, and consider the run-aligned range
\(
    I=[L[p-1]+1..L[q]]
\)
of $X_0$.
Its MSS summary for an offset $\alpha$ can be obtained from the following information:
\begin{enumerate}[(1)]
    \item the total weighted sum
    \(
    \widehat S_\alpha(p,q) =
        \sum_{i=p}^{q}\ell_i(x_i-\alpha);
    \)
    \item a maximum nonempty weighted prefix~$\mathsf{pre}_\alpha(I)$ and suffix~$\mathsf{suf}_\alpha(I)$ of the run range $[p..q]$;
    \item a WPOMSS answer on $[p..q]$, if one exists; and
    \item a position $h\in[p..q]$ maximizing $x_h = \max_{r \in [p..q]} x_r$. \label{it:rle:max_run}
\end{enumerate}
More precisely, the best prefix is the better of the maximum weighted full-run prefix and the singleton $[L[p-1]+1..L[p-1]+1]$.
The best suffix is the better of the maximum weighted full-run suffix and the singleton $[L[q]..L[q]]$.
If a WPOMSS answer exists, it is an MSS of $X_\alpha[I]$; otherwise,
$[L[h-1]+1..L[h-1]+1]$ is an MSS of $X_\alpha[I]$.
\end{lemma}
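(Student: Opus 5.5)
The core of the argument is a structural observation about runs: within a single run, every entry of $X_\alpha$ has the same value $x_i-\alpha$. We first show that any shortest answer range of $X_\alpha[I]$ (for the best, prefix, or suffix field) either is a single position, or is run-aligned except at endpoints that the constraint fixes. The three fields are then handled in turn.

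\emph{Best field.} Let $[\ell..r]$ be a shortest MSS of $X_\alpha[I]$ with $\ell<r$. If $\ell$ lies in run $i$ and is not its first position, consider the entry $X_\alpha[\ell-1]=x_i-\alpha=X_\alpha[\ell]$. If this value is positive, extending the range to the left strictly increases the sum, which contradicts maximality. If it is at most zero, deleting position $\ell$ does not decrease the sum and shortens the range, which contradicts shortest-ness or maximality. The symmetric argument applies at $r$. Hence a non-singleton shortest optimum is a union of full runs $[L[i-1]+1..L[j]]$, and its value is $\widehat S_\alpha(i,j)$.

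If the optimum value is positive, the shortest optimum has length at least one and is aligned. The one exception is a singleton range, but a positive singleton inside run $i$ can be extended across the rest of run $i$ with positive gain, and the lexicographic rule then favours the full run. So it is exactly a WPOMSS answer, because the expanded length $\lambda$ coincides with the actual length. If the optimum value is at most zero, every entry is nonpositive. An optimal segment is then a single entry of maximum value $x_h-\alpha$, and the first position of run $h$ qualifies. Conversely, when the WPOMSS answer is empty, no aligned interval has positive value, so by the argument above no segment at all has positive value.

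\emph{Prefix and suffix fields.} For prefixes, only the right endpoint $r$ is free. The same exchange argument at $r$ shows that a shortest optimal prefix either ends at a run boundary or has length one. Length one is the case where the prefix value is negative at every run end, e.g.\ when $x_p-\alpha<0$ and extending only loses; there the singleton $[L[p-1]+1..L[p-1]+1]$ is best within the first run. The exchange at $r$ would remove position $r$, but this is impossible when $r$ equals the start of $I$, so this exception must be kept. Comparing the singleton with the maximum weighted full-run prefix under \cref{eq:shortest_answer} therefore yields $\mathsf{pre}_\alpha(I)$. Suffixes are symmetric, using $[L[q]..L[q]]$. The total sum is item~(1) by definition, and this completes the summary.

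The main obstacle is handling ties correctly. A zero-valued run, or an entry equal to zero, makes extension and deletion neutral, and the argument must ensure the shortest-answer rule always prefers deletion. The step to check with care is that a shortest optimum never partially covers a run. I would argue this by showing that a partial cover leaves an adjacent position of equal value $x_i-\alpha$ which can be dropped or added with a strict improvement in the pair (value, $-$length).
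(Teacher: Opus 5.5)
Your proposal is correct and follows essentially the paper's argument: an exchange at a run-internal endpoint (extend if $x_i-\alpha>0$, delete if $x_i-\alpha\le 0$, with ties resolved by \cref{eq:shortest_answer}) forces run alignment; the nonpositive case reduces to a singleton of a maximum run; and prefixes and suffixes keep only the first and last singleton as exceptions. The differences are cosmetic: you exchange one position at a time and split on singleton versus non-singleton, whereas the paper extends or deletes the whole covered part of the run and splits on positive versus nonpositive value. Your aside that ``length one is the case where the prefix value is negative at every run end'' is not an exact characterization (e.g.\ $x_p=\alpha$ with $\ell_p\ge2$ and negative later runs still selects the singleton), but your argument never relies on it.
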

\begin{proof}
Write $y_i:=x_i-\alpha$.
Consider first an answer range of positive value.
Assume that its left endpoint lies strictly inside the $i$-th run, and consider the following cases:
\begin{itemize}
    \item If $y_i>0$, extending the range to the left boundary of the run increases its value.
    \item If $y_i<0$, deleting the covered part of that run increases its value.
    \item If $y_i=0$, deleting that part preserves the value and gives a shorter range.
\end{itemize}
Thus, under the shortest-answer rule (\cref{eq:shortest_answer}), the left endpoint lies on a run boundary.
The argument for the right endpoint is symmetric.
Consequently, every positive MSS is a concatenation of complete runs and is returned by the WPOMSS data structure.

If no complete-run interval has positive value, 
then $y_i\le0$ for every $i\in[p..q]$, since otherwise the complete $i$-th run would have positive value.
In this case, 
every segment has non-positive value, 
and a shortest maximum segment consists of a single $X_0$-entry, 
namely one occurrence of a run value maximizing $y_i$, equivalently maximizing $x_i$.
This proves the statement for the unrestricted MSS.

For a prefix, suppose that its right endpoint lies strictly inside the $i$-th run.
If $i>p$, then extending to the right run boundary improves the value when $y_i>0$, while deleting the covered part of the run improves the value when $y_i<0$ and preserves the value with a shorter range when $y_i=0$.
Hence, an optimal prefix either ends on a run boundary or is contained in the first run.
The latter case is represented by the first singleton; if $y_p>0$, the complete first run is at least as good.
This proves the prefix statement, and the suffix statement is symmetric.
\end{proof}

For computing (\ref{it:rle:max_run}) in \cref{lem:rle_aligned_summary}, 
we maintain a standard $O(z)$-space range-maximum-query data structure \textsf{RMQ} on $x_1,\ldots,x_z$ that returns an arbitrary maximum in constant time~\cite{gabow84scaling}.
Together with \cref{thmWOPMSS,lem:weighted_suffix}, this gives an $O(m)$-space parametric MSS summary for every range of $m$ complete runs.
The summary consists of the same four fields as in \cref{sec:general}, and two adjacent summaries are combined by \cref{eq:summary_total,eq:summary_prefix,eq:summary_suffix,eq:summary_best}.

For a subrange of consecutive positions inside one run of value $x$, we can compute its summary in constant time since all entries are equal.
If $x-\alpha > 0$, its best prefix, suffix, and unrestricted segment are the entire range.
Otherwise, they are respectively its first position, its last position, and its first position.

\begin{lemma}\label{lem:sakai_rle}
There exists an $O(z)$-space data structure that supports OMSS queries on $X_0$ in $O(\log z)$ time.
It can be constructed in $O(z \log^2 z)$ time.
\end{lemma}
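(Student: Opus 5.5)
The plan is to apply \cref{lem:rle_aligned_summary} to the single run-aligned range $I=[1..n]=[L[0]+1..L[z]]$, i.e.\ $p=1$ and $q=z$. Since an OMSS query only needs the $\mathsf{best}$ field of the summary, we need just two of the four ingredients of \cref{lem:rle_aligned_summary}. The first is a WPOMSS answer on the whole weighted array $\widehat X$. The second is a fixed position $h\in[1..z]$ maximizing $x_h$. The prefix, suffix and total fields are irrelevant here.

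\textbf{Construction.} We build the data structure of \cref{thmWOPMSS} on $\widehat X$, which uses $O(z)$ space and $O(z\log^2 z)$ construction time. We store the array $L$ (and $V$, to report values) in $O(z)$ space. Because the query range is always the full array, we need no RMQ structure: we compute $h:=\arg\max_{r\in[1..z]}x_r$ once by a linear scan and store the position $L[h-1]+1$.

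\textbf{Query.} Given $\alpha$, we run the WPOMSS query in $O(\log z)$ time. If it returns a nonempty run interval $[i..j]$, we output the expanded range $[L[i-1]+1..L[j]]$ with value $V[j]-V[i-1]-\alpha(L[j]-L[i-1])$. Otherwise we output the singleton $[L[h-1]+1..L[h-1]+1]$ with value $x_h-\alpha$. Correctness is exactly the last sentence of \cref{lem:rle_aligned_summary} for $p=1,q=z$.

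The step needing the most care is the shortest-answer rule in the positive case. By the proof of \cref{lem:rle_aligned_summary}, every shortest answer of $X_\alpha$ consists of complete runs. Its expanded length is $\lambda(i,j)$, and the WPOMSS tie-break minimizes exactly this expanded length. Hence the returned run interval expands to a shortest answer range of the OMSS query on $X_0$.

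In the nonpositive case, every segment has nonpositive value. A single entry maximizing $x_i-\alpha$ is then optimal and of minimal length one, and this maximizer does not depend on $\alpha$, which justifies precomputing $h$. The total query time is $O(\log z)$ and the space is $O(z)$.
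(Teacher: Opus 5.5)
Your proposal is correct and follows the paper's proof: build the WPOMSS structure of \cref{thmWOPMSS} on $\widehat X$, expand a nonempty run interval $[i..j]$ to $[L[i-1]+1..L[j]]$ via \cref{lem:rle_aligned_summary}, and otherwise return a singleton from a run of maximum value. The only deviation is that you find this run once by a linear scan instead of querying \textsf{RMQ}, which is a harmless simplification because the query range is always all of $[1..z]$.
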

\begin{proof}
Apply \cref{thmWOPMSS} to the weighted run array
\(
    \widehat X = ((x_1,\ell_1),\ldots,(x_z,\ell_z)).
\)
If the query returns a nonempty run range $[p..q]$, then
\(
    [L[p-1]+1..L[q]]
\)
is an OMSS answer by \cref{lem:rle_aligned_summary}.
If the WPOMSS query returns the empty range, we return one single $X_0$-entry of a run whose value $x_i$ is maximum.
This run is obtained from \textsf{RMQ}.
\end{proof}

\begin{corollary}\label{cor:rangesuffix_rle}
There exists an $O(z)$-space data structure that supports suffix R-OMSS queries on $X_0$ in $O(\log^2 z)$ time.
It can be constructed in $O(z \log^2 z)$ time.
\end{corollary}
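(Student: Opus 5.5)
To answer a suffix R-OMSS query for $[a..b]$, I would reduce it to a weighted suffix query on complete runs, handled by \cref{lem:weighted_suffix}, plus a constant-time correction for the partial runs at the two ends.

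The data structure consists of four parts: the arrays $L$ and $V$, the run-value and run-exponent arrays, and the structure of \cref{lem:weighted_suffix} built on the weighted array $\widehat X=((x_1,\ell_1),\ldots,(x_z,\ell_z))$. This uses $O(z)$ space and $O(z\log^2 z)$ construction time.

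Given $[a..b]$ and $\alpha$, I first locate the runs $p$ and $q$ containing $a$ and $b$ by predecessor searches in $L$, in $O(\log z)$ time. The range then splits into at most three parts:
\begin{itemize}
    \item a partial head $[a..\min(b,L[p]))]$ inside run $p$;
    \item a run-aligned middle $[L[p]+1..L[q-1]]$, covering runs $p+1..q-1$, possibly empty;
    \item a partial tail $[\max(a,L[q-1]+1)..b]$ inside run $q$.
\end{itemize}
When $p=q$, only one single-run piece remains. In that case the shortest optimal suffix is the whole piece if $x_p-\alpha>0$ and the singleton $[b..b]$ otherwise.

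In the general case I would use the summary-merge idea of \cref{lem:mss_summary_merge}, restricted to the fields $S_\alpha$ and $\mathsf{suf}_\alpha$. By \cref{eq:summary_suffix}, only these two fields are needed to combine suffixes. For the single-run head and tail pieces, both fields are computed in constant time as described before \cref{lem:sakai_rle}. For the middle piece, the total is $V[q-1]-V[p]-\alpha(L[q-1]-L[p])$. The best suffix of the middle piece is, by the suffix part of \cref{lem:rle_aligned_summary}, the better of two candidates: the maximum weighted full-run suffix of runs $[p+1..q-1]$, and the singleton $[L[q-1]..L[q-1]]$. The full-run suffix comes from a weighted suffix R-OMSS query of \cref{lem:weighted_suffix} on the run range $[p+1..q-1]$, taking $O(\log^2 z)$ time. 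I then merge head, middle and tail from left to right using \cref{eq:summary_total,eq:summary_suffix}, resolving ties by \cref{eq:shortest_answer}. The $\mathsf{suf}$ field of the merged result is the answer, obtained in $O(\log^2 z)$ total time.

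The main obstacle is making sure the shortest-answer tie-break survives the reduction. The weighted structure compares candidate suffixes by expanded length $L[j]-L[i-1]$, and this must agree with the true length in $X_0$ so that ties are resolved exactly as \cref{eq:shortest_answer} requires. I would justify this through the endpoint argument of \cref{lem:rle_aligned_summary}. A shortest optimal suffix of a run-aligned range either starts on a run boundary or is the single last entry. The full-run suffixes therefore have expanded length equal to their true length, and comparing them with the singleton candidate by value and then length reproduces the correct shortest answer.
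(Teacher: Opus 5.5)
Your proposal is correct and follows essentially the same route as the paper: locate the runs $p$ and $q$, split $[a..b]$ into the two partial-run pieces and the run-aligned middle, answer the middle with the weighted suffix structure of \cref{lem:weighted_suffix} plus the singleton correction of \cref{lem:rle_aligned_summary}, and merge the at most three summaries from left to right. Your extra remark on tie-breaking is a reasonable addition that the paper leaves implicit; it holds because, for suffixes with a fixed right end, fewer runs means smaller expanded length, so the two notions of ``shortest'' coincide.
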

\begin{proof}
Use the weighted suffix R-OMSS data structure of \cref{lem:weighted_suffix} on the run array.
Given a query range $[a..b]$, let $p$ and $q$ be the runs containing $a$ and $b$, respectively.
If $p=q$, the answer is obtained directly from the constant run $X_0[a..b]$.
Otherwise, write the query range as the concatenation of
\( [a..L[p]],\)
 \(   [L[p]+1..L[q-1]],\) 
 and
 \(   [L[q-1]+1..b], \)
omitting the middle range when $q=p+1$.
The first and last ranges are subranges of single runs and have summaries computable in constant time.
For the middle range, its total is obtained from $V$ and $L$, and its suffix is obtained from the weighted suffix R-OMSS query on the run range $[p+1..q-1]$, together with the singleton correction of \cref{lem:rle_aligned_summary}.
Merging the at most three summaries from left to right yields the suffix of $X_\alpha[a..b]$.
Locating $p$ and $q$ takes $O(\log z)$ time, and the weighted suffix R-OMSS query takes $O(\log^2z)$ time.
\end{proof}

The time--space trade-off of \cref{sec:tradeoff} can be applied to the sequence of runs.
At a node containing $m$ runs, we store weighted prefix and suffix R-OMSS catalogs and the WPOMSS data structure of \cref{thmWOPMSS}.
By \cref{lem:rle_aligned_summary}, \textsf{RMQ} supplies the possible singleton answer, so the node stores an exact MSS summary of the corresponding expanded run-aligned range in $O(m)$ space.
The proof of \cref{thm:range_omss_tradeoff} then applies without further changes.
For a query in the original array~$X_0$, we additionally merge the two boundary subranges described in the proof of \cref{cor:rangesuffix_rle}.

\begin{corollary}\label{cor:range_rle_tradeoff}
Let $\tau$ be an integer with $1\le \tau\le\ceil{\log_2z}$.
There exists an $O(\tau z)$-space data structure that supports R-OMSS queries on $X_0$ in
\( O\bigl(\tau z^{1/\tau}\log z\bigr) \)
time.
It can be constructed in $O(\tau z\log^2z)$ time.
In particular, for every fixed $\varepsilon>0$, R-OMSS queries can be supported in
$O(z^\varepsilon\log z)$ time using $O(z)$ space.
\end{corollary}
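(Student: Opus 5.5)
We build the hierarchy $\mathcal T_\tau$ of \cref{sec:tradeoff} over the run indices $[1..z]$ instead of the positions $[1..n]$, with $\Delta:=\ceil{z^{1/\tau}}$. By \cref{lem:tradeoff_hierarchy} applied to $z$, this tree has depth at most $\tau$, and the node ranges at each depth are disjoint run ranges of total size at most $z$. Every node $v$ covering runs $[p_v..q_v]$ with $m_v$ runs represents the run-aligned expanded range $I_v=[L[p_v-1]+1..L[q_v]]$. At $v$ we store the following:
\begin{enumerate}[(1)]
\item $m_v$, together with $L$ and $V$ so that totals can be read off;
\item the weighted prefix and suffix hull catalogs of \cref{lem:weighted_suffix}, restricted to $v$'s run range and built from the points $(L[h]-L[h-t],V[h]-V[h-t])$;
\item the WPOMSS structure of \cref{thmWOPMSS} on $((x_{p_v},\ell_{p_v}),\dots,(x_{q_v},\ell_{q_v}))$.
\end{enumerate}
Globally we keep one \textsf{RMQ} over $x_1,\dots,x_z$.

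Each node stores $O(m_v)$ words and is built in $O(m_v\log^2 m_v)$ time. Summing over one depth gives $O(z)$ space and $O(z\log^2 z)$ time, so over $\tau+1$ depths we get $O(\tau z)$ space and $O(\tau z\log^2 z)$ construction time.

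The analogue of \cref{lem:tradeoff_node_summary} is that $\mathcal S_\alpha(I_v)$ can be produced in $O(\log z)$ time. The total is $V[q_v]-V[p_v-1]-\alpha(L[q_v]-L[p_v-1])$. For the prefix and suffix, we binary search the hull catalogs and compare the result with the singleton candidates, as prescribed by \cref{lem:rle_aligned_summary}. For the best field, we query WPOMSS. If it returns the empty interval, we instead take the singleton at the first position of the run $h$ returned by \textsf{RMQ} on $[p_v..q_v]$.

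The one delicate point is that the summary must carry \emph{shortest} answer ranges, because the merges of \cref{lem:mss_summary_merge} depend on that. \Cref{lem:rle_aligned_summary} already shows the returned ranges are shortest optima. The argument there is that a shortest optimum has run-aligned endpoints, or else is the prescribed singleton. The weighted structures break ties by expanded length $\lambda$, which is the length in $X_0$. So the summary is exact.

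To answer a query $(\alpha,[a..b])$:
\begin{enumerate}[(1)]
\item Locate the runs $p\ni a$ and $q\ni b$ by predecessor search in $L$, in $O(\log z)$ time.
\item If $p=q$, answer directly from the constant run.
\item Otherwise, split $[a..b]$ as in the proof of \cref{cor:rangesuffix_rle} into a partial run $[a..L[p]]$, the full-run range $[p+1..q-1]$ (possibly empty), and a partial run $[L[q-1]+1..b]$. Each partial run gets its summary in constant time as described before \cref{lem:sakai_rle}.
\item Decompose the middle run range with \cref{lem:tradeoff_cover} over $z$ into $O(\tau z^{1/\tau})$ nodes, in left-to-right order. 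Each node's range is run-aligned in $X_0$, so its summary is obtained in $O(\log z)$ time as above.
\item Merge all summaries left to right with \cref{lem:mss_summary_merge}. These are adjacent pieces of $X_0[a..b]$, so the final $\mathsf{best}$ field is a shortest R-OMSS answer.
\end{enumerate}
The query time is $O(\tau z^{1/\tau}\log z)$. For the last claim, set $\tau=\ceil{1/\varepsilon}$ as in \cref{cor:range_omss_linear_space}.

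I expect the main obstacle to be the prefix and suffix catalogs. We must check that \cref{lem:weighted_suffix}'s threshold descent, or equivalently a direct hull binary search per node, gives the shortest optimal full-run prefix or suffix of a node in $O(\log m_v)$ time. We must also check that combining this with the singleton candidate correctly handles ties by length. The fix is to store at each node its own upper hull of prefix and suffix points, with $x$-coordinates $L[\cdot]$ differences, in $O(m_v)$ space. We then answer by binary search over hull breakpoints under the right-active convention, exactly as in \cref{sec:tradeoff_node_summaries}.
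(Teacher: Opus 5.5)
Your proposal is correct and follows essentially the same route as the paper. You build the hierarchy $\mathcal T_\tau$ over the $z$ runs and store per-node weighted prefix/suffix catalogs plus the WPOMSS structure of \cref{thmWOPMSS}, with a global \textsf{RMQ} supplying the singleton answers of \cref{lem:rle_aligned_summary}; the two partial boundary runs are then handled as in \cref{cor:rangesuffix_rle} before merging via \cref{lem:mss_summary_merge}. Your explicit per-node upper hulls over expanded-length coordinates, queried under the right-active convention, are the weighted analogue of \cref{sec:tradeoff_node_summaries} that the paper's remark that the proof ``applies without further changes'' implicitly relies on.
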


\begin{remark}\label{remark:range_rle}
When the offset is restricted to an integer universe $[0..U-1]$, the breakpoints in the (weighted) catalogs can be rounded and stored in integer predecessor structures exactly as in \cref{lem:sakai_integer}.
The run containing a query position is found by a predecessor query in $L\subseteq[1..n]$.
Thus, the corresponding bounds are obtained by substituting the query time of the chosen predecessor structure for the binary searches on offset catalogs and adding the predecessor time for $L$.
For example, a y-fast trie gives $O(\log\log U)$ time for an offset lookup and $O(\log\log n)$ time for locating a run, while using linear space in the number of stored keys.
\end{remark}

\section{Lower Bounds}\label{sec:lb}
Finally, we present lower bound results for data structures that answer R-OMSS queries on $X_0$. 
We first show a reduction from the predecessor query to the OMSS query,
and subsequently
show that any encoding data structure that answers suffix R-OMSS queries requires at least $\Omega(n \log n)$ bits.

\subparagraph*{Time-Space Trade-Offs Lower Bound.}
The following theorem gives a reduction from predecessor queries to OMSS queries. 

\begin{theorem}\label{thm:omss_predtradeoff}
    For any set $S$ of $n$ values from a universe $U \subset \mathbb R$, 
    given a value $\alpha \in U$, 
    the predecessor of $\alpha$ in $S$ can be determined by an OMSS query and at most two accesses to an array of size $n$.
\end{theorem}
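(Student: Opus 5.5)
The plan is to encode $S$ directly as a sorted array, so that the OMSS answer for offset $\alpha$ is exactly the suffix of elements exceeding $\alpha$; the predecessor then sits immediately to the left of that suffix.

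\textbf{Construction.} Let $S=\{s_1<s_2<\cdots<s_n\}$; the elements are distinct because $S$ is a set. Define $X_0[i]:=s_i$ for every $i\in[1..n]$. Since $X_0$ is strictly increasing, so is $X_\alpha[i]=s_i-\alpha$. Let $k$ be the largest index with $s_k\le\alpha$, setting $k:=0$ if no such index exists. Then $X_\alpha[i]>0$ exactly for $i\in[k+1..n]$. Moreover, at most one index satisfies $X_\alpha[i]=0$, namely $i=k$ when $s_k=\alpha$.

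\textbf{Step 1: the answer is the positive suffix.} Suppose $k<n$, so some entry is positive. Every segment has value at most the sum of its positive entries. Since the positive entries form the contiguous suffix $[k+1..n]$, the maximum value equals $S_\alpha(k+1,n)>0$. Any other range attaining this value must contain all of $[k+1..n]$ and may only add entries equal to zero. By the shortest-answer rule of \cref{eq:shortest_answer}, a possible zero entry at position $k$ is dropped. Hence the shortest answer range is exactly $[k+1..n]$.

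\textbf{Step 2: the degenerate case.} Suppose $k=n$, so no entry is positive. By the convention fixed after \cref{lemma:sakaiDS}, the answer is then the singleton at a maximum entry, which here is $[n..n]$. Note that this coincides with the answer for the case $k=n-1$.

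\textbf{Step 3: decoding.} Let $[\ell..r]$ be the answer returned by the OMSS query. We access $X_0[\ell]$. If $X_0[\ell]\le\alpha$, we are in Step 2 with $\ell=n$, and the predecessor is $X_0[n]$. Otherwise we are in Step 1 and $\ell=k+1$. If $\ell=1$, then $\alpha$ has no predecessor in $S$. If $\ell>1$, we make a second access and report $X_0[\ell-1]=s_k$ as the predecessor. In every case at most two array accesses are used.

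\textbf{Expected obstacle.} The main subtlety is that OMSS is defined only for $\alpha\ge0$, while $U\subset\mathbb R$ may contain negative values. I would first handle this by the remark in \cref{sec:range_omss}: negative offsets are covered by the negated-array structure within the same bounds. If the reduction must stay within a single OMSS query, the alternative is to shift all keys and the query by a constant $c$ with $\min U+c\ge0$. This shift preserves the order, and hence the predecessor, so the argument above goes through unchanged. The remaining care lies in the tie cases, namely a key equal to $\alpha$ and the all-nonpositive case. These are exactly what the right-active and shortest-answer conventions settle, so I would verify the decoding rule against those conventions explicitly.
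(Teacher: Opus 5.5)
Your proposal is correct and follows the paper's proof: both store $S$ in sorted order, observe that the shortest OMSS answer is the suffix of entries strictly exceeding $\alpha$ (a zero entry is dropped by the shortest-answer rule), and resolve the only ambiguous answer, $[n..n]$, with at most two accesses to $X_0[n]$ and $X_0[n-1]$. Your decoding tests $X_0[\ell]$ first, which also handles $n=1$ cleanly, and your remark on negative $\alpha$ covers a point the paper's proof leaves implicit; the shift works only when $U$ is bounded below, and otherwise the negated-array structure applies.
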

\begin{proof}
   Suppose we want to answer a predecessor query for a value $\alpha \in U$ in $S$.
   For that, 
   we have constructed, in a precomputation step, an OMSS data structure on $X_0$, 
   where $X_0$ stores the values of $S$ in increasing order,
   and compute the OMSS query with offset $\alpha$.
   If the answer range for $\alpha$ is $[n..n]$, 
   then the predecessor of $\alpha$ is either $X_0[n]$ (if $X_0[n] \le \alpha$) or $X_0[n-1]$ (if $X_0[n] > \alpha$).
   If the answer range is $[1..n]$, then there is no predecessor of $\alpha$ in $S$.
   Otherwise, the answer range is $[i..n]$ with $i \in [2..n-1]$.
   Then the predecessor of $\alpha$ is $X_0[i-1]$ 
   since 
   (a) $X_{\alpha}[k] < 0$ for all $k \le i-2$,
   (b) $X_{\alpha}[i-1] \le 0$,  and
   (c) $X_{\alpha}[k] > 0$ for all $k \ge i$.
   By \cref{eq:shortest_answer}, zero-valued entries adjacent to the positive suffix
   are omitted, so the returned range starts at the first position whose value is strictly
   larger than $\alpha$.
\end{proof}

The conclusion inherited from \cref{thm:omss_predtradeoff} depends on the predecessor model.
In the comparison model over arbitrary real keys, a predecessor query has $n+1$ possible answers.
Hence, its comparison decision tree has at least $n+1$ leaves and depth at least
$\ceil{\log_2(n+1)}$.
Since the reduction uses one OMSS query followed by only $O(1)$ additional comparisons and array accesses, every comparison-based OMSS index requires $\Omega(\log n)$ query time.
The same lower bound applies to suffix R-OMSS because, 
on the sorted instances used in \cref{thm:omss_predtradeoff},
every OMSS answer is a suffix of $[1..n]$. 
It applies also to R-OMSS because OMSS is a special case of R-OMSS.

For integer keys from $[0..2^\ell-1]$, let the cell size be $w \ge \ell$ bits.
If an OMSS index uses $S$ cells and answers a query in $t$ cell probes, then the reduction of \cref{thm:omss_predtradeoff}, together with the sorted input array, yields a predecessor structure using $S+O(n)$ cells and $t+O(1)$ probes.
Thus, writing $T_{\mathrm{pred}}(n,\ell,w,S)$ for the Pătraşcu and Thorup's predecessor lower bound~\cite{puatracscu2006time}, 
we obtain
\(
    t+O(1)
    \ge
    T_{\mathrm{pred}}(n,\ell,w,S+O(n)).
\)
This is a parameter-dependent trade-off and does not imply an unconditional $\Omega(\log n)$ lower bound in the word-RAM or cell-probe model.
For example, for polynomial universes, where $\ell=\Theta(\log n)$, and near-linear space, predecessor search can be supported in $\Theta(\log\ell)=\Theta(\log\log n)$ time.

\subparagraph*{Space Lower Bounds in the Encoding Model. }
One may also study the R-OMSS query in the variant where only the answer range in $X_0$ is returned.
In this setting, the problem is considered under the \emph{encoding model}: after preprocessing $X_0$, we maintain only the information necessary to answer queries, without storing $X_0$ explicitly.
For many range queries, encoding data structures use less space than explicitly storing $X_0$; see~\cite{raman2015encoding} for an overview.
For example, there exists an $O(n)$-bit encoding for range MSS queries under a word-RAM model with word size $\Theta(\log n)$ bits~\cite{gawrychowski2015encodings}, whereas explicitly storing an array over a universe of size $n$ requires $\Theta(n\log n)$ bits.

The following elementary observation shows that suffix R-OMSS queries can simulate access to the input array.

\begin{lemma}\label{lem:suffix_r_omss_access}
Let $i\in[1..n-1]$ and let $\alpha\ge0$.
For a suffix R-OMSS query on the range $[i..i+1]$, 
the shortest answer range is
\( [i..i+1] \)
if and only if
\( X_0[i]>\alpha. \)
Otherwise, the shortest answer range is $[i+1..i+1]$.
\end{lemma}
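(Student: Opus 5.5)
The query range $[i..i+1]$ has exactly two suffixes, $[i+1..i+1]$ and $[i..i+1]$, so the plan is to compare their scores directly using the lexicographic rule of \cref{eq:shortest_answer}. The scores are
\(
    S_\alpha(i+1,i+1)=X_0[i+1]-\alpha
\)
and
\(
    S_\alpha(i,i+1)=(X_0[i]-\alpha)+(X_0[i+1]-\alpha).
\)
Their difference is exactly $S_\alpha(i,i)=X_0[i]-\alpha$. The value of $X_0[i+1]$ cancels, so the outcome depends only on the sign of $X_0[i]-\alpha$.

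I would then split into three cases.
\begin{enumerate}[(1)]
    \item If $X_0[i]>\alpha$, the longer suffix has strictly larger score. It is therefore the unique maximizer of the first coordinate of \cref{eq:shortest_answer}, and the shortest answer is $[i..i+1]$.
    \item If $X_0[i]<\alpha$, the singleton $[i+1..i+1]$ has strictly larger score and is returned.
    \item If $X_0[i]=\alpha$, both suffixes attain the same value. The tie-break on the second coordinate $-(r-\ell+1)$ then prefers the shorter range $[i+1..i+1]$. This agrees with the earlier remark that a shortest answer has no zero-sum proper prefix, since $[i..i]$ would be such a prefix.
\end{enumerate}
Cases (2) and (3) together give the ``otherwise'' clause, and case (1) gives the forward direction of the equivalence. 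The converse follows because cases (2) and (3) exhaust $X_0[i]\le\alpha$ and both return $[i+1..i+1]\neq[i..i+1]$.

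I do not expect a real obstacle. The only point needing care is the equality case $X_0[i]=\alpha$, where the strict inequality in the statement depends on the shortest-answer tie-break. I would state that case explicitly rather than folding it into a non-strict comparison. The uniqueness of a shortest optimum suffix, noted in \cref{sec:prelim}, ensures that the returned range is well defined in all three cases.
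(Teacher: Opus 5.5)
Your proposal is correct and follows essentially the same argument as the paper. Both compare the scores of the only two suffixes $[i..i+1]$ and $[i+1..i+1]$, observe that the comparison reduces to the sign of $X_0[i]-\alpha$, and resolve the equality case $X_0[i]=\alpha$ by the shortest-answer tie-break of \cref{eq:shortest_answer}; your explicit three-way case split is just a more detailed presentation of that reasoning.
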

\begin{proof}
The only two suffixes of $[i..i+1]$ are $[i..i+1]$ and $[i+1..i+1]$.
Their scores are
\(S_\alpha([i..i+1])=X_0[i]+X_0[i+1]-2\alpha\)
and
\(S_\alpha([i+1..i+1])= X_0[i+1]-\alpha\),
respectively.
The first value exceeds the second one if and only if $X_0[i]>\alpha$.
If $X_0[i]=\alpha$, the two values are equal, and
\cref{eq:shortest_answer} selects the shorter range $[i+1..i+1]$.
\end{proof}

We now prove that an encoding for suffix R-OMSS queries contains enough information to reconstruct every permutation.

\begin{theorem}\label{thm:range_omss_lb}
Any permutation $X_0$ of $[1..n]$ can be reconstructed from an encoding that supports suffix R-OMSS queries.
Consequently, every such encoding requires at least
\( \lg(n!)=\Omega(n\log n) \) bits.
\end{theorem}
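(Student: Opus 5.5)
We reconstruct $X_0$ one entry at a time using \cref{lem:suffix_r_omss_access}, which shows that suffix R-OMSS queries on two-element ranges act as threshold comparisons against the input entries. For each $i\in[1..n-1]$ and each candidate threshold $\alpha\in\{0,1,\ldots,n-1\}$, we pose the suffix R-OMSS query with range $[i..i+1]$ and offset $\alpha$. By \cref{lem:suffix_r_omss_access}, the returned range is $[i..i+1]$ exactly when $X_0[i]>\alpha$. Hence the number of offsets $\alpha\in\{0,\ldots,n-1\}$ for which the answer is $[i..i+1]$ equals $X_0[i]$, since $X_0[i]$ is an integer in $[1..n]$. Equivalently, $X_0[i]$ is the smallest $\alpha$ whose answer is $[i+1..i+1]$, and a binary search over $\alpha$ finds it. All offsets used are non-negative, as the query definition requires.

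This recovers $X_0[1..n-1]$. Since $X_0$ is a permutation of $[1..n]$, the last entry is determined as the unique value of $[1..n]$ missing from $X_0[1..n-1]$. (Alternatively, we can query the range $[n-1..n]$ with both endpoints swapped via the symmetric prefix version, but that is unnecessary.) The reconstruction only reads answer ranges, which is exactly what an encoding must provide.

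For the counting step, the reconstruction map sends any encoding to a unique permutation. Distinct permutations therefore need distinct encodings, so some permutation requires at least $\lg(n!)$ bits. Stirling's approximation gives $\lg(n!)=n\lg n-O(n)=\Omega(n\log n)$.

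The only delicate point is the tie case $X_0[i]=\alpha$. \Cref{lem:suffix_r_omss_access} already resolves it through the shortest-answer rule (\cref{eq:shortest_answer}), so the threshold test is a strict comparison and the count above is exact. I expect no real obstacle beyond stating the pigeonhole argument cleanly.
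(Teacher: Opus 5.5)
Your proposal is correct and follows the paper's proof: a threshold test via \cref{lem:suffix_r_omss_access} on $[i..i+1]$ with binary search over offsets, recovery of the last entry from the permutation property, and the counting argument. The only difference is that you use integer offsets and rely on the lemma's tie-breaking, whereas the paper uses half-integer offsets $k+\frac12$ and so avoids ties. Your ``smallest $\alpha$'' phrasing also needs the convention that if no $\alpha\le n-1$ works then $X_0[i]=n$; your counting formulation already handles this case.
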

\begin{proof}
Fix a position $i\in[1..n-1]$.
By \cref{lem:suffix_r_omss_access}, a query on $[i..i+1]$ with offset
\( \alpha=k+\frac12 \)
determines whether $X_0[i]\ge k+1$.
A binary search over $k\in[0..n-1]$ therefore determines $X_0[i]$ using
$O(\log n)$ suffix R-OMSS queries.

Applying this procedure to every $i\in[1..n-1]$ reconstructs
$X_0[1..n-1]$.
Since $X_0$ is a permutation of $[1..n]$, its last entry is the unique value of $[1..n]$
that has not appeared among the first $n-1$ entries.
Hence, the complete permutation can be reconstructed from the encoding.

There are $n!$ permutations of $[1..n]$.
Two distinct permutations must therefore have distinct encodings, and any encoding requires
at least $\lg(n!)=\Omega(n\log n)$ bits in the worst case.
\end{proof}

Since suffix R-OMSS is a special case of R-OMSS, the same lower bound applies to encodings for general R-OMSS queries.
In particular, the $O(n)$-word space bound of \cref{thm:range_suffix_omss} is optimal up to a constant factor in the word-RAM model with word size $\Theta(\log n)$.

The same access argument gives the corresponding lower bound for bounded alphabets.

\begin{corollary}\label{cor:range_omss_boundedalphabet}
For every $\sigma\ge2$, any encoding data structure that supports suffix R-OMSS queries on arrays of length $n$ over an alphabet of size $\sigma$ requires
\( \Omega(n\log\sigma) \)
bits.
The lower bound also applies to R-OMSS queries.
\end{corollary}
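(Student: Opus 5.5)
We reduce to an information-theoretic counting argument based on \cref{lem:suffix_r_omss_access}, as in the proof of \cref{thm:range_omss_lb}. Take the alphabet $\Sigma=\{1,\ldots,\sigma\}\subset\mathbb R$ and let $X_0$ range over all $\sigma^n$ arrays in $\Sigma^n$. For a fixed position $i\in[1..n-1]$ and $k\in[0..\sigma-1]$, a suffix R-OMSS query on $[i..i+1]$ with offset $\alpha=k+\frac12\ge0$ tells us, by \cref{lem:suffix_r_omss_access}, whether $X_0[i]\ge k+1$. A binary search over $k$ with $O(\log\sigma)$ such queries therefore recovers $X_0[i]$. So the encoding determines $X_0[1..n-1]$ completely.

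The only gap compared to the permutation case is the last entry $X_0[n]$, which can no longer be inferred from the others. I would avoid it by restricting attention to arrays with a fixed last entry, say $X_0[n]=1$. There are $\sigma^{n-1}$ such arrays, and any two distinct ones differ in some position $i\le n-1$. Hence they must have distinct encodings, since some query on $[i..i+1]$ answers differently for them. This forces at least $\lg(\sigma^{n-1})=(n-1)\lg\sigma=\Omega(n\log\sigma)$ bits in the worst case for $n\ge2$. The alternative is to query $X_0[n]$ via a query on $[n-1..n]$ at a large offset. That does not work, since the answer there depends only on $X_0[n-1]$, which is why fixing $X_0[n]$ is the clean fix.

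For the R-OMSS part, I would observe that a general R-OMSS structure answers the suffix query on $[i..i+1]$ directly. The only ranges are $[i..i]$, $[i+1..i+1]$ and $[i..i+1]$. With an offset $\alpha=k+\frac12$ and integer entries no score ties to zero, so the answer, together with its value (computable from the range if the encoding returns only ranges, since it suffices to compare candidates), still distinguishes $X_0[i]>\alpha$. The cleanest route is to note that any encoding for R-OMSS can simulate suffix R-OMSS on length-two ranges. On $[i..i+1]$ the suffix answer is $[i..i+1]$ if the R-OMSS answer is $[i..i+1]$, and otherwise it is determined by comparing $X_0[i+1]-\alpha$ with $X_0[i]+X_0[i+1]-2\alpha$.

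If that simulation is awkward, I would use the more direct fact that the R-OMSS query on the single range $[i..i]$ always returns $[i..i]$ and gives no information. So I would instead use the query on $[i..i+1]$ with the restricted family $X_0[i+1]=\sigma$ at odd positions, making $[i+1..i+1]$ or $[i..i+1]$ the answer exactly according to whether $X_0[i]>\alpha$. This loses only a constant factor of positions and keeps the bound $\Omega(n\log\sigma)$. I expect this adaptation to general R-OMSS to be the main (minor) obstacle, while the suffix case is routine.
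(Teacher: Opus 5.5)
Your suffix argument is the paper's proof. You fix the last letter to $1$, recover every $X_0[i]$ with $i\le n-1$ by a binary search over half-integer offsets using \cref{lem:suffix_r_omss_access}, and count the $\sigma^{n-1}$ arrays of $[1..\sigma]^{n-1}\times\{1\}$.

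You depart from the paper in the R-OMSS part. The paper gives no separate argument there; it relies on the remark that suffix R-OMSS is a special case of R-OMSS. You are right not to trust this in the encoding model, where only ranges are returned. For $\sigma\ge3$, the arrays $(2,1,\ldots,1)$ and $(3,1,\ldots,1)$ receive identical answers to every R-OMSS query. Ranges avoiding position~$1$ see identical entries, and on any $[1..r]$ with $r\ge2$ the answer is $[1..r]$ for $\alpha<1$ and $[1..1]$ for $\alpha\ge1$. So the fixed-last-letter family does not give the R-OMSS bound.

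For the same reason, your first ``simulation'' is circular. Deciding which of $X_0[i+1]-\alpha$ and $X_0[i]+X_0[i+1]-2\alpha$ is larger is exactly deciding $X_0[i]>\alpha$, which is what you are trying to learn. Your remark about values does not apply either, since values are not returned in the encoding model. Drop both attempts.

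Keep your fallback, which is sound. Fix all even positions (and position $n$ if $n$ is odd) to $\sigma$. Then $X_\alpha[i+1]\ge\frac12$ for every odd $i$ and every $\alpha=k+\frac12$ with $k\le\sigma-1$. Since no entry equals $\alpha$, the R-OMSS answer on $[i..i+1]$ is $[i..i+1]$ if $X_0[i]>\alpha$ and $[i+1..i+1]$ otherwise. This distinguishes all $\sigma^{\lfloor n/2\rfloor}$ arrays of the family and gives $\lfloor n/2\rfloor\lg\sigma=\Omega(n\log\sigma)$ bits for both R-OMSS and suffix R-OMSS. The cost is a factor of two in the constant, and in exchange you get the rigorous justification that the paper's one-line transfer lacks.
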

\begin{proof}
    \Cref{lem:suffix_r_omss_access} allows us to reconstruct an input up to the last letter.
    For full reconstruction, it therefore suffices to fix the last letter.
    So given an alphabet $\Sigma := [1..\sigma]$,
let
\(
    \mathcal X
    :=
    [1..\sigma]^{n-1}\times\{1\}
\)
be the set of all strings of length $n$ build on $\Sigma$ and ending with the letter~$1$.
For every $X_0\in\mathcal X$ and every $i\in[1..n-1]$,
\cref{lem:suffix_r_omss_access} and a binary search with half-integer offsets reconstruct $X_0[i]$ using $O(\log\sigma)$ queries.
Therefore, we can reconstruct $X_0$ from its encoding.
Since
\(
    |\mathcal X|=\sigma^{n-1},
\)
the encoding must use at least
\(
    \log|\mathcal X|
    =
    (n-1)\log\sigma
    =
    \Omega(n\log\sigma)
\)
bits.
\end{proof}

\cref{cor:range_omss_boundedalphabet} shows that, also for bounded alphabets, suffix R-OMSS and R-OMSS encodings require asymptotically as much space as the input itself.

\section{Open Problems}

There are many gaps left open regarding time and space complexities.
It is puzzling to us whether we can construct the data structure of \cref{lemma:sakaiDS} in optimal $O(n)$ time.
This would improve the construction time of our R-OMSS data structures in \cref{thm:range_omss}.
Here, we additionally want to improve the space to $O(n)$, which would be optimal by \cref{thm:range_omss_lb}.
Improving the query time to $O(\log n)$ is also an interesting open problem.
Such a bound would match the lower bound transferred by \cref{thm:omss_predtradeoff} in the comparison model over arbitrary real offsets.
For bounded integer universes in the word-RAM or cell-probe model, the inherited target instead depends on the universe size, word size, and space usage.
In case that the number of distinct values in $X_0$ is bounded, 
\cref{thm:bounded_alphabet} settles the maximum number of pairwise non-compatible offsets.
Determining the corresponding tight bound for larger fixed alphabets over arbitrary real values remains open.

For future directions, we want to study the variations of the MSS problem described in the introduction as query problems with offset and/or query range parameters.

\bibliography{refs}

\clearpage
\appendix
\section{Proof of \cref{thmWOPMSS}}\label{app:thmWOPMSS}

We here prove \cref{thmWOPMSS} of \cref{sec:rle} by following Sakai's construction~\cite{sakai2024data} statement by statement.
Recall the prefix sums $L$ and $V$ defined as in \cref{sec:rle}.
For a run interval $I=[i..j]$, write
\(
    \lambda(I):=L[j]-L[i-1]
\)
for its total expanded length and abbreviate
\(
    \widehat S_\alpha(I)
    :=
    \widehat S_\alpha(i,j)
    =V[j]-V[i-1]-\alpha\lambda(I).
\)
Since all run lengths are positive, the prefix-sums of the lengths
\(
    L[0]<L[1]<\cdots<L[z]
\)
are strictly monotonically increasing.

\subsection{Weighted splitting lemma}

We first establish the weighted counterpart of Sakai's Lemma~1~\cite[Lemma~1]{sakai2024data}.
Call a run interval $[i..j]$ \emph{boundary-positive} at offset $\alpha$ if all its nonempty prefixes and suffixes have positive score.
For a nonempty run interval $[i..j]$, let $\widehat\alpha(i,j)$ be the smallest offset at which $[i..j]$ is not boundary-positive, 
and let $\widehat\kappa(i,j)$ be an index $k\in[i..j]$ such that, 
at offset $\widehat\alpha(i,j)$, at least one of the prefix $[i..k]$ and the suffix $[k..j]$ has score zero.

\begin{lemma}[{\cite[Lemma~1]{sakai2024data}}] \label{lem:weighted_sakai_split}
For every offset $\alpha$ and every nonempty run interval $[i..j]$, 
if $[i..j]$ is boundary-positive at offset $\alpha$, 
then $[i..j]$ is its unique maximum-score interval.  
Otherwise, if the maximum score of a nonempty interval is positive, at least one of an arbitrary maximum-score interval of
\( [i..\widehat\kappa(i,j)-1] \)
and
  \(  [\widehat\kappa(i,j)+1..j]\)
is a maximum-score interval of $[i..j]$.
\end{lemma}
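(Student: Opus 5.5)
Both claims follow from one splitting observation: a maximum-score interval either contains position $k:=\widehat\kappa(i,j)$ or lies entirely on one side of it. Throughout I use only the additivity
\(
    \widehat S_\alpha(I\cup J)=\widehat S_\alpha(I)+\widehat S_\alpha(J)
\)
for adjacent run intervals $I$ and $J$, and the identity
\(
    \widehat S_\alpha(I)=\widehat S_\beta(I)+(\beta-\alpha)\lambda(I),
\)
where $\lambda(I)>0$ because every run length is positive.

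\emph{Step 1 (boundary-positive case).} Let $[i'..j']\subseteq[i..j]$ with $[i'..j']\neq[i..j]$. If $i'>i$, the nonempty prefix $[i..i'-1]$ has positive score. If $j'<j$, the nonempty suffix $[j'+1..j]$ has positive score. Adding these pieces to $[i'..j']$ strictly increases the score. Hence $[i..j]$ strictly beats every proper subinterval and is the unique maximizer.

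\emph{Step 2 (setup for the non-boundary-positive case).} Suppose $[i..j]$ is not boundary-positive at $\alpha$. The scores of prefixes and suffixes are strictly decreasing affine functions of the offset, so the set of offsets at which $[i..j]$ is boundary-positive is an open down-ray. Its right endpoint is $\widehat\alpha:=\widehat\alpha(i,j)$, so we get $\alpha\ge\widehat\alpha$. By the definition of $\widehat\kappa$, at $\widehat\alpha$ either the prefix $[i..k]$ or the suffix $[k..j]$ has score $0$. By symmetry, assume it is the prefix. Then $\widehat S_\alpha(i,k)\le 0$ by monotonicity. Moreover, at $\widehat\alpha$ every proper prefix $[i..k']$ with $k'<k$ has nonnegative score, since $\widehat\alpha$ is the infimum of non-boundary-positive offsets and continuity applies. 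The same holds for every nonempty suffix of $[i..j]$.

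\emph{Step 3 (main case analysis).} Let $[a..b]$ be a maximum-score interval of $[i..j]$ with positive score. If $b<k$ or $a>k$, then $[a..b]$ lies inside one of the two sides. Its score is then matched by any maximum-score interval of that side, and the claim follows. So the only obstacle is the case $a\le k\le b$, and I would show that such an interval is never strictly better than the best interval avoiding $k$.

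Write $[a..b]=[a..k]\cup[k+1..b]$. I first compare $\widehat S_\alpha(a,k)$ with zero. Since $\widehat S_\alpha(i,k)\le 0$, we have
\(
    \widehat S_\alpha(a,k)=\widehat S_\alpha(i,k)-\widehat S_\alpha(i,a-1)\le-\widehat S_\alpha(i,a-1)
\)
when $a>i$, and $\widehat S_\alpha(a,k)\le 0$ directly when $a=i$. The remaining issue is that $\widehat S_\alpha(i,a-1)$ may be negative at $\alpha$, even though it is nonnegative at $\widehat\alpha$. Here I use the shift identity:
\[
    \widehat S_\alpha(a,k)=\widehat S_{\widehat\alpha}(i,k)-\widehat S_{\widehat\alpha}(i,a-1)-(\alpha-\widehat\alpha)\lambda(a,k)\le 0-0-0=0 .
\]
The second term is subtracted, so its nonnegativity at $\widehat\alpha$ makes this valid. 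Hence $\widehat S_\alpha(a,k)\le 0$.

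\emph{Step 4 (conclusion and the degenerate subcase).} From Step 3, $\widehat S_\alpha(k+1,b)\ge\widehat S_\alpha(a,b)>0$ whenever $b>k$. So $[k+1..b]\subseteq[k+1..j]$ attains the maximum, and a maximum interval of the right side is a maximum of $[i..j]$.

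The subcase $b=k$ needs separate handling. Then $\widehat S_\alpha(a,b)\le 0$, which contradicts positivity, so this subcase cannot occur. The suffix-zero case is symmetric.

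The main obstacle is Step 3: the argument must pass carefully between offsets $\alpha$ and $\widehat\alpha$, using nonnegativity of the proper prefixes at $\widehat\alpha$ rather than at $\alpha$.
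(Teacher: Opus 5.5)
Your proposal is correct and follows essentially the same route as the paper: the boundary-positive case is handled by re-adding the omitted positive prefix or suffix. Otherwise you show $\widehat S_\alpha(a,k)\le 0$ for every $a\le k$, using $\widehat S_{\widehat\alpha}(i,k)=0$, nonnegativity of all prefixes at $\widehat\alpha$, and the shift to $\alpha\ge\widehat\alpha$, so that deleting $[a..k]$ never lowers the score. Your explicit exclusion of the subcase $b=k$ via positivity of the maximum spells out what the paper leaves implicit in its ``possibly empty'' remark.
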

\begin{proof}
If $[i..j]$ is boundary-positive, every proper subinterval omits a nonempty prefix, a nonempty suffix, or both, each having positive score.
Hence, the score of every proper subinterval is strictly smaller than that of $[i..j]$.

Now suppose that $[i..j]$ is not boundary-positive at offset $\alpha$.
Set
\( \beta:=\widehat\alpha(i,j)\)
and
\( k:=\widehat\kappa(i,j). \)
Then $\alpha\ge\beta$.
Suppose first that the prefix $[i..k]$ has score zero at offset $\beta$;
the suffix case is symmetric.
By the minimality of $\beta$, every prefix of $[i..j]$ has nonnegative
score at offset $\beta$.
Consequently, for every $g\in[i..k]$,
\(
    \widehat S_\beta(g,k)
    =
    -\widehat S_\beta(i,g-1)
    \le 0.
\)
Since $L[k]-L[g-1]>0$, for every $\alpha\ge\beta$,
\(
    \widehat S_\alpha(g,k)
    =
    \widehat S_\beta(g,k)
    -(\alpha-\beta)\bigl(L[k]-L[g-1]\bigr)
    \le0.
\)
Thus, for every interval $[g..h]$ containing $k$, deleting $[g..k]$ produces the interval $[k+1..h]$, possibly empty, without decreasing its score.  
Hence, we can choose a maximum-score interval of $[i..j]$ that is a subrange of $[i..k-1]$ or $[k+1..j]$.  
Consequently, a maximum-score interval of the corresponding side then has the same score as a maximum-score interval of $[i..j]$.
\end{proof}

\subsection{Weighted split data structure}

Sakai defines the density
\(
    \delta(i,j)
    :=
    \frac{S_0(i,j)}{j-i+1}
\)
and uses it to determine the offset and position at which an interval
first ceases to be boundary-positive.
We replace it by the \emph{weighted density}
\(
    \widehat\delta(i,j)
    :=
    \frac{\widehat S_0(i,j)}{L[j]-L[i-1]}
    =
    \frac{V[j]-V[i-1]}{L[j]-L[i-1]}.
\)
This is the slope of the line joining
\( \bigl(L[i-1],V[i-1]\bigr)\)
with
\( \bigl(L[j],V[j]\bigr). \)

For indices $g\le h$, let $\widehat H(g,h)$ be the lower convex hull of the point set
\(
    \bigl\{\bigl(L[k],V[k]\bigr):k\in[g..h]\bigr\},
\)
and let $\widehat K'(g,h)$ be the sequence, in increasing order, 
of the indices whose points are vertices of $\widehat H(g,h)$.

\begin{lemma}[{\cite[Lemma~2]{sakai2024data}}]
\label{lem:weighted_sakai_tangent}
For all indices $i,g,h$ satisfying $1\le i\le g\le h\le z$, 
a binary search of $\widehat K'(g,h)$ finds an index $k\in[g..h]$ minimizing $\widehat\delta(i,k)$.
\end{lemma}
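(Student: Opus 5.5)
The plan is to reduce the statement to a classical tangent-finding argument on a lower convex hull. Write $p_k := (L[k],V[k])$. Then $\widehat\delta(i,k)$ is the slope of the segment from the fixed point $q := p_{i-1}$ to $p_k$. The $x$-coordinates $L[\cdot]$ are strictly increasing, so $q$ lies strictly to the left of every point $p_k$ with $k\in[g..h]$, because $i-1<g$. Minimizing $\widehat\delta(i,k)$ over $k\in[g..h]$ is therefore the same as finding the lower tangent from $q$ to the point set $\{p_k : k\in[g..h]\}$. This is exactly Sakai's unweighted Lemma~2 with the $x$-coordinate $k$ replaced by $L[k]$. I will argue that the proof only uses strict monotonicity of the abscissae, which we have.

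The first step shows that a minimizer can be taken to be a hull vertex. Suppose $p_k$ minimizes the slope from $q$, with minimum slope $s$. Then every $p_{k'}$ lies on or above the line through $q$ with slope $s$. Since all these points lie to the right of $q$, this line is a supporting line of the set from below, and it touches the set at $p_k$. So the supporting line meets $\widehat H(g,h)$ in a vertex or an edge, and in either case some vertex of $\widehat K'(g,h)$ attains the same slope $s$.

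The second step establishes unimodality along the hull. Let $v_1,\dots,v_r$ be the vertices of $\widehat K'(g,h)$ in increasing order. I will show that the sequence $\sigma_t := $ slope$(q,p_{v_t})$ is non-increasing and then non-decreasing. The key comparison is between consecutive vertices: $\sigma_{t+1}<\sigma_t$ holds exactly when the hull edge slope $e_t$ from $p_{v_t}$ to $p_{v_{t+1}}$ satisfies $e_t<\sigma_t$. This follows because $p_{v_{t+1}}$ lies to the right of $p_{v_t}$, and both lie to the right of $q$. The hull edge slopes $e_t$ increase with $t$ by lower convexity. Hence once $e_t\ge\sigma_t$, we get $\sigma_{t+1}\le e_t$ (the slope from $q$ is a weighted average of $\sigma_t$ and $e_t$), so $e_{t+1}\ge e_t\ge\sigma_{t+1}$, and the sequence stays non-decreasing from then on.

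The binary search then works as follows. At a middle vertex $v_t$, compare $e_t$ with $\sigma_t$, using $O(1)$ arithmetic from the values of $L$ and $V$. If $e_t<\sigma_t$, recurse to the right; otherwise recurse to the left, keeping $v_t$. After $O(\log z)$ steps this returns a vertex minimizing $\widehat\delta(i,k)$, and by the first step that vertex is a global minimizer over $[g..h]$.

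The main obstacle is the tie handling in the unimodality claim, where $\sigma_t$ can be constant along a hull edge collinear with $q$. I will handle it by branching on the non-strict comparison, so that any plateau is still located. The statement asks only for some minimizing index, so equality cases are harmless.
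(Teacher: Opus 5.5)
Your proposal is correct and takes the same route as the paper. The paper likewise identifies $\widehat\delta(i,k)$ as the slope from $(L[i-1],V[i-1])$ to $(L[k],V[k])$, and observes that, because the abscissae $L[k]$ are strictly increasing, Sakai's tangent argument carries over verbatim to $\widehat H(g,h)$. The only difference is that you write out that tangent argument explicitly (a minimizing supporting line meets a hull vertex, then the slopes are unimodal along $\widehat K'(g,h)$ with the monotone test $e_t<\sigma_t$), whereas the paper simply cites Sakai's Lemma~2.
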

\begin{proof}
The value $\widehat\delta(i,k)$ is the slope of the line joining
\( \bigl(L[i-1],V[i-1]\bigr)\)
and
\( \bigl(L[k],V[k]\bigr). \)
Because the first coordinates $L[k]$ are strictly increasing, 
the tangent argument in the proof of Sakai's Lemma~2 applies verbatim to $\widehat H(g,h)$, 
and a binary search of its vertices finds a minimizing index.
\end{proof}

The same replacement is made in Sakai's hull-construction algorithm:
the two density comparisons in Lines~8 and~10 of Algorithm~1 use $\widehat\delta$ instead of $\delta$.
The proof of Sakai's Lemma~3 uses only the correctness of these hull tests and the fact that every point is deleted at most once.
It therefore gives an $O(z)$-time construction of the weighted hull forest.

Likewise, the comparisons in Lines~13 and~17--18 of Sakai's Algorithm~2 are performed with $\widehat\delta$.
The path-tracing argument of Sakai's Lemma~4 is otherwise unchanged,
and gives $O(\log^2z)$-time queries for the minimizing weighted prefix density.
Applying the symmetric construction to the reversed weighted array gives the corresponding suffix queries.

\begin{lemma}[{\cite[Theorem~5]{sakai2024data}}] \label{lem:weighted_sakai_split_ds}
The weighted run sequence can be preprocessed in $O(z)$ time and $O(z)$ space such that, 
for every nonempty run interval $[i..j]$, 
the split offset $\widehat\alpha(i,j)$ and a corresponding split position $\widehat\kappa(i,j)$ can be found in $O(\log^2 z)$ time.
\end{lemma}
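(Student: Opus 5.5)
We follow the structure of the proof of Sakai's Theorem~5, replacing every unweighted density by $\widehat\delta$ and checking that each step uses only what the weighted setting provides. There are three steps: reduce the split offset to density minimizations, answer each minimization with \cref{lem:weighted_sakai_tangent}, and bound the preprocessing.

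\textbf{Step 1: characterize the split offset.} The prefix $[i..k]$ of $[i..j]$ has score
\(
\widehat S_\alpha(i,k)=\lambda(i,k)\bigl(\widehat\delta(i,k)-\alpha\bigr).
\)
Since $\lambda(i,k)=L[k]-L[i-1]>0$, this prefix is positive exactly when $\alpha<\widehat\delta(i,k)$. Symmetrically, the suffix $[k..j]$ is positive exactly when $\alpha<\widehat\delta(k,j)$. Hence
\[
\widehat\alpha(i,j)=\min\Bigl(\min_{k\in[i..j]}\widehat\delta(i,k),\ \min_{k\in[i..j]}\widehat\delta(k,j)\Bigr),
\]
and we take $\widehat\kappa(i,j)$ to be an argmin. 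At that offset the corresponding prefix or suffix has score zero, which is exactly the defining property of $\widehat\kappa$. It remains to support the two queries ``minimize $\widehat\delta(i,k)$ over $k\in[i..j]$'' and its mirror image in $O(\log^2 z)$ time.

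\textbf{Step 2: answer the minimization queries.} For the prefix query, we decompose $[i..j]$ into $O(\log z)$ ranges $[g..h]$. On each range we binary search the hull $\widehat H(g,h)$ by \cref{lem:weighted_sakai_tangent}, and we then take the best candidate over all ranges. With the hull forest of Sakai's Algorithm~1 run on $\widehat\delta$, these ranges are the path segments traced by the modified Algorithm~2, so the weighted Lemma~4 already yields the minimizer in $O(\log^2 z)$ time. Running the same construction on the reversed weighted array (run order reversed, weights kept) answers the suffix query. Correctness of the tangent tests needs only strictly increasing first coordinates $L[k]$, which holds because every $\ell_t\ge1$.

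\textbf{Step 3: preprocessing.} The weighted hull forest is built in $O(z)$ time, because each point is inserted and deleted at most once, as in Sakai's Lemma~3. Computing $L$ and $V$ also takes $O(z)$ time, and everything fits in $O(z)$ space.

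The main obstacle is Step 2. Sakai's path-tracing argument implicitly uses the integer spacing of $x$-coordinates, and we must confirm that it relies only on the orientation tests between the points $(L[k],V[k])$. If it does use the unit spacing somewhere, the fallback is to keep a segment tree over runs that stores the lower hull of $\{(L[k],V[k])\}$ at each node. This costs $O(z\log z)$ space, so to retain $O(z)$ space we would instead argue directly that Sakai's forest preserves the hull invariants under an arbitrary monotone reparametrization of the $x$-axis.
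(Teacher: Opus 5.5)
Your proposal follows the paper's proof. Both substitute the points $(L[k],V[k])$ and the weighted density $\widehat\delta$ into Sakai's Algorithms~1 and~2, keep Sakai's Lemmas~3 and~4 on the grounds that they use only slope comparisons on points with strictly increasing first coordinates and the fact that each point is deleted at most once, and obtain suffix queries from the reversed weighted sequence; your Step~1 just spells out the prefix/suffix combination that the paper delegates to Sakai.

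The concern in your final paragraph is settled by your own observation that every test is a comparison of $\widehat\delta$-values. The segment-tree fallback is therefore unnecessary, and it would in any case break the $O(z)$ preprocessing-time bound as well as the space bound.
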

\begin{proof}
By \cref{lem:weighted_sakai_tangent}, the lower-hull search used for prefixes remains valid after replacing every prefix-sum point 
$\bigl(k,S_0(1,k)\bigr)$ by $\bigl(L[k],V[k]\bigr)$ and 
every density $\delta$ by $\widehat\delta$.
With these replacements, Sakai's Algorithms~1 and~2 and the analyses of Lemmas~3 and~4 are unchanged.
The symmetric construction on the reversed weighted sequence supplies the suffix queries.  
Sakai's combination of the prefix and suffix structures therefore yields the stated split query and the same time and space bounds.
\end{proof}

The split representation of \cref{lem:weighted_sakai_split_ds} computes offsets at which the score of one particular prefix or suffix $I$ becomes zero.  
Since
\(
    \widehat S_\alpha(I)
    =
    \widehat S_0(I)-\alpha\lambda(I),
\)
every such split offset has the form
\(
    \alpha
    =
    \frac{\widehat S_0(I)}{\lambda(I)}.
\)

By contrast, a breakpoint (up to border cases) of the final answer catalog emerges when two distinct candidate intervals $I$ and $J$ have equal scores, 
and therefore has the form
\(
    \alpha
    =
    \frac{\widehat S_0(I)-\widehat S_0(J)}
         {\lambda(I)-\lambda(J)}.
\)
A transition to the empty answer is the special case in which one of the two score functions is identically zero.

\subsection{Candidate intervals}

Sakai's candidate tree $\mathcal T$ is a binary tree defined by recursively splitting a run interval at the pair returned by the split representation of \cref{lem:weighted_sakai_split_ds}.
A \emph{candidate interval} is a nonempty run interval $I=[g..h]$ that labels a vertex of $\mathcal T$. 
Let
\(
    \mathcal C
    :=
    \{
        I :
        I\text{ is a nonempty vertex interval of }\mathcal T
    \}.
\)
For a fixed offset $\alpha$, a non-root candidate interval $I$ with parent $P$ is called \emph{active} if
\(
    \widehat\alpha(P)
    \le
    \alpha
    <
    \widehat\alpha(I).
\)
The root is active when
\(
    \alpha<\widehat\alpha([1..z]).
\)
By the following claim, some active candidate interval is a maximum-score interval:

\begin{lemma}[{\cite[Lemma~6]{sakai2024data}}] \label{lem:weighted_sakai_candidates}
For every offset $\alpha$ with a nonempty answer interval, some active vertex of $\mathcal T$ is a maximum-score interval.
\end{lemma}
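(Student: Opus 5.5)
We would prove \cref{lem:weighted_sakai_candidates} by descending the candidate tree $\mathcal T$ from the root and repeatedly applying \cref{lem:weighted_sakai_split}. Fix an offset $\alpha$ for which the maximum score of a nonempty run interval of $[1..z]$ is positive. We maintain an invariant: the current vertex $I$ of $\mathcal T$ contains a maximum-score interval of $[1..z]$. In particular, its own maximum score equals the global maximum, which is positive. Moreover, $\alpha\ge\widehat\alpha(P)$ holds for the parent $P$ of $I$; for the root this condition is vacuous. Initially $I=[1..z]$, so the invariant holds trivially.

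At a vertex $I=[g..h]$, we distinguish two cases.

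\textbf{Case $\alpha<\widehat\alpha(I)$.} Together with the invariant $\widehat\alpha(P)\le\alpha$, or the root convention, this makes $I$ active. By the definition of $\widehat\alpha(I)$ as the smallest offset at which $I$ stops being boundary-positive, $I$ is boundary-positive at $\alpha$. By the first part of \cref{lem:weighted_sakai_split}, $I$ is then its own unique maximum-score interval. Since $I$ contains a global maximum-score interval, the score of $I$ equals the global maximum. Hence $I$ is an active maximum-score interval, and we stop.

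\textbf{Case $\alpha\ge\widehat\alpha(I)$.} Then $I$ is not boundary-positive at $\alpha$, and its maximum score is positive. The second part of \cref{lem:weighted_sakai_split} shows that a maximum-score interval of $[g..\widehat\kappa(I)-1]$ or of $[\widehat\kappa(I)+1..h]$ is a maximum-score interval of $I$, and therefore of $[1..z]$. These two ranges are exactly the children of $I$ in $\mathcal T$. We descend into a child that contains such an interval; that child is nonempty, since it contains a nonempty interval of positive score. The invariant holds for the child, because $\alpha\ge\widehat\alpha(I)$ with $I$ now the parent.

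Each step strictly shrinks the interval, so the descent terminates. It cannot reach a single run $[g..g]$ without stopping: if $\alpha\ge\widehat\alpha([g..g])$, the unique nonempty interval has score $\ell_g(x_g-\alpha)\le 0$, contradicting positivity. Thus the descent halts in the first case at an active vertex.

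The main obstacle is making the case analysis exact at the boundary. We need to confirm that $\widehat\alpha(I)$ is a genuine minimum, so that $\alpha<\widehat\alpha(I)$ implies boundary-positivity and equality falls into the split case. For this we use continuity and monotonicity: each prefix or suffix score $\widehat S_0(J)-\alpha\lambda(J)$ is strictly decreasing in $\alpha$, so the set of offsets at which $I$ is boundary-positive is an open ray $(-\infty,\widehat\alpha(I))$. We also need the children produced by $\widehat\kappa$ to coincide with the two ranges in \cref{lem:weighted_sakai_split}, which holds by the definition of $\mathcal T$. Empty children are simply never chosen, since we only descend into a side that contains a positive interval.
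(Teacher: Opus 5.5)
Your proof is correct and takes the same route as the paper: you descend through $\mathcal T$, apply \cref{lem:weighted_sakai_split} at each vertex, and stop at the first boundary-positive vertex, which is then active and a maximum-score interval. Where the paper defers to Sakai's induction, you spell out the details: the invariant, the fact that $\widehat\alpha(I)$ is attained (so $\alpha=\widehat\alpha(I)$ falls into the split case), nonemptiness of the chosen child, and termination.
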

\begin{proof}
Use \cref{lem:weighted_sakai_split} at every internal vertex of $\mathcal T$.  
If the interval represented by that vertex is boundary-positive, 
it is itself the unique maximum-score interval.
Otherwise, the lemma guarantees that one of the two recursively considered sides contains a maximum-score interval.  
Following such a side inductively reaches an active vertex representing a maximum-score interval.  
This is the induction used in Sakai's proof of Lemma~6.
\end{proof}

The enumeration in Lines~2--13 of Sakai's Algorithm~3 is unchanged, 
except that it calls the data structure of \cref{lem:weighted_sakai_split_ds}.
Sakai's charging argument associates every internal split generated
during this enumeration with a distinct array index, 
so only $O(z)$ candidate intervals are generated.
We restate Sakai's Algorithm~3 by \cref{alg:weighted-sakai-catalog} for completeness.

\subsection{Grouping candidates}

Sakai's Lemma~7 shows that, among candidate intervals of the same length, 
it suffices to retain one having maximum zero-offset score.
For our weighted problem, the coefficient of the offset is the total expanded length $\lambda(I)$ rather than the length of $I$.
We therefore use the following weighted counterpart.

Let
\(
    \mathcal M
    :=
    \{\lambda(I):I\in\mathcal C\}
\)
be the set of their distinct expanded lengths.
For every $\mu\in\mathcal M$, choose a candidate interval $I_\mu$ such
that
\(
    \lambda(I_\mu)=\mu
\)
and $\widehat S_0(I_\mu)$ is maximum among all candidate intervals of expanded length $\mu$.

\begin{lemma}[{\cite[Lemma~7]{sakai2024data}}] \label{lem:weighted_sakai_grouping}
For every offset $\alpha$ and every candidate interval $J\in\mathcal C$, 
there is an $I_\mu$ with $\widehat S_\alpha(J) \le \widehat S_\alpha(I_\mu)$. 
Consequently, whenever the WPOMSS answer is nonempty, some maximum-score candidate belongs to 
$\{I_\mu : \mu\in M\}$.
\end{lemma}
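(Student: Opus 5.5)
The first claim follows almost directly from the definition of $I_\mu$ and the affine form of the score. Fix an offset $\alpha$ and a candidate $J\in\mathcal C$, and set $\mu:=\lambda(J)\in\mathcal M$. Since
\(
    \widehat S_\alpha(I)=\widehat S_0(I)-\alpha\lambda(I)
\)
for every run interval $I$, the scores of $J$ and $I_\mu$ are two affine functions of $\alpha$ with the same slope $-\mu$. They therefore differ by the constant $\widehat S_0(I_\mu)-\widehat S_0(J)$. By the choice of $I_\mu$, this constant is nonnegative, so
\(
    \widehat S_\alpha(J)=\widehat S_0(J)-\alpha\mu\le \widehat S_0(I_\mu)-\alpha\mu=\widehat S_\alpha(I_\mu)
\)
holds for every $\alpha$. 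No case distinction on the sign of $\alpha$ is needed, because the $\alpha$-terms cancel.

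For the second claim, assume that the WPOMSS answer at $\alpha$ is nonempty, meaning the maximum score is positive. By \cref{lem:weighted_sakai_candidates}, some active vertex $J$ of $\mathcal T$ is a maximum-score interval of $[1..z]$. Its vertex interval is nonempty, so $J\in\mathcal C$. Applying the first part with $\mu:=\lambda(J)$ gives $\widehat S_\alpha(I_\mu)\ge\widehat S_\alpha(J)$. Since $I_\mu$ is a run interval inside $[1..z]$, its score cannot exceed the maximum, so equality holds and $I_\mu$ is also a maximum-score interval.

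I will add one remark on the shortest-answer rule. $I_\mu$ has the same expanded length $\mu$ as $J$. So if $J$ can be chosen as a shortest maximum-score candidate, then $I_\mu$ is one as well. This is the form needed later, where the catalog is built from the upper envelope of the lines $\alpha\mapsto\widehat S_0(I_\mu)-\alpha\mu$.

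The only point that needs care is that \cref{lem:weighted_sakai_candidates} produces a vertex of $\mathcal T$ labelled by a nonempty interval, so that it actually lies in $\mathcal C$. Empty sides created by splitting at $\widehat\kappa$ are excluded by hypothesis, because the answer is nonempty and an empty interval cannot have positive maximum score. Beyond this, the argument uses nothing from the weighted setting other than that the offset coefficient is $\lambda(I)$. This is exactly why the grouping is done by expanded length rather than by the number of runs.
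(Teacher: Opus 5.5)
Your proposal is correct and follows the paper's proof. The first claim uses the same cancellation of the common $-\alpha\mu$ term. For the second claim, you make explicit the appeal to \cref{lem:weighted_sakai_candidates} that the paper compresses into the single sentence ``replacing $J$ by $I_\mu$ never decreases the score.''
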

\begin{proof}
Let $J\in\mathcal C$ and set $\mu:=\lambda(J)$.
By the choice of $I_\mu$,
\(
    \widehat S_0(J)
    \le
    \widehat S_0(I_\mu).
\)
Since $J$ and $I_\mu$ have the same expanded length, their score
functions have the same coefficient of $\alpha$.  Therefore, for every
offset $\alpha$,
\(
    \widehat S_\alpha(J)
    =
    \widehat S_0(J)-\alpha\mu
    \le
    \widehat S_0(I_\mu)-\alpha\mu
    =
    \widehat S_\alpha(I_\mu).
\)
Thus, replacing $J$ by $I_\mu$ never decreases the score, which proves both claims.
\end{proof}

Sakai stores one representative for each possible interval length $|I| \in[1..z]$ in an array indexed by $|I|$.
In our weighted setting, candidates are instead grouped by their expanded length
\( \lambda(I)=\sum_{i=g}^{h}\ell_i, \)
which can be as large as $n$.
Since only $O(z)$ candidate intervals are generated, 
we store representatives only for the expanded lengths that occur, 
using a balanced dictionary keyed by $\lambda(I)$.

\subsection{Final answer catalog}

Let
\(
    \mathcal P
    :=
    \{(0,0)\}
    \cup
    \left\{
        \bigl(\mu,\widehat S_0(I_\mu)\bigr):
        \mu\in\mathcal M
    \right\}.
\)

\begin{lemma}[{\cite[Theorem~8]{sakai2024data}}]
\label{lem:weighted_sakai_catalog}
The upper convex hull of $\mathcal P$, together with the candidate
interval associated with each nonzero hull vertex, represents the
WPOMSS answer catalog.
\end{lemma}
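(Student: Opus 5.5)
The plan is to reduce the WPOMSS answer at any offset to a maximization over the finite point set $\mathcal P$, and then apply point-line duality exactly as in the threshold-offset construction of \cref{sec:suffix}.

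First, fix an offset $\alpha$. If the maximum weighted score over all nonempty run intervals is positive, \cref{lem:weighted_sakai_candidates} provides a maximum-score interval among the candidates $\mathcal C$. \Cref{lem:weighted_sakai_grouping} then lets us replace it by some representative $I_\mu$ with the same optimal score. Since $\widehat S_\alpha(I_\mu)=\widehat S_0(I_\mu)-\alpha\mu$, the optimum value equals $\max_{(x,y)\in\mathcal P}(y-\alpha x)$. The extra point $(0,0)$ encodes the empty answer, which has score $0$. If no positive interval exists, every point of $\mathcal P$ scores at most $0$. The maximum is then attained at $(0,0)$, which correctly signals the empty answer. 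Conversely, every point of $\mathcal P$ corresponds to an actual interval or to the empty interval. Hence the maximum over $\mathcal P$ is exactly the WPOMSS value for every $\alpha$.

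Second, I would invoke the standard fact that the maximizer of $y-\alpha x$ over a finite set is always attained at a vertex of its upper convex hull. The dual lines $w\mapsto y-wx$ of the hull vertices form the upper envelope. Its breakpoints, sorted in decreasing order of the first coordinate as $\alpha$ grows, give the catalog pairs $(\beta_j,J_j)$. Each breakpoint has the form $(\widehat S_0(I)-\widehat S_0(J))/(\lambda(I)-\lambda(J))$ noted earlier, and transitions to the empty answer occur at the vertex $(0,0)$.

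The step I expect to be most delicate is the tie-breaking. Among equal scores, the catalog must select the interval of shortest expanded length, including at breakpoints, under the right-active convention. I would argue as in the discussion after \cref{lemma:sakaiDS}, with $\lambda$ in place of $|I|$. At a breakpoint $\beta$, the vertex active immediately to the right has smaller first coordinate. It is therefore the shortest among all optimal points of $\mathcal P$. A shorter optimal interval outside $\mathcal P$ would have to be a candidate whose expanded length is already represented, or it would dominate to the right of $\beta$. Dominating to the right contradicts the envelope, so no such interval exists.

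I also need to handle collinear points on a hull edge. Keeping only the genuine vertices suffices, because interior points of an edge are never the unique shortest optimum.

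Finally, for the complexity claimed in \cref{thmWOPMSS}, I would note the following. The set $\mathcal P$ has $O(z)$ points. Sorting them by $\mu$ and building the hull takes $O(z\log z)$ time. A binary search over the breakpoints then answers each query in $O(\log z)$ time.
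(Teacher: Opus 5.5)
Your proposal is correct and follows the paper's route. The candidate and grouping lemmas reduce the WPOMSS value to maximizing $y-\alpha x$ over $\mathcal P$, with $(0,0)$ encoding the empty answer. The maximizers lie on the upper hull, the breakpoints come from consecutive vertices, and the right-active convention yields a shortest optimum.

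Your converse inclusion (every point of $\mathcal P$ is a real interval or the empty one) and your dominance argument against arbitrary shorter optima are more careful than the paper's one-line justification. The dominance argument alone covers every interval, candidate or not, so your case split about candidates whose expanded length is already represented is unnecessary.
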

\begin{proof}
For every $\mu\in\mathcal M$ and every offset $\alpha$,
\(
    \widehat S_\alpha(I_\mu)
    =
    \widehat S_0(I_\mu)-\alpha\mu,
\)
which is the value of the linear function
\(
    (x,y) \mapsto y-\alpha x
\)
at the point $\bigl(\mu,\widehat S_0(I_\mu)\bigr)$.
By \cref{lem:weighted_sakai_candidates,lem:weighted_sakai_grouping},
maximizing this function over $\mathcal P$ therefore returns a maximum-score candidate interval, 
or $(0,0)$ when no nonempty candidate has positive score.
The maximizers of a linear function over a finite point set lie on its upper convex hull.

Let $(\mu_1,s_1)$ and $(\mu_2,s_2)$ be consecutive hull vertices with $\mu_1\ne\mu_2$.
Their scores agree precisely when
\(
    s_1-\alpha\mu_1=s_2-\alpha\mu_2.
\)
This happens for $\alpha = \frac{s_2-s_1}{\mu_2-\mu_1}$.

These intersection offsets, sorted increasingly, are the breakpoints of the answer catalog.  
At a breakpoint, assign the breakpoint to the interval active immediately to its right.  
Since increasing $\alpha$ penalizes larger expanded lengths more strongly, 
this interval has the smaller expanded length and is therefore a shortest optimum at the breakpoint.  
If several intervals define the same score function, 
they have the same expanded length and zero-offset score, 
and any one of them may be retained.

The point $(0,0)$ represents the empty interval and its score is identically zero.  
Hence, it is selected exactly when no nonempty candidate has positive score.  
This gives precisely the catalog constructed from the upper hull in Lines~14--22 of Sakai's Algorithm~3.
\end{proof}

We can finally prove the main result of this section.

\thmWOPMSS*
\begin{proof}
There are $O(z)$ candidate intervals.
By \cref{lem:weighted_sakai_split_ds}, their enumeration performs $O(z)$ weighted split queries, 
each taking $O(\log^2 z)$ time.
The dictionary and sorting require $O( z\log z)$ time, 
and the final upper hull is constructed in $O(z)$ time.
Thus, the total construction time is
\( O(z \log^2 z), \)
and the space consumption is $O(z)$.
By \cref{lem:weighted_sakai_catalog}, the resulting answer catalog has $O(z)$ entries, 
so a predecessor search in the catalog answers a WPOMSS query in $O(\log z)$ time.
\end{proof}

\begin{algorithm}[t]
\caption{Construction of the WPOMSS answer catalog \cite[Algorithm~3]{sakai2024data}}
\label{alg:weighted-sakai-catalog}
\begin{algorithmic}[1]
\Require Weighted run sequence
\(
    (\ell_1,x_1),\ldots,(\ell_z,x_z)
\)
\Ensure WPOMSS answer catalog

\State Construct the weighted split data structure of
       \cref{lem:weighted_sakai_split_ds}

\State $\mathcal C\gets$ the candidate intervals obtained by traversing $\mathcal T$ as in \cref{lem:weighted_sakai_candidates}

\State $\mathsf{rep}\gets$ an empty balanced dictionary, \textsf{keys}: expanded lengths, \textsf{values}: candidate intervals

\ForAll{$I\in\mathcal C$}
    \State $\mu\gets\lambda(I)$
    \If{$\mu\notin \mathsf{rep}.\mathsf{keys}$ \textbf{or}
        $\widehat S_0(I)>
         \widehat S_0(\mathsf{rep}[\mu])$}
        \State $\mathsf{rep}[\mu]\gets I$
    \EndIf
\EndFor

\State
\(
    \mathcal P
    \gets
    \{((0,0),\emptyset)\}
    \cup
    \left\{
        \left(
            \bigl(\mu,
                  \widehat S_0(\mathsf{rep}[\mu])\bigr),
            \mathsf{rep}[\mu]
        \right)
        :
        \mu\in\operatorname{dom}(\mathsf{rep})
    \right\}
\)
\State $\mathcal H\gets\Call{UpperHull}{\mathcal P}$
       \Comment{Hull vertices retain their interval labels}

\State \Return $\Call{RightActiveCatalog}{\mathcal H}$
\end{algorithmic}
\end{algorithm}

\end{document}